\definecolor{lilla}{HTML}{750787}
\newcommand{\myIf}[2]{\textbf{if}~#1~\textbf{then}~#2\;}
\renewcommand*{\backref}[1]{}
\renewcommand*{\backrefalt}[4]{\ifcase #1\or [p.~#2.]\else [pp.~#2.]\fi }
\newcommandx{\set}[2][1=1]{\ensuremath{\{#1,\ldots,#2\}}}
\newcommandx{\tlog}[3][1=,3=]{\log_{#1}^{#3}(#2)}
\newcommandx{\ith}[2][1=th]{#2\nobreakdash-#1}
\newtheorem{theorem}{Theorem}
\newtheorem{lemma}[theorem]{Lemma}
\newtheorem{proposition}[theorem]{Proposition}
\newtheorem{observation}[theorem]{Observation}
\newtheorem{corollary}[theorem]{Corollary}
\theoremstyle{definition}
\newtheorem{problem}{Problem}
\newtheorem{rrule}{Reduction Rule}
\newtheorem{construction}{Construction}
\newcommand{\cqed}{\hfill$\diamond$}
\crefname{observation}{Observation}{Observations}
\crefname{rrule}{Reduction Rule}{Reduction Rules}
\crefname{construction}{Construction}{Constructions}
\crefname{theorem}{Theorem}{Theorems}
\Crefname{theorem}{Thm.}{Thms.}
\crefname{corollary}{Corollary}{Corollaries}
\crefname{lemma}{Lemma}{Lemmata}
\Crefname{corollary}{Cor.}{Cors.}
\crefname{proposition}{Proposition}{Propositions}
\Crefname{proposition}{Prop.}{Props.}
\crefname{algorithm}{Algorithm}{Algorithms}
\newcommand{\calT}{\mathcal{T}}
\newcommand{\calX}{\mathcal{X}}
\newcommand{\yes}{\texttt{yes}}
\newcommand{\no}{\texttt{no}}
\newcommand{\RD}{$(\Rightarrow)\:$}
\newcommand{\LD}{$(\Leftarrow)\:$}
\newcommand{\sump}{{\Sigma}}
\newcommand{\maxp}{{\infty}}
\newcommand{\undp}{{U+\tau}}
\DeclareMathOperator{\tw}{tw}
\DeclareMathOperator{\fes}{fes}
\DeclareMathOperator{\bw}{bw}
\DeclareMathOperator{\dcc}{dcc}
\DeclareMathOperator{\ncc}{ncc}
\DeclareMathOperator{\is}{is}
\DeclareMathOperator{\dom}{dom}
\DeclareMathOperator{\dcl}{dcl}
\DeclareMathOperator{\dco}{dco}
\DeclareMathOperator{\cdi}{cdi}
\DeclareMathOperator{\vc}{vc}
\DeclareMathOperator{\fvs}{fvs}
\DeclareMathOperator{\clw}{clw}
\DeclareMathOperator{\dgn}{dgn}
\DeclareMathOperator{\dbi}{dbi}
\DeclareMathOperator{\Part}{Part}
\crefname{problem}{Problem}{Problems}
\Crefname{problem}{Prob.}{Probs.}
\newcommandx{\decprob}[6][3=Input,5=Question]{\begin{samepage}
  \begingroup
\begin{problem}\label{prob:#2}{\setlength{\fboxsep}{1pt}\colorbox{gray!12!white}{\textsc{#1}}}
  \nopagebreak[4]\end{problem}\nopagebreak[4]\vspace{-0.6em}
  \par\noindent\hangindent=\parindent\textbf{#3}:  #4\nopagebreak[4]
  \par\noindent\hangindent=\parindent\textbf{#5}:  #6
  \par\medskip
  \endgroup
  \end{samepage}
}
\newcommand{\N}{\mathbb{N}}
\newcommand{\Nzero}{\N_0}
\newcommand{\bigO}{\mathcal{O}}
\renewcommand{\O}{\bigO}
\newcommand{\I}{\mathcal{I}}
\newcommand{\prob}[1]{{\normalfont\textsc{#1}}}
\newcommand{\mscTsc}{\prob{Multistage 2-Coloring}}
\newcommand{\mscAcr}{\prob{MS2C}}
\newcommand{\msceTsc}{\prob{Multistage 2-Coloring Extension}}
\newcommand{\msceAcr}{\prob{MS2CE}}
\newcommand{\mscgbTsc}{\prob{Multistage 2-Coloring on a Global Budget}}
\newcommand{\mscgbAcr}{\prob{MS2CGB}}
\newcommand{\TG}{\mathcal{G}}
\newcommand{\cocl}[1]{\ensuremath{\operatorname{#1}}}
\newcommand{\W}[1]{\cocl{W[#1]}}
\newcommand{\NP}{\cocl{NP}}
\newcommand{\FPT}{\cocl{FPT}}
\newcommand{\fpt}{fixed-parameter tractable}
\newcommand{\coNP}{\cocl{coNP}}
\newcommand{\XP}{\cocl{XP}}
\newcommand{\poly}{\cocl{poly}}
\newcommand{\NPincoNPslashpoly}{\ensuremath{\NP\subseteq \coNP/\poly}}
\newcommand{\ETHbreaks}{the ETH fails}
\newcommand{\calC}{\mathcal{C}}
\newcommand{\calR}{\mathcal{R}}
\newcommand{\ceq}{\ensuremath{\coloneqq}}
\DeclarePairedDelimiter{\abs}{\lvert}{\rvert}
\newcommand{\wilog}{without loss of generality}
\newcommand{\Wilog}{Without loss of generality}
\newcommand{\tikzpreamble}{\def\teps{0.075}
  \def\nsc{0.33}
  \def\tanS{1.7320508}
  \def\bsc{0.075}
\tikzstyle{xnode}=[circle,scale=\nsc,draw,fill=white];
  \tikzstyle{xnodeA}=[circle,scale=\nsc,fill=white,draw=red];
  \tikzstyle{xnodeB}=[circle,scale=\nsc,fill=lightgray,draw=green];
  \tikzstyle{xnodeC}=[circle,scale=\nsc,fill=black,draw=blue];
  \tikzstyle{xnodex}=[circle,fill,scale=\nsc,draw];
  \tikzstyle{xnodey}=[diamond,fill,scale=\nsc,draw];
  \tikzstyle{xstarL}=[star,star points=6,draw,scale=0.4];
  \tikzstyle{xstar}=[star,star points=8,draw,scale=0.4];
  \tikzstyle{xstarB}=[star,star points=10,draw,scale=0.4];
\tikzstyle{xedge}=[thick,-];
  \tikzstyle{xedgex}=[thick,-,dashed];
  \tikzstyle{xedgedot}=[thick,-,dotted];
  
  \tikzstyle{xpath}=[color=blue,opacity=0.25,line cap=round,line width=6pt];
  \tikzstyle{xpathS}=[color=blue!40!white,opacity=0.3,line cap=round,line join=round,line width=5pt];
  \tikzstyle{xpathSx}=[color=green!40!black,opacity=0.3,line cap=round,line join=round,line width=5pt];
  \tikzstyle{xpathx}=[color=magenta,opacity=0.40,line cap=round,line width=6pt];
  \tikzstyle{xpathy}=[color=green,opacity=0.40,line cap=round,line width=6pt];
  
  \tikzstyle{xhili}=[circle,scale=1.25,opacity=0.25,fill,color=orange,draw];
  \tikzstyle{xhiliS}=[circle,scale=0.625,opacity=0.25,fill,color=orange,draw];
  \tikzstyle{xhiliIS}=[circle,scale=1.25,opacity=0.25,fill,color=magenta,draw];
  \tikzstyle{xxhiliS}=[rectangle,scale=0.85,opacity=0.25,fill,color=cyan,draw];
}
\newcommand{\Grid}[9]{
  \foreach\x in {0,...,#2}{
    \foreach \y in {0,...,#3}{
      \node (#1\x\y) at (#8*\x*\xr+#4*\xr,#9*\y*\yr+#5*\yr)[xnode,#6]{};
    }
  }
\ifnum#3>0
    \pgfmathsetmacro\yx{int(#3 - 1)}
    \foreach \x in {0,...,#2}
      \foreach \y [count=\yi] in {0,...,\yx}  
        \draw[#7] (#1\x\y)--(#1\x\yi) ;
  \fi
  \ifnum#2>0
  \pgfmathsetmacro\yx{int(#2 - 1)}
  \foreach \x in {0,...,#3}
    \foreach \y [count=\yi] in {0,...,\yx}  
      \draw[#7] (#1\y\x)--(#1\yi\x) ;
  \fi
}
\newcommand{\thetitle}{Bipartite Temporal Graphs and the Parameterized Complexity of Multistage 2-Coloring}
\date{}
\title{\thetitle}
\author{Till Fluschnik\thanks{Supported by the DFG, project MATE (NI 369/19).} }
\author{Pascal Kunz\thanks{Supported by the DFG Research Training Group 2434 ``Facets of Complexity''.}}
\affil{Technische Universit\"at Berlin, Algorithmics and Computational Complexity, Berlin, Germany \\ \symbol{123}till.fluschnik, p.kunz.1\symbol{125}@tu-berlin.de}
\begin{document}
\maketitle

\begin{abstract}
	We consider the algorithmic complexity of recognizing bipartite temporal graphs.
	Rather than defining these graphs solely by their underlying graph or individual layers, we define a bipartite temporal graph as one in which every layer can be $2$-colored in a way that results in few changes between any two consecutive layers.
	This approach follows the framework of multistage problems that has received a growing amount of attention in recent years.
	We investigate the complexity of recognizing these graphs.
	We show that this problem is NP-hard even if there are only two layers or if only one change is allowed between consecutive layers.
	We consider the parameterized complexity of the problem with respect to several structural graph parameters, which we transfer from the static to the temporal setting
	in three different ways.
	Finally, we consider a version of the problem in which
we only restrict the total number of changes throughout the lifetime of the graph.
	We show that this variant is fixed-parameter tractable with respect to the number of changes.
\end{abstract}

\section{Introduction}
\label{sec:intro}

Bipartite graphs form a well-studied class of static graphs.
A graph~$G=(V,E)$ is bipartite if it admits a proper 2-coloring.
A function~$f\colon V\to\{1,2\}$ is a \emph{proper 2-coloring} of~$G$
if for all edges~$\{v,w\}\in E$ it holds that~$f(v)\neq f(w)$.
In this work,
we study the question of what a bipartite \emph{temporal} graph is
and how fast we can determine whether a temporal graph is bipartite.
We approach this question through the prism of the novel program of multitstage problems.
Thus,
we consider the following decision problem:

\decprob{\mscTsc{} (\mscAcr{})}{msc}
{A temporal graph~$\TG=(V,(E_t)_{t=1}^\tau)$ and an integer~$d\in\Nzero$.}
{Are there~$f_1,\dots,f_\tau\colon V\to \{1,2\}$ such that~$f_t$ is a proper 2-coloring for~$(V,E_t)$ for every~$t\in\set{\tau}$
and $|\{v\in V\mid f_{t}(v)\neq f_{t+1}(v)\}|\leq d$ for every~$t\in\set{\tau-1}$?}

\noindent
In other words, 
$(\TG,d)$ is a \yes-instance
if~$\TG$ admits a proper $2$-coloring of each layer where only~$d$~vertices change colors between any two consecutive layers.

There have been various approaches to transferring graph classes from static to temporal graphs.
If $\calC$ is a class of static graphs, then the two most obvious ways of defining a temporal analog to $\calC$ are \begin{inparaenum}[(i)]
	\item including all temporal graphs whose underlying graph is in~$\calC$ or
	\item including all temporal graphs that have all of their layers in~$\calC$
\end{inparaenum}
(see, for instance, \cite{Fluschnik2020b}).
Most applied research that has employed a notion of bipartiteness in temporal graphs~\cite{Aleidi2020,Latapy2019,Wu2014} has defined it using the underlying graph, seeking to model relationships between two different types of entities.
This is certainly appropriate as long as the type of an entity is not itself time-varying.
Situations where entities can change their types require more sophisticated notions of bipartiteness.
With \mscAcr{}, 
we model situations where we expect few entities to change their type between any two consecutive time steps.
Later, 
in \cref{sec:global}, 
we will consider a model for settings where we expect few changes overall.

The issue with both of the aforementioned classical approaches to defining temporal graph classes is that they do not take the time component into account when deciding membership in a class.
For example, if the order of the layers is permuted arbitrarily, then this has no effect on membership in $\calC$ in either approach.
Defining bipartiteness in the manner we propose does take the temporal order of the layers into consideration.
It also leads to a hierarchy of temporal graph classes that are inclusion-wise between the two classes defined in the two aforementioned more traditional approaches:
It is easy to see that~$(\TG,0)$ is a \yes-instance for \mscAcr{} if and only if the underlying graph of $\TG$ is bipartite.
Conversely, if any layer of $\TG$ is not bipartite, then~$(\TG,d)$ is a \no-instance no matter the value of $d$.
The two main drawbacks to defining temporal bipartiteness in this way are that
\begin{inparaenum}[(i)]
	\item there is not one class of bipartite temporal graphs, but an infinite hierarchy depending on the value of $d$ and
	\item as we will show, testing for bipartiteness in this sense is computationally much harder, but we will attempt to partially remedy this by analyzing the problem's parameterized complexity for a variety of parameters.
\end{inparaenum}

\subparagraph{Related work.}
The multistage framework is still young, 
but several problems have been investigated in it, 
mostly in the last couple of years, 
including \prob{Matching}~\cite{Bampis2018,Chimani2021,Gupta2014}, 
\prob{Knapsack}~\cite{Bampis2019}, 
\prob{$s$\nobreakdash-$t$~Path}~\cite{Fluschnik2020a}, 
\prob{Vertex Cover}~\cite{Fluschnik2019}, 
\prob{Committee Election}~\cite{Bredereck2020}, 
and others~\cite{Bampis2020}.
The framework has also been extended to goals other than minimizing the number of changes in the solution between layers~\cite{Heeger2021,Kellerhals2021}.
Since these types of problems are NP-hard even in fairly restricted settings, most research has focused on their parameterized complexity and approximability.
+\mscAcr{} is most closely related to \prob{Multistage 2-SAT}~\cite{Fluschnik2021} (see~\cref{sec:prelim}).

\subparagraph{Our contributions.}
\begin{figure}[t]\centering
\begin{tikzpicture}

  \usetikzlibrary{patterns}
  
  \def\xr{1.4}
  \def\yr{1.4}
  \def\zr{1.75}
  \def\xsh{4.5}
  \def\xsc{1.5}
  \def\ysc{0.95}
  
  \tikzpreamble{}

  \colorlet{MyColorOne}{blue!50}
  \colorlet{colFPT}{green!33}
  \colorlet{colPK}{green!50}
  \colorlet{colXP}{cyan!33}
  \colorlet{colWh}{orange!33}
  \colorlet{colpNPh}{red!33}
  \colorlet{colOpen}{lightgray!33}
  
  \newcommand{\lightercolor}[3]{\colorlet{#3}{#1!#2!white}
    }
    
    \newcommand{\darkercolor}[3]{\colorlet{#3}{#1!#2!black}
    }
    
    \newcommand{\tcite}[1]{\\[-4pt]{\tiny(\Cref{#1})}}

  \newcommandx{\nodebox}[9][5=,7=,9=]{
    \pgfmathsetmacro{\cubex}{1.25*\xr}
    \pgfmathsetmacro{\cubey}{0.5*\yr}
    \pgfmathsetmacro{\cubez}{1.33*\zr}
    \pgfmathsetmacro{\cubeyz}{\cubey+0.38*\cubez}
    \pgfmathsetmacro{\cubexz}{\cubex+0.38*\cubez}
    
    \begin{scope}[shift={(#1)}]
\node (#2) at (-\cubex,-\cubey)[anchor=south west,minimum width=1*\cubexz cm,minimum height=1*\cubeyz cm,draw=none]{};
    \draw[black,very thin,preaction={fill=#8}] (0,0,0) -- ++(-\cubex,0,0) -- ++(0,-\cubey,0) -- ++(\cubex,0,0) -- cycle;
    \draw[black,very thin,preaction={fill=#4!50}] (0,0,0) -- ++(0,0,-\cubez) -- ++(0,-\cubey,0) -- ++(0,0,\cubez) -- cycle;
    \draw[black,very thin,preaction={fill=#6!75}] (0,0,0) -- ++(-\cubex,0,0) -- ++(0,0,-\cubez) -- ++(\cubex,0,0) -- cycle;
    \node at (-\cubex/2,-\cubey/2,0)[font=\small,align=center]{$#3_{\undp}$#9};
    \node at (-\cubex/2,0,-\cubez/2)[font=\small,align=center]{$#3_{\sump}$#7};
    \node at (0,-\cubey/2,-\cubez/2)[rotate=45,font=\footnotesize,align=center]{$#3_{\maxp}$#5};
    \end{scope}
  }
  
  \begin{scope} \nodebox{-2*\xsc,4*\ysc}{dcc}{\mathrm{dcc}}{colpNPh}[$^\dagger$]{colFPT}[\tcite{thm:fpt-dcc-sum}]{colOpen}
    \nodebox{-2*\xsc,0*\ysc}{compdiam}{\mathrm{cdi}}{colpNPh}{colpNPh}{colpNPh}
    \nodebox{4*\xsc,2*\ysc}{delta}{\Delta}{colpNPh}{colpNPh}{colpNPh}
    \nodebox{2*\xsc,0*\ysc}{deg}{\mathrm{dgn}}{colpNPh}{colpNPh}{colpNPh}
    \nodebox{0*\xsc,6*\ysc}{vc}{\mathrm{vc}}{colpNPh}[$^\dagger$]{colFPT}{colFPT}
    \nodebox{-4*\xsc,2*\ysc}{dom}{\mathrm{dom}}{colFPT}{colFPT}{colpNPh}[\tcite{prop:pnphard-dom}]
    \nodebox{-2*\xsc,6*\ysc}{dclique}{\mathrm{dcl}}{colFPT}[$^\ddagger$]{colFPT}{colFPT}
    \nodebox{-4*\xsc,4*\ysc}{is}{\mathrm{is}}{colFPT}{colFPT}{colFPT}[\tcite{prop:fpt-is}]
    \nodebox{0*\xsc,2*\ysc}{tw}{\mathrm{tw}}{colpNPh}{colpNPh}{colWh}[\tcite{prop:xp-tw}]
    \nodebox{-2*\xsc,2*\ysc}{dco}{\mathrm{dco}}{colpNPh}{colpNPh}[\tcite{prop:nphard-sump}]{colpNPh}
    \nodebox{2*\xsc,6*\ysc}{fes}{\mathrm{fes}}{colpNPh}[$^\dagger$]{colpNPh}[\tcite{prop:nphard-sump}]{colWh}[\tcite{prop:whardfesUtau}]
    \nodebox{0*\xsc,4*\ysc}{fvs}{\mathrm{fvs}}{colpNPh}{colpNPh}{colWh}
\nodebox{2*\xsc,2*\ysc}{dbi}{\mathrm{dbi}}{colpNPh}{colpNPh}{colpNPh}
    \nodebox{0*\xsc,0*\ysc}{clw}{\mathrm{clw}}{colpNPh}{colpNPh}{colpNPh}
    \nodebox{4*\xsc,6*\ysc}{bw}{\mathrm{bw}}{colpNPh}[$^\dagger$]{colpNPh}[\tcite{prop:nphard-sump-bw}]{colXP}
    \nodebox{-4*\xsc,0*\ysc}{ncc}{\mathrm{ncc}}{colFPT}[\tcite{prop:fpt-cc}]{colFPT}{colpNPh}
    
    \foreach\a\b in {deg/delta,compdiam/dco,dco/dcc,compdiam/dom,dom/is,is/dclique,dcc/dclique,
    deg/tw,
    tw/fvs,dcc/vc,fvs/vc,
    clw/dco,clw/tw,fvs/fes,
    delta/bw,tw/bw,
    ncc/dom,dbi/fvs}{\draw[->,thick,>=latex] (\a) to (\b);}
\end{scope}
  
\end{tikzpicture}
	\caption{Overview of selected structural parameters and our results (green: in \FPT;
	 orange: \XP{} and \W{1}-hard
	 red: para-\NP-hard;
	 blue: \XP{} and open whether \FPT{} or \W{1}-hard;
	 gray: open).\quad
		[$\Delta$: maximum degree;
		bw: bandwidth;
		cdi: diameter of connected component;
		clw: clique-width;
		dbi: distance to bipartite;
		dcc: distance to co-cluster;
		dcl: distance to clique;
		dco: distance to cograph;
		dgn: degeneracy;
		dom: domination number;
		fes: feedback edge number;
		fvs: feedback vertex number;
		is: independence number;
		ncc: number of connected components;
		tw: treewidth;
		vc: vertex cover number;
		for definitions of these parameters, see \cref{sec:paramzoo} in the appendix or~\cite{Sorge2019}.]
		$^\dagger$\,(\cref{prop:pNPh-max})
		$^\ddagger$\,(no polynomial kernel unless~$\NPincoNPslashpoly$)}
	\label{fig:param-hier-results}
\end{figure}

We prove that \mscAcr{} remains \NP-hard even 
if $d=1$
or
if~$\tau=2$.
We then analyze three ways of transferring structural graph parameters to the multistage setting:
the maximum over the layers,
the sum over all layers' values,
and its value on the underlying graph times~$\tau$.
We provide several (fixed-parameter) intractability and tractability results
regarding these three notions of structural parameterizations
(see~\cref{fig:param-hier-results}).
Finally, we show that a slightly modified version of the problem in which there is no restriction on the number of changes between any two consecutive layers, but on the total number of changes throughout the lifetime of the graph, is fixed-parameter tractable with respect to the number of allowed changes.

\subparagraph{Discussion and outlook.}
We proved that \mscAcr{} is \NP-hard even if~$\tau=2$ or if~$d=1$, 
but leave open whether it is fixed-parameter tractable for the combined parameter $\tau + d$.
We introduce a framework for analyzing the parameterized complexity of multistage problems regarding structural graph parameters.
While we resolve the parameterized complexity of \mscAcr{} with respect to most of the parameters, 
two cases are left open (cf.~\cref{fig:param-hier-results}).
For instance,
we proved that~\mscAcr{} is in \XP{} when parameterized by~$\bw_{\undp}$,
but we do not know whether it is in FPT or~\W{1}-hard.
Another interesting example is \mscAcr{} parameterized by~$\dcc_{\undp}$,
for which we do not know whether it is contained in~\XP{} or para-\NP-hard.
Note that we proved fixed-parameter tractability regarding~$\dcc_{\sump}$.
Finally,
we suspect that it may also be worthwhile to investigate other multistage graph problem in our framework.

\section{Preliminaries}
\label{sec:prelim}

We denote by~$\N$ ($\Nzero$) the natural number excluding (including) zero.
We use standard terminology from graph theory~\cite{Diestel2017}
and parameterized algorithmics~\cite{CyganFKLMPPS15}.

\subparagraph{Static and temporal graphs.}

We will frequently refer to graphs as static graphs in order to avoid confusion with temporal graphs.
A static graph $G=(V,E)$ is \emph{$2$-colorable} if there is a function $f\colon V \rightarrow \{1,2\}$ such that $f(u) \neq f(v)$ for all $\{u,v\} \in E$.
It is well-known that a static graph is $2$-colorable if and only if 
it does not contain any odd cycle.
This can be checked in time~$\bigO (\abs{V}+\abs{E})$ by a simple search algorithm.

A \emph{temporal graph} $\TG=(V,(E_t)_{t=1}^\tau)$ consists of a finite vertex set $V$ and $\tau$ edge sets $E_1,\ldots,E_\tau \subseteq \binom{V}{2}$.
The \emph{underlying graph} of $\TG$ is the static graph $\TG_U \coloneqq (V, \bigcup_{t=1}^\tau E_t)$.
For $t \in \{1,\ldots,\tau\}$, the $t$-th layer of $\TG$ is also a static graph, namely $\TG_t \coloneqq (V,E_t)$.
The lifetime of $\TG$ is $\tau$, the number of layers.

If $f_1,f_2 \colon X \rightarrow Y$ are two functions that share a domain and a codomain, then $\delta(f_1,f_2) \coloneqq \abs{\{x \in X \mid f_1(x) \neq f_{2}(x)\}}$ is the number of elements of $X$ whose value under $f_1$ differs from the value under $f_2$.

\subparagraph{Preliminary results.} 
There is a connection between \mscAcr{} and the \prob{Multistage 2-SAT} problem~\cite{Fluschnik2021}, 
which implies that positive algorithmic results from the latter transfer to the former.

\begin{observation}
 There is a polynomial time algorithm
 that,
 taking an instance of \mscTsc{},
 constructs an equivalent instance of~\prob{Multistage 2-SAT}
 with~$n$ variables, $2m$ clauses,
 and~$d'=d$.
\end{observation}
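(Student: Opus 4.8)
The plan is to carry over, layer by layer, the textbook reduction from $2$-coloring to $2$-SAT and to observe that the bound on the number of recolorings translates into the bound on the number of flipped variables without any change. Given an instance $(\TG,d)$ of \mscAcr{} with $\TG=(V,(E_t)_{t=1}^{\tau})$, I would introduce one Boolean variable $x_v$ for each vertex $v\in V$; thus the number of variables is $n\ceq\abs{V}$, and the intended meaning is that $x_v$ is true exactly when $v$ receives color $1$. An edge $\{u,v\}$ occurring in a layer requires $f(u)\neq f(v)$, that is $x_u\neq x_v$, and this is captured precisely by the two $2$-clauses $(x_u\vee x_v)$ and $(\neg x_u\vee\neg x_v)$. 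Accordingly, I let the $2$-CNF formula of stage $t$ be the conjunction, over all $\{u,v\}\in E_t$, of this clause pair; stage $t$ then contributes $2\abs{E_t}$ clauses, so the whole instance has $2m$ clauses with $m\ceq\sum_{t=1}^{\tau}\abs{E_t}$. Finally I set $d'\ceq d$. The construction is clearly carried out in time linear in the size of $\TG$, hence in polynomial time.

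To prove equivalence I would use the obvious bijection between colorings $f\colon V\to\{1,2\}$ and truth assignments $\beta$ of $\{x_v\mid v\in V\}$ that sets $\beta(x_v)$ to true exactly when $f(v)=1$. Since $(x_u\vee x_v)\wedge(\neg x_u\vee\neg x_v)$ is satisfied precisely when $x_u\neq x_v$, the assignment $\beta$ satisfies the stage-$t$ formula if and only if $x_u\neq x_v$ for every $\{u,v\}\in E_t$, i.e.\ if and only if $f$ is a proper $2$-coloring of the layer $\TG_t$. The bijection is moreover distance preserving: for colorings $f_t,f_{t+1}$ with corresponding assignments $\beta_t,\beta_{t+1}$ one has $\delta(\beta_t,\beta_{t+1})=\abs{\{v\in V\mid f_t(v)\neq f_{t+1}(v)\}}$. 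Hence $f_1,\dots,f_\tau$ witness that $(\TG,d)$ is a \yes-instance if and only if $\beta_1,\dots,\beta_\tau$ witness that the constructed instance is a \yes-instance with budget $d'=d$, so the two instances are equivalent.

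I do not expect a real obstacle; the statement amounts to a change of vocabulary, and the only point that requires attention is aligning the output with the precise formalization of \prob{Multistage 2-SAT} in~\cite{Fluschnik2021}. In particular, if a stage is specified not by its own formula but by an active subset of one global clause set, I would instead output the clause pair of every edge of $\TG_U$ (so $2m$ clauses with $m\ceq\abs{\bigcup_{t=1}^{\tau}E_t}$) together with, for each $t$, the active subset consisting of the clauses of the edges in $E_t$; the variable count $\abs{V}$, the budget $d$, and the equivalence argument above are unaffected.
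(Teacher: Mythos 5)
Your proposal is correct and matches the paper's own argument: one variable per vertex, the clause pair $(x_u\lor x_v)\land(\neg x_u\lor\neg x_v)$ for each edge in each layer, and $d'=d$, with the equivalence following from the obvious bijection between colorings and assignments. You simply spell out the (straightforward) correctness argument that the paper leaves implicit.
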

\begin{proof}
 For each vertex~$v$,
 construct a variable~$x_v$.
 For each edge~$\{v,w\}$ in a layer,
 construct the clauses~$(x_v\lor x_w),(\overline{x_v}\lor \overline{x_w})$.
\end{proof}
\noindent
Results on \prob{Multistage 2-Sat}~\cite{Fluschnik2021} 
imply the following.

\begin{corollary}
 \label{cor:fromms2sat}
 \mscTsc{} is
  \begin{inparaenum}[(i)]
  \item polynomial-time solvable if~$d\in\{0,n\}$,
  \item in \XP{} regarding~$n-d$ and~$\tau+d$,
  \item \FPT{} regarding~$m+n-d$ and~$n$, and
  \item admits a polynomial kernel regarding~$m+\tau$ and~$n+\tau$.
  \end{inparaenum}
\end{corollary}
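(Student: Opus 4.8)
The plan is to obtain all four items as direct consequences of the observation above, which gives a polynomial-time reduction taking a \mscTsc{}-instance (with $n$ vertices, $m$ edges, lifetime~$\tau$, and budget~$d$) to an equivalent \prob{Multistage 2-SAT}-instance with $n$ variables, $2m$ clauses, lifetime~$\tau$, and budget~$d'=d$. First I would record the resulting parameter dictionary: the number of variables is~$n$, the number of clauses is~$2m$, and both the lifetime and the budget are unchanged. Consequently $n-d$ and $\tau+d$ coincide with the \prob{Multistage 2-SAT} parameters ``number of variables minus~$d'$'' and ``lifetime plus~$d'$''; the quantities $m+n-d$ and $m+\tau$ are linearly related to ``number of clauses plus number of variables minus~$d'$'' and ``number of clauses plus lifetime'' (here using $d\le n$, which we may assume, since an instance with $d\ge n$ is a \yes-instance exactly when every layer is $2$-colorable and is therefore decidable in linear time); and $n$, $n+\tau$ coincide with their \prob{Multistage 2-SAT} counterparts.

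With this dictionary in hand, items~(i), (ii), and~(iii) follow by composing the reduction with, respectively, the polynomial-time algorithm, the \XP-algorithms, and the \FPT-algorithms for \prob{Multistage 2-SAT} from~\cite{Fluschnik2021}: the overall running time is the bound cited there, evaluated at the (linearly larger) parameter, plus the polynomial overhead of the reduction. For~(i) one additionally observes that $d=0$ corresponds to~$d'=0$ and $d=n$ to~$d'$ being equal to the number of variables of the constructed formula; alternatively, one argues directly that $(\TG,0)$ is a \yes-instance iff $\TG_U$ is bipartite and that $(\TG,n)$ is a \yes-instance iff every layer is $2$-colorable.

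Item~(iv) needs slightly more, since polynomial kernels are not inherited along many-one reductions for free. Here the plan is the usual sandwiching argument: given a \mscTsc{}-instance with parameter $k\in\{m+\tau,\,n+\tau\}$, apply the observation to obtain an equivalent \prob{Multistage 2-SAT}-instance whose relevant parameter is~$\O(k)$, apply the polynomial kernelization of~\cite{Fluschnik2021} to compress it to size $\poly(k)$, and finally map the compressed formula back to an equivalent \mscTsc{}-instance of size $\poly(k)$ through a polynomial-time reduction from \prob{Multistage 2-SAT} to \mscTsc{}, which exists because \prob{Multistage 2-SAT} lies in~\NP{} and \mscTsc{} is \NP-hard. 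The resulting \mscTsc{}-instance is the claimed polynomial kernel. I expect this last step---and phrasing the reverse reduction cleanly---to be the only part that is not routine bookkeeping on top of the cited \prob{Multistage 2-SAT} results.
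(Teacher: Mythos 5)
Your proposal is correct and matches the paper's intended argument: the paper derives this corollary directly from the preceding observation (the reduction preserving $n$ variables, $2m$ clauses, $\tau$, and $d'=d$) together with the cited results on \prob{Multistage 2-SAT}, exactly as you do. Your extra care on item~(iv) --- noting that kernels do not transfer along many-one reductions for free and supplying the standard back-reduction via \NP-hardness of \mscAcr{} and membership of \prob{Multistage 2-SAT} in \NP{} --- is precisely the step the paper leaves implicit, and it is handled correctly.
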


\noindent
We briefly note the following:

\begin{observation}
	Given two 2-colorable graphs~$G=(V,E)$ and~$G'=(V,E')$, and 
	two 2-colorings~$f$ of~$G$ and~$f'$ of~$G'$,
	we can determine $\delta(f,f')$ in linear time.
\end{observation}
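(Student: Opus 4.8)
The plan is to note that $\delta(f,f')$ is defined purely in terms of the two functions $f,f'$ and does not depend on the edge sets at all: by definition it equals $\abs{\{v\in V\mid f(v)\neq f'(v)\}}$. First I would initialise a counter to $0$ and iterate once over the vertices $v\in V$, incrementing the counter whenever the values $f(v)$ and $f'(v)$ differ; the final counter value is exactly $\delta(f,f')$. Since $f$ and $f'$ are supplied as part of the input, each comparison of $f(v)$ with $f'(v)$ takes constant time, so the whole procedure runs in $\O(\abs{V})$ time.

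There is no genuine obstacle here; the only point worth spelling out is what ``linear'' refers to. The running time $\O(\abs{V})$ is linear in the number of vertices, and \emph{a fortiori} linear in the total input size $\abs{V}+\abs{E}+\abs{E'}$ (the edge sets are present only to witness that $f$ and $f'$ are proper $2$-colorings and are never inspected by the procedure). I would also remark, for later use, that the same scan is the workhorse in the more relevant situation where $G$, $G'$ and only $f$ are given and one wants a proper $2$-coloring $f'$ of $G'$ minimising $\delta(f,f')$: one first computes some proper $2$-coloring of each connected component of $G'$ by a single graph search, and then, independently per component, keeps whichever of its two proper $2$-colorings agrees with $f$ on the majority of that component's vertices; this is optimal precisely because distinct components can be recoloured independently, and counting the residual disagreements with the above scan again gives linear time.
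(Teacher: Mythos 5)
Your proposal is correct and matches the intended (and only sensible) argument: the paper states this as an observation without proof precisely because $\delta(f,f')$ depends only on the functions, and a single scan over $V$ counting disagreements runs in $\bigO(|V|)$ time, which is linear in the input. The additional remark about choosing, per connected component, whichever of the two proper $2$-colorings agrees with $f$ on a majority of vertices is a correct extra observation but is not needed for this statement.
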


\noindent
We can strengthen the first statement in \cref{cor:fromms2sat} with the following proposition:

\begin{proposition}
	\label{prop:p-larged}
	\mscTsc{} is polynomial-time solvable if $d \geq \frac{1}{2} n$.
\end{proposition}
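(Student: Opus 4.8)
The plan is to show that, under the hypothesis $d \geq \frac12 n$ (where $n \coloneqq \abs{V}$), an instance $(\TG,d)$ of \mscAcr{} is a \yes-instance if and only if every layer $\TG_t$ is $2$-colorable. Since $2$-colorability of a static graph $(V,E_t)$ can be tested in time $\bigO(\abs{V}+\abs{E_t})$ by the search algorithm mentioned in \cref{sec:prelim}, checking this condition for all layers takes time linear in the input size, which yields the claimed polynomial-time algorithm. The necessity direction is immediate: if some layer $\TG_t$ is not $2$-colorable, then there is no proper $2$-coloring $f_t$ of $\TG_t$ at all, so $(\TG,d)$ is a \no-instance no matter what $d$ is.

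For the converse, the key observation is the following. Let $G=(V,E)$ be a $2$-colorable static graph on $n$ vertices, let $g$ be any proper $2$-coloring of $G$, and let $\overline{g}$ be the coloring obtained from $g$ by swapping the two colors; note that $\overline{g}$ is again a proper $2$-coloring of $G$. For any function $f\colon V \to \{1,2\}$ and any vertex $v \in V$, exactly one of the equalities $g(v) = f(v)$ and $\overline{g}(v) = f(v)$ holds, hence $\delta(f,g) + \delta(f,\overline{g}) = n$, and therefore $\min\{\delta(f,g), \delta(f,\overline{g})\} \leq \lfloor n/2 \rfloor \leq d$, using $d \geq \frac12 n$ and the integrality of $\delta$.

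Using this, I would build a solution greedily, layer by layer. Fix an arbitrary proper $2$-coloring $f_1$ of $\TG_1$ (which exists since $\TG_1$ is $2$-colorable). Having defined a proper $2$-coloring $f_t$ of $\TG_t$, pick any proper $2$-coloring $g$ of $\TG_{t+1}$ and let $f_{t+1}$ be whichever of $g$ and $\overline{g}$ satisfies $\delta(f_t, f_{t+1}) \leq \lfloor n/2 \rfloor$; such a choice exists by the observation above. Then each $f_t$ is a proper $2$-coloring of $\TG_t$ and $\delta(f_t, f_{t+1}) \leq \lfloor n/2 \rfloor \leq d$ for every $t \in \set{\tau-1}$, so $(f_1,\dots,f_\tau)$ witnesses that $(\TG,d)$ is a \yes-instance.

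There is essentially no hard step here; the only point worth being careful about is that one cannot in general prescribe the "flip pattern" relating consecutive colorings independently across all $\tau-1$ pairs, but the linear (chain) ordering of the lifetime lets us propagate a single choice forward one layer at a time, and the bound $d \geq \frac12 n$ is exactly large enough to absorb the worst-case cost $\lfloor n/2 \rfloor$ incurred at each step.
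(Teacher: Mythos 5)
Your proposal is correct and follows essentially the same route as the paper's proof: compute an arbitrary proper $2$-coloring of each layer and, proceeding layer by layer, replace it by its color-swapped version whenever the number of changes relative to the previous layer exceeds $\frac{1}{2}n$, which works precisely because $\delta(f,g)+\delta(f,\overline{g})=n$. Your write-up just makes the (trivial) rejection of non-$2$-colorable layers and the integrality bound $\lfloor n/2\rfloor\leq d$ explicit.
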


\begin{proof}
	Given a temporal graph $\TG=(V,(E_t)_{t=1}^\tau)$, we compute an arbitrary 2-coloring $f_t\colon V\rightarrow \{1,2\}$ of each layer~$(V,E_t)$.
	Then, 
	for each $t \in \{2,\ldots,\tau\}$ in increasing order, 
	we check whether $f_t$ introduces too many changes relative to $f_{t-1}$.
	In other words, 
	we compute $\delta(f_t,f_{t-1})$.
	If $\delta(f_t,f_{t-1}) > \frac{1}{2} n$, 
	then consider $\tilde{f}_t \colon V \rightarrow \{1,2\}$, with $\tilde{f}_t(v) = 3- f_t(v)$, 
	the coloring that reverses all assignments of $f_t$.
	Note that $\delta(\tilde{f}_t,f_{t-1}) = |\{v\in V\mid \tilde{f}_{t}(v)\neq f_{t-1}(v)\}| = |\{v\in V\mid f_{t}(v) =  f_{t-1}(v)\}| < \frac{1}{2}n $.
	Hence, 
	we set~$f_t$ to~$\tilde{f}_t$ and continue.
\end{proof}

\noindent
Testing all sequences of functions~$f_1,\ldots,f_\tau \colon V \rightarrow \{1,2\}$ gives us the following:

\begin{observation}
	\label{obs:bruteforce}
	\mscTsc{} can be decided in time $\bigO(2^{\tau n}\cdot m)$ where $\tau$ is the lifetime, $n$ the number of vertices, and $m$ the number of time edges in a temporal graph.
\end{observation}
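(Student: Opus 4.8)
The plan is to decide the instance by brute force: enumerate every candidate solution and test it. A candidate is a sequence $(f_1,\dots,f_\tau)$ of functions $V\to\{1,2\}$; since each $f_t$ is one of $2^n$ such functions, there are exactly $(2^n)^\tau=2^{\tau n}$ candidates, and they can be generated one after another (e.g., by a nested enumeration, or in Gray-code order). For a fixed candidate, deciding whether it certifies that $(\TG,d)$ is a \yes-instance amounts to checking the two defining conditions of \mscAcr{}: that each $f_t$ is a proper $2$-coloring of the layer $(V,E_t)$, and that $\delta(f_t,f_{t+1})\le d$ for every $t\in\set{\tau-1}$.

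Both checks are cheap. Verifying that $f_t$ is proper on $(V,E_t)$ is a single scan over the edges of layer $t$; performing this for all layers touches each time edge once and costs $\bigO(m)$ in total. Verifying one change constraint $\delta(f_t,f_{t+1})\le d$ is a coordinate-wise comparison of two length-$n$ vectors, costing $\bigO(n)$; over all $\tau-1$ consecutive pairs this is $\bigO(\tau n)$. Hence a single candidate is processed in $\bigO(m+\tau n)$ time, and the whole procedure runs in $\bigO(2^{\tau n}\cdot(m+\tau n))$. To obtain the stated bound $\bigO(2^{\tau n}\cdot m)$, first note that a vertex incident to no edge of any layer may be colored arbitrarily and kept at that color throughout without affecting feasibility, so we may assume there is no such vertex and thus $n\le 2m$. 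Then, instead of re-testing each candidate from scratch, walk the enumeration tree whose node at depth $t$ corresponds to an already verified prefix $(f_1,\dots,f_t)$, spending only $\bigO(\abs{E_t}+n)$ when passing from depth $t-1$ to depth $t$. This tree has $\sum_{t=0}^{\tau}2^{tn}=\bigO(2^{\tau n})$ nodes, and since $\max_t(\abs{E_t}+n)\le m+n=\bigO(m)$, the total running time is $\bigO(2^{\tau n}\cdot m)$.

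There is essentially no obstacle here: the statement is a plain brute-force bound, and the only point needing a little care is the polynomial factor — specifically, charging the feasibility test to $\bigO(m)$ rather than $\bigO(m+\tau n)$, which is handled by the isolated-vertex reduction together with the geometric size of the enumeration tree. If one is content with the slightly weaker bound $\bigO(2^{\tau n}\cdot(m+\tau n))$, even this refinement is unnecessary.
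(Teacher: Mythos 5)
Your proposal is correct and matches the paper's approach: the paper justifies this observation purely by the remark that one can test all $2^{\tau n}$ sequences of colorings, which is exactly your brute-force enumeration. Your extra bookkeeping (removing vertices isolated in every layer so that $n\le 2m$, and charging the checks along an enumeration tree) is a welcome refinement that the paper leaves implicit in order to get the polynomial factor $m$ rather than $m+\tau n$.
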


\section{NP-hard cases}
\label{sec:nphard}
We start by proving some complexity lower bounds for \mscTsc{}.
We will show that the problem is \NP-hard in three fairly restricted cases.
\subsection{Few changes allowed}

\begin{theorem}
	\label{thm:nphard-d-fvs}
	\mscTsc{} is \NP-hard,
	even if~$d=1$.
\end{theorem}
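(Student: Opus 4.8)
The plan is to reduce from an NP-hard problem about making a graph bipartite, most naturally \textsc{Edge Bipartization} (delete at most $k$ edges to make a graph bipartite) or, even more directly, a coloring-type problem on an odd-cycle structure. The key idea is to exploit the fact that with $d = 1$, a \yes-instance forces the $2$-colorings of consecutive layers to be \emph{almost} identical: $f_{t+1}$ either equals $f_t$, equals $f_t$ with exactly one vertex flipped, or equals the global complement of one of these. So over a lifetime of $\tau$ layers, the coloring can ``drift'' by at most $\tau - 1$ vertex-flips in total (up to global complementation in each step, which is free), and a long sequence of layers essentially lets us pay for a bounded number of targeted changes to a base coloring. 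I would use this to encode a budget: start with a graph $G$ that is \emph{not} bipartite, and build a temporal graph whose first several layers pin down a fixed coloring on a large ``rigid'' gadget, then present, over the remaining layers, the edges of $G$ one small batch at a time, so that the drift budget $\tau-1$ corresponds exactly to the bipartization budget $k$.

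Concretely, first I would take an instance $(G = (V,E), k)$ of \textsc{Edge Bipartization}. I would construct a rigid gadget $H$ on the vertex set $V$ (plus a few auxiliary vertices) that is bipartite but has a \emph{unique} proper $2$-coloring up to global complementation — for instance, make $H$ connected with a forced structure, or add a large biclique/long even path that anchors the parities of all vertices of $V$ relative to each other. Put $H$ in layer $1$. Then I would spend a block of ``stall'' layers all equal to some non-trivial bipartite graph so that nothing can change (any flip would immediately be undone, wasting budget), and then release the edges of $E$: the edge $e_j$ appears in layer $j$ of the release phase together with enough of $H$ to keep everything anchored. The point is that to satisfy a layer in which edge $\{u,v\}$ is present but $H$ would force $f(u) = f(v)$, we must have previously flipped exactly one of $u, v$ relative to the anchored coloring; each such flip costs one unit of drift. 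I would set $\tau$ so that $\tau - 1$ equals (a linear function of) $k$, and argue $(G,k)$ is a \yes-instance of \textsc{Edge Bipartization} iff the temporal instance is a \yes-instance of \mscAcr{} with $d = 1$.

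The main obstacle I expect is controlling the drift accounting precisely: with $d=1$ a flipped vertex \emph{stays} flipped in later layers unless we pay again to flip it back, and global complementation between layers is free, so the naive correspondence ``one flip = one deleted edge'' is not quite right — a single flipped endpoint can ``fix'' all edges incident to it simultaneously, which is the difference between edge bipartization and \emph{vertex} bipartization, i.e.\ \textsc{Odd Cycle Transversal}. So the cleaner route is probably to reduce from \textsc{Odd Cycle Transversal} (delete $\leq k$ vertices to make bipartite, which is NP-hard): the set of vertices we ever flip, relative to the anchored coloring, should form an odd cycle transversal of $G$, and the total drift is then at most twice the number of such vertices (flip in, optionally flip out). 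The delicate points are (i) designing $H$ so its coloring is genuinely rigid and so the ``stall'' phase cannot be abused to smuggle in free changes, and (ii) ordering the release of $E$ and sizing $\tau$ so that the available drift is neither more nor less than what a size-$k$ transversal needs; I would handle (ii) by interleaving forced ``reset'' layers that re-anchor $H$ and thereby force any temporary flips to be resolved, turning the per-step $d=1$ constraint into a clean global bound on the number of permanently flipped vertices.
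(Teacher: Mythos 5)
Your plan has a genuine gap, and it sits exactly where you flag ``the main obstacle'': the drift accounting with $d=1$ cannot be made to encode a deletion budget in the way you describe. First, a small but symptomatic error: passing to the global complement between consecutive layers is \emph{not} free --- $\delta(f_t,3-f_t)=n$, so with $d=1$ the coloring can change by at most one vertex per step, full stop. More importantly, your rigid anchor $H$, which fixes the parities of \emph{all} vertices of $V$ relative to each other up to complementation, destroys precisely the freedom that makes \prob{Edge Bipartization}/\prob{Odd Cycle Transversal} hard: the choice of the bipartition of $G-X$. Once the coloring of $V$ is pinned to a single coloring $c_H$ (up to complement), the question your construction asks is ``can at most $k$ flips of $c_H$ turn it into a proper $2$-coloring of (part of) $G$?'', and the minimum Hamming distance from a fixed assignment to a proper $2$-coloring is computable in polynomial time, since each connected bipartite component admits only two proper $2$-colorings. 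So the cleanest instantiation of your reduction (anchor, a block of free steps, then the edges of $G$) encodes a polynomial-time solvable problem and cannot prove NP-hardness. If, to restore the freedom, you make $H$ absent in many layers, then every such step hands the adversary one free flip; since releasing the $m$ edges of $E$ one at a time already forces $\tau-1\geq m$, the total drift available is on the order of $m$, not $k$, and a cheating solution can repair edges one at a time. Your proposed fixes do not resolve this: in any ``stall''/``reset'' layer where the anchor (or the released edges) are present in two consecutive layers, the $d=1$ argument in the paper shows the anchored vertices are completely frozen --- so the transversal vertices can never be flipped during resets, while between resets the per-step budget is again unconstrained in aggregate. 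In short, the requirement ``$\tau-1$ equals a linear function of $k$'' is incompatible with presenting $m\gg k$ edges over time, and no gadget sketched in the proposal reconciles the two.

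For contrast, the paper avoids any global budget bookkeeping for $d=1$. It reduces from \prob{Exact 1-in-3 SAT}: edges $\{w_i,\bar w_i\}$ present in \emph{every} layer freeze the truth-assignment vertices (with $d=1$, two vertices adjacent in consecutive layers can never change color), and each clause is handled locally by six layers in which three selector vertices $v_1,v_2,v_3$ are forced --- by adjacency to frozen anchor vertices at the beginning and end of the block --- to all change color within three steps, hence exactly one per step; the intermediate layer joining $v_r$ to its literal vertex then certifies that exactly one literal of the clause is true. The paper does use \prob{Edge Bipartization}, but only for the $\tau=2$ hardness (\cref{thm:nphard-smalltau}), where $d=k$ is allowed to be large --- which is exactly the regime your drift idea would need.
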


The reduction is from the following NP-complete~\cite{Schaefer1978} problem:

\decprob{Exact 1-in-3 SAT (X1-3SAT)}{eotsat}
{A Boolean 3-CNF formula.}
{Is there a truth assignment that sets exactly one literal to true in each clause?}

\begin{construction}
	\label{constr:nphard-d-fvs}
	\begin{figure}
		\includegraphics[width=\textwidth]{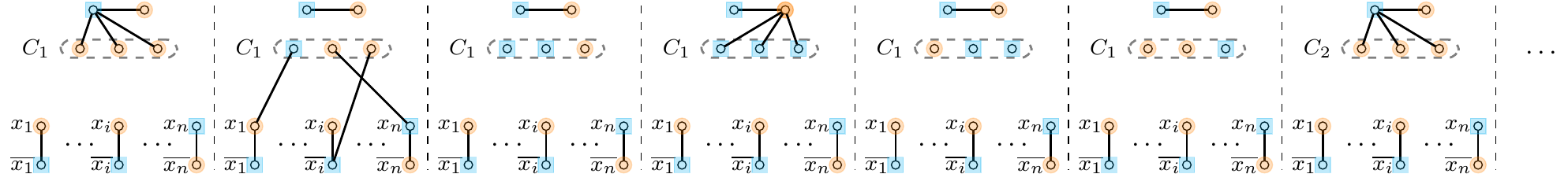}
		\caption{Illustration of \cref{constr:nphard-d-fvs}: The clause $C_1$ consists of the literals $x_1$, $\neg x_i$, and $x_n$. The assignment represented by the depicted coloring has $\alpha(x_1)=\alpha(x_i) = \top$ and $\alpha(x_n) = \bot$.}
		\label{fig:3sat-red}
	\end{figure}
	Suppose that $\varphi$ is a Boolean formula in 3-CNF over the variables~$x_1,\ldots,x_n$ with the clauses $C_1,\ldots,C_m$.
	We will construct a temporal graph $G=(V,(E_t)_{t=1}^\tau)$ with
	\begin{align*}
		V \coloneqq
		\{u_1,u_2,v_1,v_2,v_3\} \cup
		\{w_i, \bar{w_i} \mid i \in \{ 1,\ldots,n\}\}
	\end{align*}
	and $\tau\coloneqq 6m$.
	The construction is illustrated in \cref{fig:3sat-red}.
	Each clause corresponds to six layers in~$G$.
	For $j \in \{1,\ldots,m\}$, if $C_j$ consists of the literals $\ell_1,\ell_2, \ell_3$,
	then for $r \in \{1,2,3\}$ the vertex representing $\ell_r$ is $w^j_r\coloneqq w_i$ if $\ell_r = x_i$ or $w^j_r\coloneqq \bar{w_i}$ if $\ell_r = \neg x_i$.
	Then, the six layers representing~$C_j$ are:
	\begin{align*}
		E_{6j-5}  & \coloneqq \{\{u_1,u_2\},\{u_1,v_1\},\{u_1,v_2\},\{u_1,v_3\}\} \cup \{ \{w_i,\bar{w_i}\} \mid i \in \{1,\ldots,n\}\} \\
		E_{6j - 4} & \coloneqq (E_{6j-5} \setminus \{ \{u_1,v_r\} \mid r \in \{1,2,3\}\}) \cup \{ \{v_{r},w^j_r\} \mid r\in \{1,2,3\} \} \\
		E_{6j -3} & \coloneqq E_{6j-4} \setminus \{ \{v_{r},w^j_r\} \mid r\in \{1,2,3\} \} \\
		E_{6j - 2} & \coloneqq E_{6j -3} \cup \{ \{v_2,w^j_r\} \mid r \in \{1,2,3\}\} \\
		E_{6j -1} & \coloneqq E_{6j -3} \\
		E_{6j} & \coloneqq E_{6j -3}
	\end{align*}
	For a clause $C_1$ consisting of the clauses $x_1, x_n, \neg x_i$, the six layers are pictured in \cref{fig:3sat-red}.
	\cqed
\end{construction}

\begin{proof}[Proof of \cref{thm:nphard-d-fvs}]
	It is easy to see that \cref{constr:nphard-d-fvs} may be computed in polynomial time.
	We must show that $\varphi$ has a truth assignment that sets exactly one literal to true in each clause if and only if there is a multistage $2$-coloring $f_1,\ldots,f_\tau$, for $\TG$ such that $\delta(f_t,f_{t+1})\leq d \coloneqq 1$ for all~$t\in \{1,\ldots,\tau-1\}$.
	
	\RD{}
	Assume that the truth assignment $\alpha \colon \{x_1,\ldots,x_n\} \rightarrow \{\top,\bot\}$ sets exactly one literal in each clause of $\varphi$ to true.
	We will give proper 2-colorings $f_1,\ldots,f_\tau \colon V \rightarrow \{1,2\}$ of each layer of $G$.
	For all $t \in \{1,\ldots,6\tau\}$, the following colors remain the same:
	\begin{align*}
		f_t(u_1) \coloneqq 1, \quad f_t(u_2) \coloneqq 2, \quad f_t(w_i) \coloneqq \begin{cases}
			1, & \text{ if } \alpha(x_i) = \bot\\
			2, & \text{ if } \alpha(x_i) = \top
		\end{cases}
		\quad f_t(\bar{w_i}) \coloneqq 3 - f_t(w_i).
	\end{align*}
	For $j \in \{1,\ldots,m\}$, we will give the coloring $f_{6j-5},\ldots,f_{6j} \colon V \rightarrow \{1,2\}$ of the remaining vertices~$v_3$,~$v_4$, and~$v_5$ in the six layers that correspond to the clause $C_j$.
	Suppose that the vertices representing the literals $\ell_1,\ell_2,\ell_3$ in $C_j$ (in the sense described in \cref{constr:nphard-d-fvs}) are $w^j_1,w^j_2,w^j_3$.
	Exactly one of those three literals is satisfied by $\alpha$, say $\ell_s$, $s \in \{1,2,3\}$.
	Let $s' \in \{1,2,3\} \setminus \{s\}$.
	\begin{align*}
		f_{6j-5}(v_1) &= f_{6j-5}(v_2) = f_{6j-5}(v_3) \coloneqq 2, \\
		f_{6j-4}(v_s) &\coloneqq 1, \quad f_{6j-4}(v_r)\coloneqq 2 \text{ for every } r \in \{1,2,3\} \setminus \{s\}, \\
		f_{6j-3}(v_s) &= f_{6j-3}(v_{s'}) \coloneqq 1, \quad f_{6j-3}(v_r)\coloneqq 2 \text{ for every w} r \in \{1,2,3\} \setminus \{s,s'\}, \\
		f_{6j-2}(v_1) &= f_{6j-2}(v_2) = f_{6j-2}(v_3) \coloneqq 1, \\
		f_{6j-1}(v_r) &\coloneqq f_{6j-3}(v_r) \text{ for every } r \in \{1,2,3\},\\
		f_{6j}(v_r) &\coloneqq f_{6j-4}(v_r) \text{ for every } r \in \{1,2,3\}.
	\end{align*}
	It is easy to see that this coloring is proper and that only color changes in each layer.
	The changes in color are illustrated in \cref{fig:3sat-red}.
	
	\LD{}
	Suppose that $f_1,\ldots,f_\tau$ are $2$-colorings of the layers with the required properties.
	\Wilog{}, we may assume that $f_1(u_1) = 1$.
	If not, we can invert all colors.
	We define a truth assignment $\alpha \colon \{x_1,\ldots,x_\tau\} \rightarrow \{\top,\bot\}$ as follows:
	\begin{align*}
		\alpha (x_i) & \coloneqq
		\begin{cases}
			\top, & \text{ if } f_1(w_i) = 2 \\
			\bot, & \text{ if } f_1(w_i) = 1.
		\end{cases}
	\end{align*}
	We must prove that $\alpha$ satisfies exactly one literal in each clause of $\varphi$.
	We briefly note that,
	because $d=1$, 
	if two vertices are adjacent in two consecutive layers, then their colors cannot change between these two layers.
	This is because if one of the vertices is re-colored, then the other also must be, but this is not possible since at most one vertex can be re-colored from one layer to the next.
	This implies that only the colors of $v_1,v_2,v_3$ can change.
	
	Let~$j\in \{1,\ldots,m\}$.
	We must show that $\alpha$ satisfies exactly one literal in~$C_j$.
	Since $f_{6j-5}(u_1) = 1$ (as we noted, the color of $u_1$ cannot change), it follows that $f_{6j-5}(v_r) = 2$ for every~$r \in \{1,2,3\}$.
	Similarly, 
	since $f_{6j-2}(u_2) = 2$, 
	it follows that $f_{6j-2}(v_r) = 1$ for every~$r \in \{1,2,3\}$.
	Hence, 
	between the layers $6j-5$ and $6j-2$ all three vertices $v_1,v_2,v_3$ change colors,
	implying that exactly one of these vertices must change colors in each step.
	Let $v_s$, 
	$s\in\{1,2,3\}$, 
	be the vertex that changes its color to~$1$ in layer~$6j-4$.
	Let $w^j_1,w^j_2,w^j_3$ be the vertices corresponding to the literals in~$C_j$ (again in the sense described in \cref{constr:nphard-d-fvs}).
	Since $f_{6j-4}(v_s)= 1$, 
	$f_{6j-4}(v_r)= 2$ for every~$r\in\{1,2,3\} \setminus \{s\}$, 
	and the colors of~$w_i$ and $\bar{w_i}$ cannot change, 
	it follows that $\alpha$ satisfies exactly one of the literals in~$C_j$.
\end{proof}

\subsection{Few stages}

\begin{theorem}
	\label{thm:nphard-smalltau}
	\mscTsc{} is \NP-hard even on temporal graphs with $\tau=2$.
\end{theorem}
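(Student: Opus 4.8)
The plan is to reformulate the $\tau=2$ case as a ``simultaneous $2$-colouring'' problem that is equivalent to a constraint-violation problem on a bipartite graph, and then to prove \NP-hardness of that problem by a reduction from \prob{Max Cut}. For the reformulation, note that a proper $2$-colouring of a bipartite graph amounts to one binary choice per connected component (which colour class receives colour~$1$). Fix arbitrary proper $2$-colourings $g_1$ of $(V,E_1)$ and $g_2$ of $(V,E_2)$; a candidate solution $(f_1,f_2)$ is then described by a set $S_1$ of components of $(V,E_1)$ and a set $S_2$ of components of $(V,E_2)$ that are ``flipped'' relative to $g_1$, $g_2$, and a vertex $v$ contributes to $\delta(f_1,f_2)$ exactly when an odd number of the three statements ``$v$'s component in layer~$1$ lies in $S_1$'', ``$v$'s component in layer~$2$ lies in $S_2$'', ``$g_1(v)\neq g_2(v)$'' holds. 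Hence $(\TG,d)$ with $\tau=2$ is a \yes-instance iff the following bipartite (multi)graph admits a $\{0,1\}$-vertex-labelling violating at most $d$ edges: its two sides are the component sets of the two layers, and each $v\in V$ contributes one edge between its layer-$1$ and its layer-$2$ component, labelled ``$=$'' if $g_1(v)=g_2(v)$ and ``$\neq$'' otherwise (an ``$=$''-edge is violated when its endpoints receive different labels, an ``$\neq$''-edge when they receive equal labels). Conversely, I would show that an arbitrary bipartite graph $H$ with sides $L$, $R$ and ``$=$''/``$\neq$''-labelled edges arises from some $\mscAcr$-instance with $\tau=2$: take one vertex of $\TG$ per edge of $H$; in layer~$1$ build, for each $a\in L$, a star joining $a$ to the vertices of its incident edges, leaving everything else isolated; in layer~$2$ join each ``$=$''-edge-vertex directly to its $R$-endpoint, route each ``$\neq$''-edge-vertex to its $R$-endpoint through a fresh private vertex, and leave all $L$-vertices isolated. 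Both layers are forests, hence $2$-colourable; the $L$-, $R$- and private vertices are isolated in one of the two layers and can therefore always be recoloured at no cost; and a short computation with $g_1$, $g_2$ shows that the minimum of $\delta(f_1,f_2)$ over all solutions equals the minimum number of edges of $H$ that a $\{0,1\}$-labelling must violate.

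For the hardness of the bipartite violation problem, observe that for a graph $G'=(V',E')$ a cut of size at least $k$ is the same as a $\{0,1\}$-labelling of $V'$ leaving at most $|E'|-k$ edges uncut, i.e.\ the violation problem on $G'$ with every edge labelled ``$\neq$'' --- except that $G'$ need not be bipartite. Subdividing each edge $\{u,v\}$ once into $u-z_{uv}-v$ with $\{u,z_{uv}\}$ labelled ``$=$'' and $\{z_{uv},v\}$ labelled ``$\neq$'' gives a bipartite labelled graph ($V'$ versus the $z_{uv}$), and a short case distinction shows that the minimum number of violated edges is preserved: every labelling of $V'$ extends to the subdivision violating the same number of edges, and restricting any labelling of the subdivision to $V'$ does not increase the violation count. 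Composing this with the realisation above maps an instance $(G',k)$ of \prob{Max Cut} to an $\mscAcr$-instance $(\TG,d)$ with $\tau=2$ and $d=|E'|-k$ that is a \yes-instance iff $G'$ has a cut of size at least $k$. Since \prob{Max Cut} is \NP-hard, the theorem follows.

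The crux of the argument is the reformulation: once one sees that the $\tau=2$ case is ``$=$/$\neq$''-constraint-violation minimisation on the bipartite graph obtained by quotienting each layer to its connected components, the remainder is bookkeeping. The two points that need genuine care are (i) verifying in the realisation that only the ``edge-vertices'' can contribute to $\delta$ and that every target vertex-labelling is actually attained by some $(f_1,f_2)$, and (ii) the subdivision step, i.e.\ that making the constraint graph bipartite in this way preserves the optimal number of violated edges; both are elementary but must be checked carefully.
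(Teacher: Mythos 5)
Your argument is correct; I checked the three pieces and they all go through: the component-flip reformulation of the $\tau=2$ case as minimizing violated ``$=$''/``$\neq$'' constraints on the bipartite component-quotient graph, the realization of an arbitrary labelled bipartite constraint graph by a two-layer temporal graph whose layers are forests (the forced colors of the edge-vertices give exactly the stated violation semantics, and the $L$-, $R$- and private vertices are singleton components in one layer, hence free), and the one-subdivision trick that turns \prob{Max Cut} into a bipartite constraint instance while preserving the minimum number of violated edges. The route differs from the paper's in packaging rather than in substance: the paper reduces directly from \prob{Edge Bipartization} (which is just the complementary view of \prob{Max Cut}) by subdividing every edge of the input graph twice, taking layer~1 to be the pendant stars at the original vertices and layer~2 to be the matching on the subdivision vertices, with $d=k$; the parity/component-flip argument that you isolate as an explicit intermediate problem is used there implicitly inside the correctness proof. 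What your abstraction buys is a clean characterization of \emph{all} $\tau=2$ instances (the problem is exactly ``$=$/$\neq$''-constraint violation on the bipartite graph of layer components), which is reusable and makes the hardness almost mechanical; what the paper's direct construction buys is economy --- far fewer gadget vertices ($n+2m$ in total), a budget equal to the bipartization number, and an instance size that is immediately linear in the input, which the paper then exploits for the $2^{o(n+m)}$ ETH lower bound. Your construction is also of linear size, so the same ETH consequence would follow, but you would have to say so explicitly.
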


\noindent
To prove \cref{thm:nphard-smalltau},
we give a polynomial-time many-one reduction from the \NP-complete~\cite{Yannakakis1978} \prob{Edge Bipartization} problem
defined by:

\decprob{\prob{Edge Bipartization}}{gbip}
{An undirected graph $G=(V,E)$ and $k \in \Nzero$.}
{Is there a set of edges $E' \subseteq E$ with $\abs{E'} \leq k$ such that $G-E'$ is bipartite?}

\begin{construction}
	\label{constr:maxcut-red}
	Let $G=(V,E)$ be a graph and let $k\in \Nzero$.
	We assume that~$V=\{v_1,\ldots,v_n\}$.
	We construct an instance $(\TG,d)$ of \mscAcr{}
	with~$\TG\coloneqq (V',E_1,E_2)$ and~$d\coloneqq k$ as follows
	(see~\cref{fig:bip-red} for an illustrative example).
\begin{figure}
		\includegraphics[width=\textwidth]{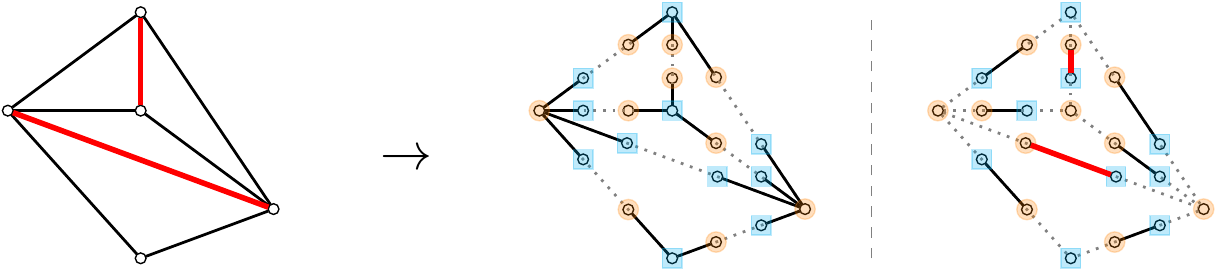}
		\caption{Illustration of \cref{constr:maxcut-red}: 
		The input graph~$G$ on the left hand-side 
		(thick/red edges indicate a solution) 
		and the output temporal graph~$\TG$ on the right-hand side 
		(thick/red edges in the second layer indicate where a recoloring was made;
		gray/dotted lines help to match with original edges from~$G$).}
		\label{fig:bip-red}
	\end{figure}

	The underlying graph of $\TG$ is obtained by subdividing each edge in $G$ twice.
	Let $u^e_i$ and $u^e_j$ be the two vertices obtained by subdividing $e=\{v_i,v_j\}$ where $u^e_i$ is adjacent to~$v_i$ and $u^e_j$ to~$v_j$.
	Then,~$V'\coloneqq V \cup \{u^e_i,u^e_j \mid e=\{v_i,v_j\} \in E\}$.
	The first layer of $\TG$ has edge set~$E_1 \coloneqq \{ \{v_i,u^e_i\} \mid i \in \{1,\ldots,n\},e\in E, v_i \in e\}$.
	The second layer has edge set~$E_2 \coloneqq \{\{u^e_i,u^e_j\} \mid e=\{v_i,v_j\} \in E \}$.
	\cqed
\end{construction}

\noindent
Next, 
we will prove the correctness of \cref{constr:maxcut-red}.

\begin{lemma}
	\label{lemma:maxcut-red}
	Instance $(G,k)$ is a \yes-instance for \prob{Edge Bipartization} if and only if instance $(\TG,d)$ output by \cref{constr:maxcut-red} is a \yes-instance for \mscTsc{}.
\end{lemma}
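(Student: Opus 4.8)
The plan is to exploit the rigid structure of the two layers produced by Construction~\ref{constr:maxcut-red}. In the first layer every subdivision vertex $u^e_i$ has degree exactly one, its only neighbor being $v_i$, so layer~$1$ is a disjoint union of stars centered at the vertices of $V$; consequently any proper $2$-coloring $f_1$ of layer~$1$ is obtained by choosing an arbitrary $2$-coloring $g\colon V\to\{1,2\}$ and setting $f_1(u^e_i)\coloneqq 3-g(v_i)$ for every edge $e\ni v_i$. In the second layer the vertices of $V$ are isolated and the subdivision vertices form a perfect matching, namely the edges $\{u^e_i,u^e_j\}$ for $e=\{v_i,v_j\}\in E$. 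Since $f_1(u^e_i)=3-g(v_i)$ and $f_1(u^e_j)=3-g(v_j)$, the edge $\{u^e_i,u^e_j\}$ is already properly colored by $f_1$ precisely when $g(v_i)\neq g(v_j)$, and when $g(v_i)=g(v_j)$ at least one of its endpoints must be recolored in $f_2$. This correspondence between recolorings and monochromatic edges of $g$ drives both directions.

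For the forward direction I would start from a set $E'\subseteq E$ with $\abs{E'}\le k$ such that $G-E'$ is bipartite, fix a proper $2$-coloring $g$ of $G-E'$, let $f_1$ be the coloring of layer~$1$ determined by $g$ as above, and set $f_2(v_i)\coloneqq g(v_i)$. For an edge $e=\{v_i,v_j\}$ with $g(v_i)\neq g(v_j)$ (in particular for all $e\notin E'$), keeping $f_2(u^e_i)\coloneqq f_1(u^e_i)$ and $f_2(u^e_j)\coloneqq f_1(u^e_j)$ colors $\{u^e_i,u^e_j\}$ properly with no recoloring; for $e\in E'$ with $g(v_i)=g(v_j)$, set instead $f_2(u^e_i)\coloneqq g(v_i)=3-f_1(u^e_i)$ and leave $u^e_j$ unchanged, which makes $\{u^e_i,u^e_j\}$ properly colored and recolors exactly one vertex. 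Hence $f_2$ is a proper $2$-coloring of layer~$2$ and $\delta(f_1,f_2)\le\abs{E'}\le k=d$.

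For the backward direction I would take proper $2$-colorings $f_1,f_2$ of the two layers with $\delta(f_1,f_2)\le d=k$, set $g\coloneqq f_1|_V$, and let $E'\coloneqq\{\,\{v_i,v_j\}\in E \mid g(v_i)=g(v_j)\,\}$. By construction $g$ is a proper $2$-coloring of $G-E'$, so $G-E'$ is bipartite, and it remains to bound $\abs{E'}$. Because $u^e_i$ has degree one in layer~$1$ we have $f_1(u^e_i)=3-g(v_i)$, so for any $e=\{v_i,v_j\}\in E'$ it holds that $f_1(u^e_i)=f_1(u^e_j)$; since $f_2$ colors $\{u^e_i,u^e_j\}$ properly, at least one of $u^e_i,u^e_j$ lies in $D\coloneqq\{x\in V'\mid f_1(x)\neq f_2(x)\}$. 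The subdivision vertices of distinct edges are pairwise distinct, so this assignment of a changed vertex to each edge of $E'$ is injective, giving $\abs{E'}\le\abs{D}=\delta(f_1,f_2)\le k$, as required.

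The argument is essentially bookkeeping; the one point that needs a little care is the injectivity of the charging in the backward direction, which is exactly where the \emph{double} subdivision (rather than a single one) is used: each edge of $G$ owns two private subdivision vertices that appear in no other layer-$2$ edge, so the recolorings it forces cannot be shared with other edges. One should also record the trivial but necessary facts that layer~$1$ (a star forest) and layer~$2$ (a matching together with isolated vertices) are each bipartite, so proper $2$-colorings exist at all, and that the reduction sets $d=k$.
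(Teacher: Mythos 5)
Your proof is correct and follows essentially the same route as the paper's: encode a bipartization set by which layer-2 matching edges are monochromatic under $f_1$, recolor one private subdivision vertex per such edge in the forward direction, and in the backward direction charge each monochromatic edge injectively to a recolored subdivision vertex (your $E'$ defined via $f_1|_V$ coincides with the paper's $E'$ defined via the $u^e_i$, since degree-one subdivision vertices force $f_1(u^e_i)=3-f_1(v_i)$). No gaps.
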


\begin{proof}
	\RD{} 
	Assume that $(G,k)$ is a \yes-instance and that $E'\subseteq E$ is a set of edges of size at most~$k$ such that $G-E'$ is bipartite.
	Hence, there is a proper $2$-coloring $f_0\colon V \rightarrow \{1,2\}$ of $G-E'$.
	We obtain a $2$-coloring $f_1\colon V' \rightarrow \{1,2\}$ of the first layer of $\TG$ by setting $f_1(v_i) \coloneqq f_0(v_i)$ for all~$i\in \{1,\ldots,n\}$ and $f_1(u^e_i) \coloneqq 3 - f_1(v_i)$ for all $e=\{v_i,v_j\} \in E$.
	It is easy to verify that this coloring is proper.
	A proper $2$-coloring $f_2\colon V' \rightarrow \{1,2\}$ may be defined by $f_2(v_i) \coloneqq f_1(v_i)$ and for any $e=\{v_i,v_j\}\in E$ we set
	\begin{align*}
	f_2(u^e_i) \coloneqq
	\begin{cases}
	f_1(u^e_i), & \text{ if } i<j \text{ or } f_1(u^e_i) \neq f_1(u^e_j),\\
	3 - f_1(u^e_i), & \text{ if } i>j \text{ and } f_1(u^e_i) = f_1(u^e_j).
	\end{cases}
	\end{align*}
	The only vertices that change colors between $f_1$ and $f_2$ are $u^e_i$ with $e=\{v_i,v_j\} \in E$, $i >j$, and~$f_1(v_i) = f_1(v_j)$.
	However, $f_1(v_i) = f_1(v_j)$ implies that $f_0(v_i) = f_0(v_j)$ and hence $e \in E'$.
	Since~$\abs{E'} \leq k = d$, it follows that at most $d$ vertices change colors.
	
	\LD{}
	Suppose that $f_1,f_2\colon V' \rightarrow \{1,2\}$ are proper $2$-colorings of $\TG_1$ and $\TG_2$, respectively.
	Let $E'\coloneqq \{e=\{v_i,v_j\} \in E \mid f_1(u^e_i) = f_1(u^e_j) \}$.
	Since $\{u^e_i,u^e_j\} \in E_2$, it follows that one of the vertices $u^e_i,u^e_j$ must change colors between $f_1$ and $f_2$ if $e \in E'$.
	This implies that $\abs{E'} \leq d = k$.
	For $e = \{v_i,v_j\} \in E \setminus E'$, it follows that $f_1(u^e_i) \neq f_1(u^e_j)$ and hence $f_1(v_i) \neq f_1(v_j)$.
	This implies that the restriction of $f_1$ to $V$ induces a proper $2$-coloring of $G-E'$.
\end{proof}

\noindent
This allows us to prove \cref{thm:nphard-smalltau}.

\begin{proof}[Proof of \cref{thm:nphard-smalltau}]
	It is easy to see that \cref{constr:maxcut-red} can be computed in polynomial time.
	The claim follows by \cref{lemma:maxcut-red}.
\end{proof}

The reduction also implies the following:

\begin{proposition}
	Unless \ETHbreaks, 
	\mscTsc{} admits no $\bigO(2^{o(n+m)})$-time algorithm,
	where $n$~is the number of vertices and $m$~is the number of time edges in a temporal graph, 
	even for $\tau=2$.
\end{proposition}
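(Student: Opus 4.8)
The plan is to compose the many-one reduction of \cref{constr:maxcut-red} with a known subexponential-time lower bound for \prob{Edge Bipartization}, after observing that \cref{constr:maxcut-red} blows up the instance size only linearly.

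First I would record that \cref{constr:maxcut-red} is a \emph{linear-size} reduction: from a graph $G=(V,E)$ with $n_G\coloneqq\abs{V}$ and $m_G\coloneqq\abs{E}$, the output temporal graph $\TG$ has $\tau=2$ layers, $n_G+2m_G$ vertices, and $2m_G+m_G=3m_G$ time edges, so its number of vertices $n$ and its number of time edges $m$ satisfy $n+m=n_G+5m_G=\bigO(n_G+m_G)$; correctness is \cref{lemma:maxcut-red}.

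Next I would establish that, unless \ETHbreaks{}, \prob{Edge Bipartization} admits no $\bigO(2^{o(n_G+m_G)})$-time algorithm. Since a graph with $m_G$ edges becomes bipartite after deleting at most $k$ of its edges if and only if it has a cut of size at least $m_G-k$, the problems \prob{Edge Bipartization} and \prob{Maximum Cut} reduce to one another without changing the graph, so it suffices to argue the bound for \prob{Maximum Cut}. By the ETH together with the Sparsification Lemma, \prob{3-SAT} has no $\bigO(2^{o(n'+m')})$-time algorithm on formulas with $n'$ variables and $m'$ clauses; chaining the classical reductions from \prob{3-SAT} to \prob{Max-2-SAT} and from \prob{Max-2-SAT} to \prob{Maximum Cut}, each of which increases the instance size only by a constant factor, transfers this lower bound to \prob{Maximum Cut}, hence to \prob{Edge Bipartization}.

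Putting the two pieces together: a hypothetical $\bigO(2^{o(n+m)})$-time algorithm for \mscTsc{} with $\tau=2$ would, via \cref{constr:maxcut-red} and the size bound $n+m=\bigO(n_G+m_G)$, decide \prob{Edge Bipartization} in time $\bigO(2^{o(n_G+m_G)})$, contradicting the ETH. I expect the only delicate point to be confirming that the reduction chain behind the \prob{Edge Bipartization} lower bound is genuinely linear in the instance size (the \prob{3-SAT}$\to$\prob{Max-2-SAT}$\to$\prob{Maximum Cut} steps, or a suitable direct reference); everything else is routine bookkeeping of instance parameters.
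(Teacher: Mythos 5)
Your proposal is correct and takes essentially the same route as the paper: compose \cref{constr:maxcut-red} (whose output has $n_G+2m_G$ vertices and $3m_G$ time edges) and \cref{lemma:maxcut-red} with an ETH-based lower bound for \prob{Edge Bipartization} inherited from \prob{Maximum Cut}. The only difference is that the paper simply cites the known \prob{Maximum Cut} lower bound~\cite{Okrasa2020}, whereas you rederive it via the Sparsification Lemma and the classical linear-size reduction chain; stating the \prob{Edge Bipartization} bound in terms of $n_G+m_G$ as you do is, if anything, slightly more careful than the paper's phrasing in terms of $n_G$ alone.
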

\begin{proof}
	Unless \ETHbreaks, \prob{Edge Bipartization} cannot be solved in time~$\bigO(2^{o(n)})$,
	where $n$ is the number of vertices.
	This follows from the corresponding lower bound for \prob{Maximum Cut}~\cite{Okrasa2020}.
	The instance output by \cref{constr:maxcut-red} contains $n+2m$ vertices.
	The claim follows by \cref{lemma:maxcut-red}.
\end{proof}

\noindent
Next we present a further reduction to \mscAcr{}.
We will use this reduction to prove parameterized lower bounds in \cref{sec:param}.
The reduction is from the NP-complete \prob{Clique} problem.

\begin{construction}
	\label{constr:red-from-clique}
	
	\begin{figure}
		\includegraphics[width=\textwidth]{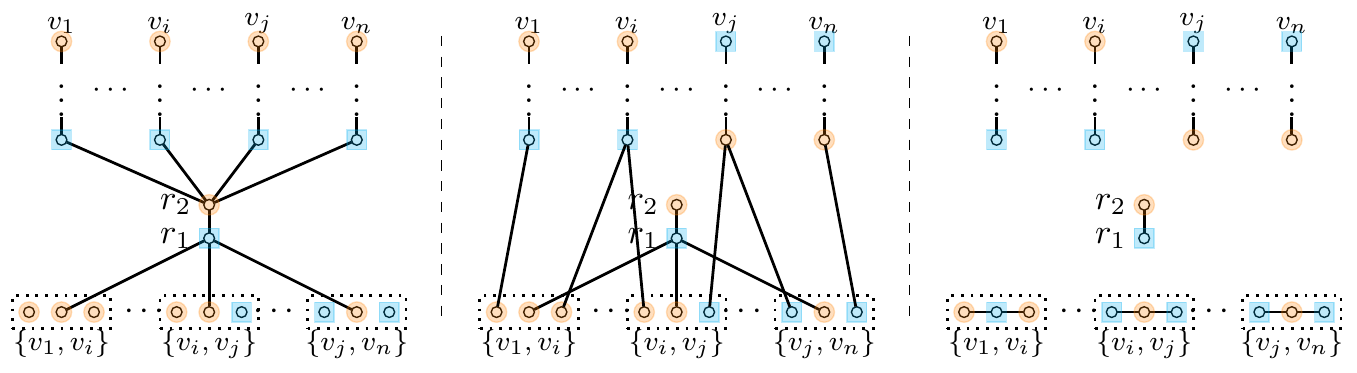}
		\caption{
		Illustration of \cref{constr:red-from-clique}: This temporal graph represents a static graph with~$E=\{\{v_1,v_i\}, \{v_i,v_j\}, \{v_j,v_n\}\}$.
			The vertices $r_3,\ldots,r_{d+1}$ are omitted from the illustration.
			Blue represents the color $1$ in the proof of \cref{thm:nphard-smalltau} and orange the color $2$.
			Then, the coloring depicted represents a clique that contains $v_j$ and $v_n$, but not $v_1$ and~$v_i$.
		}
		\label{fig:clique-red}
	\end{figure}
	Let $(G=(V,E),k)$ be an instance for \prob{Clique}.
	Let $V=\{v_1,\ldots,v_n\}$ and~$\abs{E}=m$.
	We may assume that $k\geq 3$
	and
	that $m\geq \binom{k}{2}$
	(otherwise, $(G,k)$ is clearly a \no-instance).	
	Finally, we assume that $m-\binom{k}{2}$ is divisible by~$k$.
	If it is not, 
	we can simply add a star with 
	$k-((m-\binom{k}{2})\bmod k)$
	many leaves since this does not add or remove a $k$-clique with~$k\geq 3$.
	We will construct an instance $(\TG,d)$ for \mscTsc{} consisting of a temporal graph~$\TG=(V',E_1,E_2,E_3)$ with three layers and $d \coloneqq m - \binom{k}{2}$.
	
	The general idea is that each vertex in $G$ is represented by a path,
	and its coloring in the second and third layer determine whether or not the represented vertex is in the clique.
	The restriction on the color changes between the layers ensure that at least $k$ vertices are chosen and that all pairs of chosen vertices are adjacent.
	The construction is illustrated in \cref{fig:clique-red}.
	
	We start by defining the vertex set $V'$.
	Let $\ell \coloneqq (m-\binom{k}{2})/k$.
	Then, $V_1 \coloneqq \{ u^v_i \mid v \in V, i \in \{1,\ldots,\ell\}\}$.
	Next, we let $V_2 \coloneqq \{ w^e_1, w^e_2, w^e_3 \mid e \in E\}$.
	Finally, $V_3 \coloneqq \{r_i \mid i \in \{1,\ldots,d+1\}\}$.
	Then, we define~$V'\coloneqq V_1 \cup V_2 \cup V_3$.
	
	We must now define the edge sets $E_1,E_2,E_3$.
	We start by defining a set of edges that will be present in every layer of $\TG$.
	Essentially, 
	the vertices in~$V_1$ that correspond to the same vertex in~$G$ and 
	the vertices in~$V_3$ each form a path. 
	Let $E^P\coloneqq \{\{u^v_i,u^v_{i+1}\} \mid v \in V, i \in \{1,\ldots,\ell-1\}\} \cup \{ \{ r_i,r_{i+1} \} \mid i \in \{1,\ldots,d \}\}$.
	Then:
	\begin{align*}
		E_1 & \coloneqq  E^P \cup \{ \{r_1,w^e_2\} \mid e \in E\} \cup \{ \{r_2,u^v_1\} \mid v \in V\}, \\
		E_2 & \coloneqq E^P \cup \{ \{r_1,w^e_2\} \mid e \in E\} \cup \{ \{u_1^{v_i},w^e_1\}, \{u_1^{v_j},w^e_3\} \mid e=\{v_i,v_j\}\in E, i<j \}, \\
		E_3 & \coloneqq E^P \cup \{ \{ w^e_1,w^e_2 \}, \{ w^e_2,w^e_3 \} \mid e \in E  \}. &&& \text{\cqed}
	\end{align*}
	
\end{construction}

\begin{lemma}
  \label{lem:red-from-clique}
	\cref{constr:red-from-clique} can be computed in polynomial time and the input instance is equivalent to the output instance.
\end{lemma}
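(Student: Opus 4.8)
The plan is as follows. \cref{constr:red-from-clique} clearly runs in polynomial time: $\TG$ has $n\ell+3m+(d+1)=\O(nm)$ vertices and $\O(nm)$ time edges, and the divisibility preprocessing adds $\O(k)$ vertices and edges (we may also assume $m>\binom{k}{2}$, so that $\ell\ge1$; if $m=\binom{k}{2}$, then $(G,k)$ is a \yes-instance exactly when $G$ minus its isolated vertices equals $K_k$, which is checked directly). For the equivalence, I would first pin down the rigid part of every solution. Let $(f_1,f_2,f_3)$ be a solution of $(\TG,d)$; flipping all three colorings simultaneously preserves properness and each $\delta(f_t,f_{t+1})$, so assume $f_1(r_1)=1$. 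The path $r_1r_2\cdots r_{d+1}$ lies in all three layers, and two proper $2$-colorings of a path that differ somewhere differ everywhere; since this path has $d+1>d$ vertices, $f_1,f_2,f_3$ must agree on it, whence $f_1(w^e_2)=f_2(w^e_2)=2$ for all $e\in E$ (as $\{r_1,w^e_2\}\in E_1\cap E_2$). In layer~$1$, $u^v_1$ is adjacent to $r_2$ (of color~$2$), so $f_1(u^v_1)=1$ and the ``vertex path'' $u^v_1\cdots u^v_\ell$ alternates from there; in layer~$2$ this path together with the leaves $w^e_1,w^e_3$ attached at $u^v_1$ forms a connected component, so its coloring is determined by $f_2(u^v_1)$, and relative to layer~$1$ it is unchanged if $f_2(u^v_1)=1$ and recolored entirely, at cost $\ell$, if $f_2(u^v_1)=2$. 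Set $S\coloneqq\{v\in V\mid f_2(u^v_1)=2\}$.

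\emph{Solution for $\TG$ $\Rightarrow$ $k$-clique in $G$.} The $r$-path and the vertices $w^e_2$ do not change from $f_1$ to $f_2$, while the vertex paths contribute exactly $\ell\abs{S}$ changes, so $\ell\abs{S}\le\delta(f_1,f_2)\le d=\ell k$ forces $\abs{S}\le k$. For $f_2\to f_3$, fix $e=\{v_i,v_j\}\in E$ with $i<j$: in layer~$3$ the edges $\{w^e_1,w^e_2\},\{w^e_2,w^e_3\}$ force $f_3(w^e_1)=f_3(w^e_3)$, whereas $f_2(w^e_1)=3-f_2(u^{v_i}_1)$, $f_2(w^e_3)=3-f_2(u^{v_j}_1)$, and $f_2(w^e_2)=2$. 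A short case distinction on how many of $v_i,v_j$ lie in $S$ shows that $w^e_1,w^e_2,w^e_3$ together contribute $0$ changes to $\delta(f_2,f_3)$ when both endpoints of $e$ are in $S$ and at least $1$ otherwise. As these triples are pairwise disjoint and disjoint from all paths, $\delta(f_2,f_3)\ge m-\abs{E(G[S])}$, hence $\abs{E(G[S])}\ge\binom{k}{2}$. Together with $\abs{S}\le k$ and $\abs{E(G[S])}\le\binom{\abs{S}}{2}\le\binom{k}{2}$, all three inequalities are tight, so, using $k\ge3$, $\abs{S}=k$ and $G[S]$ is complete; that is, $S$ is a $k$-clique.

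\emph{$k$-clique in $G$ $\Rightarrow$ solution for $\TG$.} Given a $k$-clique $S$, let all three colorings coincide on the $r$-path with $f_t(r_1)=1$; set $f_1(w^e_2)=f_2(w^e_2)=2$; color each vertex path alternately, starting from $f_1(u^v_1)=1$ in layer~$1$ and from $f_2(u^v_1)=f_3(u^v_1)=2$ for $v\in S$ and $=1$ for $v\notin S$ in layers~$2$ and~$3$; give $w^e_1,w^e_3$ their forced leaf colors in layers~$1$ and~$2$, choosing $f_1$ to agree with $f_2$ on them (possible since they are isolated in layer~$1$); finally, in layer~$3$ put $f_3(w^e_1)=f_3(w^e_3)=3-f_3(w^e_2)$, with $f_3(w^e_2)=2$ when both endpoints of $e$ are in $S$ and otherwise the value recoloring as few of $w^e_1,w^e_2,w^e_3$ as possible. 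Each layer is a forest and the assignment respects all its edges, so every $f_t$ is a proper $2$-coloring; moreover $\delta(f_1,f_2)=\ell\abs{S}=\ell k=d$ and $\delta(f_2,f_3)=m-\abs{E(G[S])}=m-\binom{k}{2}=d$.

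I expect the main obstacle to be the backward direction, in particular the rigidity analysis and the per-edge case check. One must ensure that the extra freedom available elsewhere---the isolated leaves $w^e_1,w^e_3$ in layer~$1$, or flipping a whole vertex path between layers~$2$ and~$3$---can never be exploited to save changes, which is why that direction argues with the inequalities $\delta(f_1,f_2)\ge\ell\abs{S}$ and $\delta(f_2,f_3)\ge m-\abs{E(G[S])}$ rather than exact counts. The asymmetry built into the edge gadget---a selected vertex receives color~$2$, so its incident $w$-vertices receive color~$1$, which is already consistent with the layer-$3$ path on $w^e_1,w^e_2,w^e_3$---is exactly what makes these bounds tight only on edges inside $S$, thereby encoding ``$S$ is a clique''.
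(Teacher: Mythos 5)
Your proof is correct and follows essentially the same route as the paper's: fix the rigid $r$-path gadget, deduce the forced colorings of $w^e_2$ and of the $u^v$-paths in layer~1, read off the selected set from which paths flip between layers~1 and~2, and charge one recoloring in the $w^e$-triple to every edge not inside that set, with your tightness argument $\ell\abs{S}\le d$ and $\abs{E(G[S])}\ge\binom{k}{2}\ge\binom{\abs{S}}{2}$ matching the paper's count $m-\binom{\abs{X}}{2}\le d$. Your explicit treatment of the degenerate case $m=\binom{k}{2}$ (where $\ell=0$) is a small extra care the paper glosses over, but otherwise the two arguments coincide.
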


\begin{proof}
	It is easy to verify that \cref{constr:red-from-clique} can be computed in polynomial time.
	We must show that $(G=(V,E),k)$ is a \yes-instance of~\prob{Clique} 
	if and only if 
	$(\TG=(V,E_1,E_2,E_3),d)$ is a \yes-instance of~\mscAcr{}.
	
	\RD{} 
	Suppose that $X \subseteq V$ is a clique of size exactly $k$ in $G$.
	We will give $f_1,f_2,f_3\colon V' \rightarrow \{1,2\}$ proving that $(\TG,d)$ is a \yes-instance.
	Let
	\begin{align*}
		f_1(u_i^v) \coloneqq
		\begin{cases}
			1, & \text{if $i$ is odd},\\
			2, & \text{if $i$ is even},
		\end{cases}
		\quad
		f_1(r_i) \coloneqq
		\begin{cases}
		1, & \text{if $i$ is odd},\\
		2, & \text{if $i$ is even}.
		\end{cases}
	\end{align*}
	For any $e = \{v_i,v_j\} \in E$, $i<j$, let:
	\begin{align*}
		f_1(w_1^e) \coloneqq
		\begin{cases}
			1, & \text{if } v_i \in X,\\
			2, & \text{if } v_i \notin X,
		\end{cases}
		\quad
		f_1(w_2^e) \coloneqq 2,
		\quad
				f_1(w_3^e) \coloneqq
		\begin{cases}
			1, & \text{if } v_j \in X,\\
			2, & \text{if } v_j \notin X.
		\end{cases}
	\end{align*}
	It is easy to see that this coloring of $(V',E_1)$ is proper.
	We continue by giving the coloring $f_2$ of the second layer.
	First, $f_2(x) \coloneqq f_1(x)$ for all $x \in V_2 \cup V_3$.
	The colors of vertices in $V_1$, however, can change.
	Let:
	\begin{align*}
		f_2(u^v_i) \coloneqq 
		\begin{cases}
			1, & \text{if $i$ is odd and $v \notin X$,}\\
			2, & \text{if $i$ is even and $v \notin X$,}\\
			2, & \text{if $i$ is odd and $v \in X$,}\\
			1, & \text{if $i$ is even and $v \in X$.}
		\end{cases}
	\end{align*}
	Again, it is easy to see that $f_2$ is a proper coloring of $(V',E_2)$.
	Note that the only vertices that change colors are $u^v_i$ with $v \in X$.
	There are exactly $\abs{X} \cdot \ell = m - \binom{k}{2} = d$ such vertices.
	We conclude by defining $f_3$.
	The colors of $V_1$ and $V_3$ do not change, so let $f_3(x)\coloneqq f_2(x)$ for all~$x \in V_1 \cup V_3$.
	Consider any $e = \{v_i,v_j\} \in E$, $i<j$.
	Then:
	\begin{align*}
		\begin{rcases*}
			f_3(w^e_1) \coloneqq 2 \\
			f_3(w^e_2) \coloneqq 1 \\
			f_3(w^e_3) \coloneqq 2
		\end{rcases*}
		\text{ if } v_i,v_j \notin X \quad \text{ and } \quad
		\begin{rcases*}
			f_3(w^e_1) \coloneqq 1 \\
			f_3(w^e_2) \coloneqq 2 \\
			f_3(w^e_3) \coloneqq 1
		\end{rcases*}
		\text{ if } v_i \in X \text{ or } v_j \in X
	\end{align*}
	It is also not difficult to see that $f_3$ is a proper coloring of $(V',E_3)$.
	To see that exactly $d$ vertices change colors, 
	first note that only vertices in $V_2$ change colors.
	Moreover, 
	if $v_i$ and $v_j$ are both in the clique $X$, 
	then none of the vertices $w^e_1,w^e_2,w^e_3$ change colors.
	However, 
	if one of $v_i$ and $v_j$ is not in $X$, 
	then exactly one of those three vertices changes colors.
	Hence, 
	the number of changes to the coloring is $m-\binom{k}{2} = d$.
	
	\LD{}
	Suppose that $f_1,f_2,f_3 \colon V' \rightarrow \{1,2\}$ are proper colorings of the layers of~$\TG$ such that only~$d$~vertices change colors between any two consecutive layers.
	\Wilog, we may assume that $f_1(r_1) = 1$.
	Since the vertices in $V_3$ form a path in every layer, all of these vertices must be re-colored if any one of them is.
	Since there are $d+1$ such vertices, 
	their color cannot be changed.
	Hence, 
	we assume that for every~$t\in \{1,2,3\}$ it is the case that $f_t(r_i) = 1$ if $i$ is odd and $f_t(r_i) = 2$ if $i$ is even.
	This directly implies that for all $v\in V$,
	$f_1(u^v_i) = 1$ if $i$ is odd and $f_1(u^v_i) = 2$ if $i$ is even .
	Moreover, 
	by the same reasoning, 
	we conclude that $f_1(w^e_2) = f_2(w^e_2) = 2$ for every~$e \in E$.
	Note that $w^e_1$ and $w^e_3$ are both isolated in~$(V',E_1)$.
	Hence, 
	their colors in the first layer are irrelevant and we may assume that their color does not change between the first two layers.
	All in all, 
	it follows that the only vertices that change color between $E_1$ and $E_2$ are in~$V_1$.
	However, 
	because $u^v_1,\ldots,u^v_\ell$ form a path, 
	we conclude that if $u^v_i$ changes colors for some~$i\in\set{\ell}$, 
	then $u^v_j$ changes colors for every~$j \in \{1,\ldots,\ell\}$.
	Let $X \coloneqq \{ v \in V \mid f_1(u^v_1) \neq f_2(u^v_1)\}$.
	We also note that $\abs{X} \leq \frac{d}{\ell} = k$,
	
	It remains to show that $X$ is a clique in~$G$ and that $\abs{X} \geq k$.
	To this end, 
	first note that for any $e=\{v_i,v_j\} \in E$, 
	$i<j$, 
	it is the case that $f_2(w^e_1) = 1$ if and only if $v_i \in X$ and $f_2(w^e_3) = 1$ if and only if $v_j \in X$, while $f_2(w^e_2) = 2$.
	On the other hand, in the final layer, the edges between these vertices imply that~$f_3(w_1^e) \neq f_3(w_2^e)$ and $f_3(w_2^e)\neq f_3(w_3^e)$.
	Hence, the only way that $w_1^e,w_2^e,w_3^e$ can all keep their colors is if $v_i,v_j \in X$.
	Hence, the number of edges that do not have both endpoints in~$X$ is at most $d$.
	Therefore, $m - \binom{\abs{X}}{2} \leq d = m - \binom{k}{2}$, implying that $\abs{X} \geq k$.
	Since $\abs{X} \leq k$ as we noted above, this forces $\abs{X} = k$.
	Because the number of edges with both endpoints in $X$ is at least $m - d = \binom{k}{2} = \binom{\abs{X}}{2}$, this leads us to conclude that $X$ is a clique.
\end{proof}

\subsection{Few edges per layer}

\begin{theorem}
	\label{thm:nphard-fewedges}
	\mscTsc{} is NP-hard even for $d=1$ and restricted to temporal graphs where each layer contains just three edges and has maximum degree one.
\end{theorem}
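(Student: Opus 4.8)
The plan is to give a polynomial-time many-one reduction from \prob{Exact 1-in-3 SAT}, the \NP-complete problem already used for \cref{thm:nphard-d-fvs}; since that theorem already yields \NP-hardness for $d=1$, the genuinely new requirement is only to make every layer a three-edge matching. I would therefore revisit \cref{constr:nphard-d-fvs} and \emph{serialize} its gadgets: whenever the old construction puts several edges into a single layer, I would spread them over a short block of consecutive layers so that no vertex ever has degree larger than one, and fill any remaining edge slots of a layer with a fresh isolated edge on two new vertices (such an edge is always properly $2$-colourable and never interferes with anything else).

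The primitives that survive this serialization, and that I would build on, are: (i) an edge that is present in a whole range of consecutive layers, which \emph{freezes} the colours of its two endpoints throughout that range, because recolouring one of them would force recolouring the other in the same step, contradicting $d=1$; (ii) the special case of an edge present in exactly two consecutive layers, which, together with the disequality it imposes, hands the colour of one endpoint (complemented, whose parity I would track) to the other; and (iii) a single disequality inside one layer. The colour-frozen endpoints of long-lived edges take over the role that $u_1,u_2$ and the literal vertices $w_i,\bar w_i$ play in \cref{constr:nphard-d-fvs}: they provide stable reference colours against which the moving vertices of a clause gadget can be compared. For a clause $C_j=\ell_1\vee\ell_2\vee\ell_3$ I would install a block of constantly many three-edge matching layers reproducing the mechanism of \cref{constr:nphard-d-fvs}: three vertices that carry one colour at the start of the block and the opposite colour at its end, and therefore must flip one at a time, with an intermediate layer in which the already-flipped vertex is compared against the vertex representing the corresponding literal, so that the whole block is satisfiable with $d=1$ exactly when precisely one of $\ell_1,\ell_2,\ell_3$ is true. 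In addition the reduction must contain a device that enforces that the colour representing $x_i$ is the same in every clause block that uses $x_i$; designing such a consistency device while staying within three-edge matchings is itself part of the work.

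The forward direction should be a routine, if tedious, verification: from an exact-$1$-in-$3$ assignment one writes down the colourings block by block, recolouring at most one vertex per step, and checks each three-edge matching. The substance of the proof, and what I expect to be the main obstacle, is soundness, precisely because three-edge matchings are so weak: inside a layer one can only assert three disequalities, and the only thing one can force across a step is that some \emph{pair} does not split, so one cannot pin a vertex to an absolute colour or force a prescribed vertex to flip at a prescribed step. Hence a dishonest colouring sequence has considerable slack (at every step it may recolour one arbitrary vertex), and the correctness argument must show this slack cannot be abused: concretely, a layout lemma controlling the block lengths, the placement of the long-lived edges, and the order of the clause blocks, together with a careful case analysis of the $d=1$ dynamics inside a clause block, so that from any valid $f_1,\dots,f_\tau$ one can read off a single consistent assignment that is exact-$1$-in-$3$ on every clause. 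I would expect the bookkeeping around the one-layer ``gaps'' that the matching condition forces between two successive gadget edges sharing a vertex---arguing that a vertex cannot drift there without violating some later disequality---to be the most delicate step.
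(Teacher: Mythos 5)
There is a genuine gap: the heart of the matter is the budget-control gadget, and your proposal names it as ``the main obstacle'' and ``part of the work'' without supplying it. The paper does not re-reduce from \prob{X1-3SAT} at all; it reduces from \mscAcr{} with $d=1$ itself (\cref{thm:nphard-d-fvs}) and serializes an arbitrary such instance, presenting one original edge per layer. The key idea you are missing is a gadget on six fresh vertices forming two vertex-disjoint triangles whose edges are shown with period three ($\{u_1,u_2\},\{u'_1,u'_2\}$, then $\{u_1,u_3\},\{u'_1,u'_3\}$, then $\{u_2,u_3\},\{u'_2,u'_3\}$, and so on): the union of any three consecutive layers contains both triangles, so two gadget recolorings are forced over any two consecutive intra-block steps, hence exactly one per step. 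This consumes the entire budget $d=1$ at every step inside a block, freezing \emph{all} original vertices there, while at the block boundary the gadget edges repeat, so the gadget vertices cannot and need not move and the single recoloring is released to the original vertices --- exactly simulating one step of the $d=1$ instance. Each layer is then one original edge plus two disjoint gadget edges: three edges, maximum degree one, and correctness is inherited wholesale from \cref{thm:nphard-d-fvs} with no new clause/variable analysis.

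Without such a budget-consuming device your plan does not go through. Your filler edges on fresh vertices consume no budget, and your primitive (i) (long-lived edges freezing reference vertices) is incompatible with the three-edges-per-layer restriction: only three disjoint pairs can be protected in any one step, so the $n$ pairs $w_i,\bar w_i$ are unguarded in almost every layer, and a dishonest solution with one free recoloring per step over the linearly many serialized layers can flip $w_i$ in one step and $\bar w_i$ in the next while their edge is absent, so no well-defined, clause-consistent assignment can be read off; the cross-block consistency device you defer to future work is precisely where this bites. A direct reduction from \prob{X1-3SAT} could in principle be repaired by adding the alternating-triangle (or a similar budget-exhausting) gadget, but then one still has to redo the whole soundness analysis of \cref{constr:nphard-d-fvs} in serialized form, which is exactly the bookkeeping the paper's choice of source problem avoids.
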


\noindent
We will prove this using a reduction from \mscAcr{} with $d=1$, which is NP-hard by \cref{thm:nphard-d-fvs}.

\begin{construction}
	\label{constr:nphard-fewedges}
	\begin{figure}[t]
		\includegraphics[width=\textwidth]{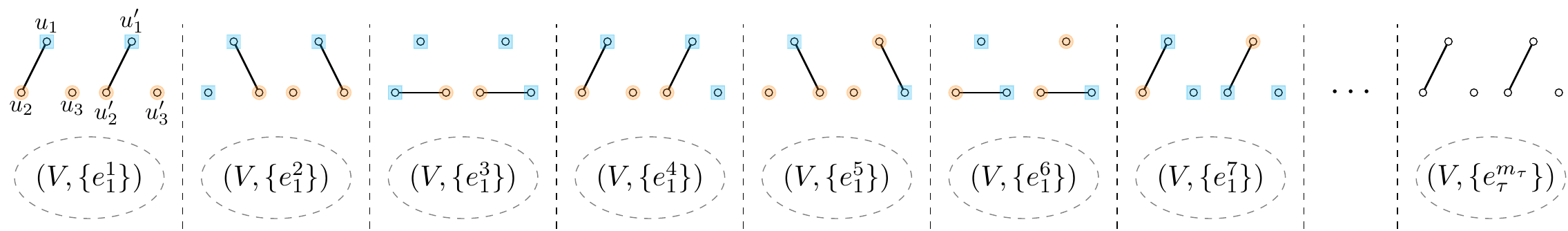}
		\caption{Illustration to \cref{constr:nphard-fewedges}.}
		\label{fig:nphard-fewedges}
	\end{figure}
	
	Let $(\TG=(V,(E_t)_{t=1}^\tau),d=1)$ be an instance for \mscTsc{}
	where for every~$t,t'\in\set{\tau}$ it holds that~$|E_t|=|E_{t'}|$,
	$|E_t|\geq 4$,
	and~$|E_t|\bmod 3=1$
	(we can guarantee this by adding a star~$K_{1,q}$, 
	$q=\binom{|V|}{2}$,
	to the underlying graph and add edges from the star to layers to fulfill the criterias).
	We will construct an instance~$(\TG',d)$ as required
	(see~\cref{fig:nphard-fewedges} for an illustration).
	The general idea is that we spread the edges of any one layer in $\TG$ to several layers in $\TG'$ by presenting the edges one at a time.
	In order to ensure that the solution does not change between the layers in~$\TG'$ corresponding to the same layer in $\TG$, 
	we additionally introduce a gadget that uses up the budget for changes to the solution.
	
	For every~$t\in\set{\tau}$,
	let $m\coloneqq \abs{E_t}$ and order the edges in~$E_t$ arbitrarily as~$e_t^1,\ldots,e_t^{m}$.
	Let~$V' \coloneqq V \uplus V^+$ with $V^+\coloneqq \{u_1,u_2,u_3,u'_1,u'_2,u'_3\}$.
	The six new vertices will be used to implement the aforementioned gadget.
  Let
	\begin{align*}
	E^+_1 \coloneqq \{\{u_1,u_2\},\{u'_1,u'_2\}\}, \quad
	E^+_2 \coloneqq\{\{u_1,u_3\},\{u'_1,u'_3\}\}, \text{ and} \quad
	E^+_3 \coloneqq\{\{u_2,u_3\},\{u'_2,u'_3\}\}.
	\end{align*}
	For every~$t\in\set{\tau}$ and~$k\in\set{m}$,
	let
	$E_t^k\coloneqq \{e_t^k\} \cup E^+_{k'}$ where $k'\coloneqq (k-1 \bmod 3) + 1$.
	Note that for every~$t\in\set{\tau-1}$,
	it holds that~$E^+_1\subseteq E_t^{m}\cap E_{t+1}^1$ 
	(recall that~$1\leq (m-1)/3\in\N$).
	Let~$\tau' \coloneqq \tau\cdot m$.
	Then, 
	the output instance is $(\TG',d)$ with
$\TG'\coloneqq(V',(E_t')_{t=1}^{\tau'})$
where~$E_{(p-1)\cdot m+k}'\ceq E_p^k$ for~$p\in\set{\tau}$ and~$k\in\set{m}$.
	\cqed
\end{construction}

\begin{proof}[Proof of \cref{thm:nphard-fewedges}]
	It is easy to see that \cref{constr:nphard-fewedges} can be computed in polynomial time and that any instance that it outputs has the properties in the statement of the theorem.
	We must still show that $(\TG,d)$ is a \yes-instance if and only if $(\TG',d)$ is.
	
	\RD{}
	Suppose that $f_1,\ldots,f_\tau$ are 2-colorings of the layers of $\TG$ such that $\delta(f_t,f_{t+1})\leq 1$ for every~$t\in\set{\tau-1}$.
	Then consider $f^k_t \colon V' \rightarrow \{1,2\}$ for $t \in \{1,\ldots,\tau\}$ and $k \in \{1,\ldots m\}$ defined by
	\begin{align*}
		f^k_t(v) \coloneqq
		\begin{cases}
			f_t(v), & \text{ if } v \in V, \\
			g_t^k(v), & \text{ if } v \in V^+,
		\end{cases}
	\end{align*}
	where $g_t^k$, the coloring pictured in \cref{fig:nphard-fewedges}, is obtained in the following manner.
	We use an arbitrary 2-coloring of $(V^+,E_1^+\cup E_2^+)$ for $g^1_1$.
	Moreover, for $t>1$ we let $g_t^1 \coloneqq g_{t-1}^{m}$.
	If $2\leq k<m$ is even, 
	then we obtain~$g^{k+1}_t$ from $g^{k}_t$ by changing the color of one of the vertices $\{u_1,u_2,u_3\}$ such that 
	$g^{k+1}_t$ properly colors~$\{u_1,u_2,u_3\}$ with respect to both $E_{k+1}^+$ and $E_{k+2}^+$.
	If $3\leq k<m$ is odd, 
	we do the same for~$\{u'_1,u'_2,u'_3\}$.
	
	We must show that this coloring is proper and that there is at most one change between any two consecutive colorings.
	No edge $e_t^k$ from $\TG$ is monochromatic because $f_1,\ldots,f_\tau$ are proper colorings by assumption.
	If $t=1$ and $k=1$ or if $k>1$, then $g^k_t$ properly colors $V^+$ by construction.
	If~$t>1$ and $k=1$, then $g_t^1 = g_{t-1}^{m}$.
	Recall that $E^+_1 \subseteq E_t^1\cap E_{t-1}^{m}$.
	Thus,
	$g_t^1$ also properly colors~$V^+$.
	It remains to show that there is at most one change between any two consecutive colorings.
	If $k<m$, then the colors of the vertices in $V$ do not change between the stages $E_t^k$ and $E_t^{k+1}$, 
	while only one of the vertices in $V^+$ changes colors by the construction of $g^k_t$.
	Between the stages $E_t^{m}$ and $E_{t+1}^{1}$ only one vertex in~$V$ changes colors by assumption, 
	while the vertices in~$V^+$ are not re-colored.
	
	\LD{}
	Suppose that $f_t^k$ for $t\in\{1,\ldots,\tau\}$ and $k \in \{1,\ldots,m\}$ are proper colorings of~$(V',E_t^k)$ such at most one vertex changes colors between $E_t^k$ and $E_t^{k+1}$ or between $E_t^{m}$ and~$E_{t+1}^1$.
	
	First, we claim that, if $k<m$, then at least one vertex in $V^+$ must change colors between~$E_t^k$ and~$E_t^{k+1}$.
	If $k < m$, consider the graph $(V,E_t^k \cup E_t^{k+1} \cup E_t^{k+2})$.
	The vertices $\{u_1,u_2,u_3\}$ and~$\{u'_1,u'_2,u'_3\}$ each induce a~$K_3$ in this graph.
	Hence, at least two changes must be made between $E_t^k$ and $E_t^{k+2}$.
	Since only one change can be made in each step, this implies that one must be made in each.
	
	This claim implies that the vertices in $V$ do change colors under $f_t^k$, except possibly between~$E_t^{m}$ and~$E_{t+1}^1$.
	We define $f_1,\ldots,f_\tau \colon V \rightarrow \{1,2\}$ by $f_t(v) \coloneqq f_t^1(v) = \ldots = f_t^{m}(v)$. Note that $f_t$ is a proper coloring of $(V,E_t)$ because $f_t^1,\ldots,f_t^{m}$ are proper colorings.
	Moreover, only one vertex changes colors between $f_t$ and $f_{t+1}$.
\end{proof}

\section{Parameterized complexity}
\label{sec:param}

In the previous section we showed that \mscTsc{} is NP-hard, 
even for constant values of $\tau$ and $d$.
In this section,
we study the parameterized complexity of~\mscTsc{}.
To begin with,
we will now show that \mscTsc{} is \fpt{} with respect to $n-d$.
This is in contrast to \prob{Multistage 2-SAT}, 
which is \W{1}-hard with respect to this parameter~\cite[Theorem~3.6]{Fluschnik2021}.

\begin{proposition}
	\label{prop:fpt-n-d}
	\mscTsc{} is \fpt{} with respect to $n-d$.
\end{proposition}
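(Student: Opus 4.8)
The plan is to reduce to the case where the number $n$ of vertices is bounded by a function of the parameter $k\coloneqq n-d$, and then to solve that case by a dynamic program over the (few) proper $2$-colorings of each layer. First I would dispose of the case $k<n/2$: here $d=n-k>\tfrac12 n$, so in particular $d\ge\tfrac12 n$ and \cref{prop:p-larged} already decides the instance in polynomial time. Hence from now on we may assume $k\ge n/2$, equivalently $n\le 2k$.

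Under this assumption I would rely on the following elementary facts. Every proper $2$-coloring of a static graph is obtained by choosing, independently for each connected component, which of the two sides of a fixed bipartition receives color $1$; consequently a layer $(V,E_t)$ with $c_t$ connected components has exactly $2^{c_t}$ proper $2$-colorings if it is bipartite and none otherwise, and all of them can be listed in time $2^{c_t}\cdot\bigO(n+\abs{E_t})$ after computing the components and testing bipartiteness. Since $n\le 2k$ and $c_t\le n$, each layer has at most $2^n\le 4^k$ proper $2$-colorings. Moreover, given two colorings $f,g\colon V\to\{1,2\}$, the value $\delta(f,g)$ is computable in $\bigO(n)$ time.

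Now I would run a left-to-right dynamic program. For $t\in\set{\tau}$, let $S_t$ be the set of proper $2$-colorings $g$ of $(V,E_t)$ that are \emph{reachable}, meaning there exist proper $2$-colorings $f_1,\dots,f_{t-1}$ of the first $t-1$ layers such that, setting $f_t\coloneqq g$, we have $\delta(f_i,f_{i+1})\le d$ for all $i\in\set{t-1}$; in particular $S_1$ is the set of all proper $2$-colorings of $(V,E_1)$. Then $S_{t+1}$ consists of exactly those proper $2$-colorings $g$ of $(V,E_{t+1})$ for which some $f\in S_t$ satisfies $\delta(f,g)\le d$, so $S_{t+1}$ can be computed from $S_t$ by enumerating the colorings of layer $t+1$ and testing all pairs $(f,g)$. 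The input is a \yes-instance if and only if $S_\tau\neq\emptyset$. Since $\abs{S_t}\le 4^k$ for every $t$, a transition costs $\bigO(16^k\cdot n)+\bigO(n+\abs{E_t})$, for a total running time of $\bigO(16^k\cdot n\cdot\tau + m)$, which is \fpt{} with respect to $k=n-d$.

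The only point that really needs care is the case distinction: the enumeration/DP branch must be entered only when $n=\bigO(k)$, and it is precisely \cref{prop:p-larged} that guarantees the complementary regime $d>\tfrac12 n$ (where $n$ can be arbitrarily large compared to $k$) is already tractable. Note also that the naive brute force of \cref{obs:bruteforce} does not suffice, since its $2^{\tau n}$ running time carries $\tau$ in the exponent while $\tau$ is unbounded in the parameter; keeping only a single ``frontier'' set $S_t$ of size at most $4^k$ at a time is what makes the algorithm fixed-parameter tractable.
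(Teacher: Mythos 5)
Your proposal is correct and follows essentially the same route as the paper: the identical case distinction handing $d\geq\frac{n}{2}$ to \cref{prop:p-larged} and otherwise observing that $n$ is bounded by $2(n-d)$. The only difference is that the paper then simply cites fixed-parameter tractability in $n$ from \cref{cor:fromms2sat}, whereas you make that part explicit with a per-layer enumeration of the at most $2^n\leq 4^{n-d}$ proper $2$-colorings and a reachability DP (in the spirit of \cref{prop:fpt-cc}), which is a valid, self-contained substitute.
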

\begin{proof}
	If $d \geq \frac{n}{2}$, the problem can be solved in polynomial time (see \cref{prop:p-larged}).
	If $d < \frac{n}{2}$, then it follows that $n < 2(n-d)$.
	Hence, the fixed-parameter tractability of \mscAcr{} with respect to $n$ (see \cref{cor:fromms2sat}) implies fixed-parameter tractability with respect to $n-d$.
\end{proof}

Additionally, we note the following kernelization lower bound.
\begin{proposition}
	\label{prop:no-pk}
	Unless~\NPincoNPslashpoly,
	\mscTsc{} admits no problem kernel 
of size polynomial in the number~$n$ of vertices.
\end{proposition}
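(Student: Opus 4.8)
The plan is to rule out polynomial kernels by an OR-cross-composition (equivalently, a composition argument in the classical sense of Bodlaender, Downey, Fortnow, and Hermelin) into \mscAcr{} parameterized by~$n$. I would compose many instances of an \NP-hard source problem whose instances all have the same number of vertices; the most convenient choice is \prob{Edge Bipartization} restricted to $n$-vertex graphs, which is \NP-hard (as used in \cref{thm:nphard-smalltau} via \cref{constr:maxcut-red}), together with the polynomial equivalence relation that puts two instances $(G,k)$ and $(G',k')$ in the same class exactly when $|V(G)|=|V(G')|$ and $k=k'$. Given $p$ instances $(G_1,k),\dots,(G_p,k)$ sharing this signature, the target instance must be a single temporal graph $\TG$ whose number of vertices is bounded by a polynomial in $n+k+\log p$ (crucially, \emph{independent of $p$}), such that $(\TG,d)$ is a \yes-instance iff at least one $(G_i,k)$ is.

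First I would apply \cref{constr:maxcut-red} to each $(G_i,k)$, obtaining two-layer temporal graphs $\TG^{(i)}=(V'_i,E^{(i)}_1,E^{(i)}_2)$ with $d=k$; all of them have the same vertex set $V'$ up to renaming, since the $G_i$ share vertex count and the construction subdivides each possible edge of $K_n$ (pad each $G_i$ to use all $\binom n2$ potential edges, marking absent edges as "subdivided but not present", so the vertex sets literally coincide). Next I would concatenate these temporal graphs into one long temporal graph: place the two layers of $\TG^{(1)}$, then $\TG^{(2)}$, and so on, inserting between consecutive blocks a short buffer of "reset" layers that forces the colouring to return to a canonical state, so that the choices made in block $i$ are independent of those in block $i+1$. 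Because a feasible multistage colouring on the whole concatenation must in particular be feasible on each block, and each block is feasible iff the corresponding \prob{Edge Bipartization} instance is a \yes-instance, concatenation alone would give an AND-composition; to get the OR I instead use a selector gadget: add $\O(\log p)$ selector vertices whose colours in the first layer encode a binary index $i^\star\in\set p$, and wire the selector so that only the layers belonging to block $i^\star$ are "active" (carry their real edges) while all other blocks have their edges suppressed (made trivially 2-colourable) by the selector's state. Then a feasible colouring exists iff the single selected block is feasible, i.e. iff some $(G_i,k)$ is a \yes-instance, which is exactly an OR-composition. The parameter $n(\TG)=|V'|+\O(\log p)=\poly(n)+\O(\log p)$ is polynomially bounded, as required, and the construction is polynomial-time. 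By the OR-composition framework, this precludes a polynomial kernel in $n$ unless \NPincoNPslashpoly.

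The main obstacle is the wiring of the selector gadget so that it genuinely \emph{deactivates} the non-selected blocks without itself consuming more than $d$ colour changes per step and without introducing odd cycles in any active layer. Concretely I would have each selector vertex $s_b$ (bit $b$ of the index) attached, in the layers belonging to block $i$, to the edge-subdivision vertices of $\TG^{(i)}$ in a pattern depending on bit $b$ of $i$; when $s_b$'s colour disagrees with bit $b$ of $i$, these attachments should force those vertices into a fixed colour pattern that makes $E^{(i)}_1,E^{(i)}_2$ vacuously satisfiable, whereas when all bits agree the attachments are inert and the original equivalence of \cref{lemma:maxcut-red} applies. Getting this to work bidirectionally — active block feasible $\Rightarrow$ global feasible, and global feasible $\Rightarrow$ the selected block feasible — is the delicate part, and one must also check that forbidding selector colours to change (using a path of $d+1$ copies, as in \cref{constr:red-from-clique}) keeps the per-step budget intact across the buffer layers. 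Alternatively, if a clean gadget proves elusive, I would fall back on cross-composing from a problem that is already "OR-friendly" by virtue of a disjoint-union trick, using the $\tau$ dimension rather than the vertex dimension to host the $p$ instances, since the parameter here is $n$ and $\tau$ may grow with $p$; that variant is likely the cleanest route and I would present it as the primary proof if the selector wiring does not simplify.
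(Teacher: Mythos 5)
There is a genuine gap, and ironically you walk right past the simple fix. You correctly observe that concatenating the blocks along the time dimension ``alone would give an AND-composition'' and then discard that option, investing everything in a selector gadget that turns the AND into an OR. But AND-compositions are just as good for excluding polynomial kernels: by Drucker's theorem, an AND-composition into a parameterized problem rules out a polynomial kernel under \NPincoNPslashpoly{} exactly as an OR-composition does. The paper's proof is precisely the construction you dismissed: it AND-composes $p$ instances of \mscAcr{} itself with $d=1$ (NP-hard by \cref{thm:nphard-d-fvs}), on a shared vertex set $V$, by concatenating their layer sequences and inserting $n$ empty layers between consecutive blocks; with $d=1$ the empty layers let any coloring of the last layer of one block migrate, one vertex per step, to any coloring of the first layer of the next block, so the combined instance is a \yes-instance iff every block is, and the parameter is $n$, independent of $p$. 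No selector, no deactivation, no budget bookkeeping beyond this.

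The gap in what you actually propose is that the selector gadget --- the heart of your OR-composition --- is never constructed, and the mechanism you sketch is doubtful on its face: in a $2$-coloring instance the edge set is fixed at construction time, and an edge $\{s,v\}$ constrains $v$ regardless of how $s$ is colored, so there is no way for the selector vertices' colors to ``suppress'' the edges of a non-selected block or make its layers ``vacuously satisfiable.'' The difficulty in a block produced by \cref{constr:maxcut-red} is not properness (each layer there is layerwise bipartite) but the budget $d=k$ between its two layers, and a coloring of $\bigO(\log p)$ selector vertices cannot conditionally relax that budget or pre-align the subdivision vertices, since any forcing would have to be wired identically into every block. Your fallback (``host the $p$ instances along the $\tau$ dimension'') is exactly the concatenation you already identified as an AND-composition, so it does not rescue the OR either. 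In short: either cite the AND-composition framework and submit the concatenation argument essentially as you stated it in one sentence, or you owe a complete, verified selector construction --- and as sketched there is good reason to believe none exists along these lines.
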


\begin{proof}
	We give an AND-composition~\cite{BodlaenderDFH09} from \mscAcr{} into~\mscAcr{} parameterized by~$n$,
	which then yields the theorem's statement~\cite{Drucker15}.
	Let~$\I_1=(\TG^1,d),\dots,\I_p=(\TG^p,d)$ be~$p$ instances of~\mscAcr{} with~$d=1$.
	Note that we can assume~\cite{BodlaenderJK14}
	that~$V$ denotes the vertex set
	and $(E_1^q,\dots,E_\tau^q)$ denotes the edge sequence for each~$\TG^q$,
	$q\in\set{p}$.
	Let~$n\ceq |V|$.
	We build the temporal graph~$\TG$ with vertex set~$V$ and sequence
	\[ (E_1^1,\dots,E_\tau^1,E_1^{1,2},\dots,E_n^{1,2},E_1^2,\dots,E_\tau^2,E_1^{2,3},\dots,E_n^{2,3},E_1^3,\dots,\dots,\dots,E_\tau^p), \]
	where~$E_i^{q,q+1}\ceq \emptyset$ for every~$q\in\set{p-1}$ and~$i\in\set{n}$.
	We claim that~$\I\ceq (\TG,d)$ is a \yes-instance
	if and only if
	$\I_q$ is a \yes-instance for every~$q\in\set{p}$. 
	Since the forward direction is immediate,
	we only discuss the backward direction in the following.
	
	For each~$q\in\set{p}$,
	let~$f_1^q,\dots,f_\tau^q\colon V\to\{1,2\}$ be proper colorings of~$(V,E_1^q),\dots,(V,E_\tau^q)$
	such that~$\delta(f_i^q,f_{i+1}^q)\leq d$ for every~$i\in\set{\tau-1}$.
	Note that for every~$q\in\set{p-1}$,
	since there are~$n$ empty layers between~$(V,E_\tau^q)$ and~$(V,E_1^{q+1})$,
	we can get from~$f_\tau^q$ to~$f_1^{q+1}$ in at most~$n$ steps
	with having consecutive coloring not differ in more than one vertex.
	This way,
	we can obtain a solution to~$\I$,
	witnessing that~$\I$ is a \yes-instance.
\end{proof}

In the following, we will consider the parameterized complexity of \mscTsc{} with respect to structural graph parameters.
Research on the parameterized complexity of multistage problems has thus far mostly focused on the parameters that are given as part of the input such as $d$ or~$\tau$.
Although Fluschnik~et~al.~\cite{Fluschnik2020a} considered the vertex cover number and maximum degree of the underlying graph, there has been no systematic study of multistage problems concerned with structural parameters of the input temporal graph.
We seek to initiate this line of research in the following.
It follows the call by Fellows~et~al.~\cite{Fellows2009,Fellows2013} to investigate problems' ``parameter ecology'' in order to fully understand what makes them computationally hard.
We will begin with a short discussion of how graph parameters can be applied to multistage problems.
This question is closely related to issues that arise when applying such parameters to temporal graph problems (see \cite{Fluschnik2020} and \cite[Sect.~2.4]{Molter2020}).

A \emph{(temporal) graph parameter} $p$ is a function that maps any (temporal) graph $G$ to a nonnegative integer $p(G)$.
We will consider three ways of transferring graph parameters to temporal graphs.
If $p$ is a graph parameter, $\TG=(V,(E_t)_{t=1}^\tau)$ is a temporal graph, $\TG_t \coloneqq (V,E_t)$ its $t$-th layer, and $\TG_U \coloneqq (V,\bigcup_{t=1}^\tau E_t)$ its underlying graph, then we define:
\begin{align*}
	p_{\maxp}(\TG) & \coloneqq \max_{t \in \{1,\ldots,\tau\}} p(\TG_t), && \text{(\emph{maximum} parameterization)} \\
	p_\sump(\TG) & \coloneqq \sum_{t=1} ^\tau \max\{1,p(\TG_t)\} \text{, and} && \text{(\emph{sum} parameterization)} \\
	p_{\undp}(\TG) & \coloneqq p(\TG_U) + \tau. && \text{(\emph{underlying graph} parameterization)}
\end{align*}

We will briefly explain our choice to define these parameters in this manner and describe the relationship between the parameters.
For any two (temporal) graph parameters~$p_1$ and $p_2$, the first parameter~$p_1$ \emph{is larger than}~$p_2$, written~$p_1 \succeq p_2$ or~$p_2 \preceq p_1$, if there is a function $f\colon \Nzero \rightarrow \Nzero$ such that~$f(p_1(G)) \geq p_2(G)$ for all (temporal) graphs $G$.
Such relationships between parameters are useful because, 
if $p_1\succeq p_2$, 
then any problem that is fixed-parameter tractable with respect to~$p_2$ is also fixed-parameter tractable with respect to~$p_1$.
The $\succeq$-relation between static graph parameters is well-understood~\cite{Jansen2013,Sasak2010,Schroder2019,Sorge2019,Vatshelle2012}.
We will use these relationships implicitly and explicitly throughout this article.
Many of the results claimed in \cref{fig:param-hier-results} will not be explicitly proved,
because they are immediate consequences of other results and the $\succeq$-relation.
The relationships under $\succeq$ between selected graph parameters are pictured in that figure.

When it comes to transferring graph parameters from the static to the multistage setting, 
the parameters~$p_{\maxp}$ and~$p_\undp$ simply apply the graph parameter to the individual layers and to the underlying graph, 
respectively, 
and were used in a similar manner by Fluschnik et~al.~\cite{Fluschnik2020} and Molter~\cite{Molter2020}.
The reasoning behind the definition of the sum parameterization may not be quite as obvious.
It seems natural to consider the sum of the parameters over all layers.
The issue with this is that it may not preserve the $\succeq$-relation.
For example, 
it is well-known that feedback vertex number is a larger parameter 
(in the sense of~$\succeq$) 
than treewidth.
However, 
consider a temporal graph where each layer is a forest.
Then, the sum of the feedback vertex numbers of the layers is~$0$, 
but the sum of the layers' treewidths is~$\tau$.
Hence, 
treewidth is no longer bounded from above by the feedback vertex number.
Our definition gets around this problem.
In fact, 
all three aforementioned ways of transferring parameters from the static to the multistage setting preserve the $\succeq$-relation:

\begin{proposition}
	\label{prop:param-preservation}
	Let $p$ and $q$ be graph parameters with $p \succeq q$.
	Then, $p_{\alpha} \succeq q_{\alpha}$ for any $\alpha \in \{\maxp, \sump, \undp\}$.
\end{proposition}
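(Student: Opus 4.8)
The plan is to lean on the standard fact that the witnessing function in the $\succeq$-relation may be taken monotone. Since $p \succeq q$, there is $f \colon \Nzero \to \Nzero$ with $f(p(G)) \geq q(G)$ for every static graph $G$; replacing $f$ by the function $x \mapsto \max_{y \leq x} f(y)$ leaves this inequality intact and makes $f$ non-decreasing, so I would assume from the outset that $f$ is non-decreasing. I would then handle the three choices of $\alpha$ separately, in increasing order of difficulty: $\maxp$, then $\undp$, then $\sump$.

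For $\alpha = \maxp$ the argument is immediate: for every temporal graph $\TG$ and every layer $\TG_t$ we have $q(\TG_t) \leq f(p(\TG_t)) \leq f(p_{\maxp}(\TG))$ by monotonicity, so taking the maximum over $t$ gives $q_{\maxp}(\TG) \leq f(p_{\maxp}(\TG))$; thus $g = f$ witnesses $p_{\maxp} \succeq q_{\maxp}$. For $\alpha = \undp$ I would write $q_{\undp}(\TG) = q(\TG_U) + \tau \leq f(p(\TG_U)) + \tau$ and then use that both $p(\TG_U) \leq p_{\undp}(\TG)$ and $\tau \leq p_{\undp}(\TG)$, which gives $q_{\undp}(\TG) \leq f(p_{\undp}(\TG)) + p_{\undp}(\TG)$; hence $g(x) = f(x) + x$ works.

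The case $\alpha = \sump$ is where the actual work lies, because of the $\max\{1,\cdot\}$ clipping and because the sum ranges over $\tau$ terms. I would set $h(x) \coloneqq \max\{1, f(x)\}$ (still non-decreasing) and $n_t \coloneqq \max\{1, p(\TG_t)\}$, and observe $\max\{1, q(\TG_t)\} \leq \max\{1, f(p(\TG_t))\} \leq h(n_t)$ using $p(\TG_t) \leq n_t$ and monotonicity. Since each $n_t \geq 1$, we get both $\tau \leq \sum_{t} n_t = p_{\sump}(\TG)$ and $n_t \leq p_{\sump}(\TG)$ for every $t$; therefore $q_{\sump}(\TG) = \sum_{t} \max\{1, q(\TG_t)\} \leq \sum_{t} h(n_t) \leq \tau \cdot h(p_{\sump}(\TG)) \leq p_{\sump}(\TG) \cdot h(p_{\sump}(\TG))$, so $g(x) = x \cdot h(x)$ witnesses $p_{\sump} \succeq q_{\sump}$. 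The main (and really only) obstacle is the bookkeeping in this last case: checking that the clipping is cleanly absorbed into $h$ and that the factor $\tau$ arising from summing over layers is itself controlled by $p_{\sump}$ — which is exactly the property the $\max\{1,\cdot\}$ in the definition of $p_{\sump}$ was designed to provide.
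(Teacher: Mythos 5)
Your proof is correct, and it differs from the paper's in how the witnessing functions are built. The paper, after normalizing $f$ to be monotone \emph{and} to satisfy $f(a)\geq a$, introduces an auxiliary function $g(n)\coloneqq\max\bigl\{\sum_{i=1}^r f(n_i) \mid (n_1,\ldots,n_r)\in\Part(n)\bigr\}$ defined via integer partitions, and uses it uniformly for both the sum and the underlying-graph parameterizations: the clipped layer values $\max\{1,p(\TG_t)\}$ form a partition of $p_\sump(\TG)$ (this is where the $\max\{1,\cdot\}$ enters for the paper), and $p(\TG_U)+\tau$ is treated as a two-part partition, with $f(\tau)\geq\tau$ supplying the second summand. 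You instead give explicit closed-form witnesses, $g(x)=f(x)+x$ for $\undp$ and $g(x)=x\cdot\max\{1,f(x)\}$ for $\sump$, exploiting the clipping in a slightly different way: every summand being at least $1$ bounds the number of layers $\tau$ by $p_\sump(\TG)$ itself, so term-wise bounding plus multiplying by the number of terms suffices. Your route is more elementary (no partition machinery, no need for the extra normalization $f(a)\geq a$) and in effect makes explicit an upper bound on the paper's partition-maximum function; the paper's construction buys a single auxiliary function covering both cases and a somewhat tighter (though irrelevant for $\succeq$) bound. Both arguments hinge on the same design decision the proposition is meant to vindicate, namely that the sum parameterization clips each layer's value at $1$.
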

\begin{proof}
	Let $f \colon \Nzero \rightarrow \Nzero$ be a function such that $f(p(G)) \geq q(G)$ for all static graphs $G$.
	Without loss of generality, 
	we may assume that 
	\begin{inparaenum}[(i)]
    \item $f$~is monotonically increasing, 
          that is,
          $f(a) \geq f(b)$ if~$a\geq b$,
          and
    \item $f(a)\geq a$ for every~$a\in\Nzero$ 
    (consider~$f'(a)\ceq a+\max_{b\in\set{a}}f(b)$, $a\in\Nzero$, for instance).
	\end{inparaenum}
	
	Let $\TG$ be an arbitrary temporal graph.
	Then:
	\begin{align*}
		f(p_{\maxp}(\TG)) 
		&= f\left(\max_{t \in \{1,\ldots,\tau\}} p(\TG_t)\right) 
		\stackrel{\text{(i)}}{=} \max_{t \in \{1,\ldots,\tau\}} f(p(\TG_t)) 
		\geq \max_{t \in \{1,\ldots,\tau\}} q(\TG_t) 
		= q_{\maxp}(\TG)
	\end{align*}
	For $n \in \N$, let $\Part(n)$ denote the set of all partitions of $n$, 
	that is all possible ways of writing~$n$ as $n = n_1 + n_2 + \ldots + n_r$ for $r\geq 1$ and $n_i \in \N$.
	Let $g\colon \Nzero \rightarrow \Nzero$ with:
	\begin{align*}
		g(0)\coloneqq 0, \quad g(n) & \coloneqq \max \left\{ \sum_{i=1}^r f(n_i) \Bigm\vert (n_1,\ldots,n_r) \in \Part(n) \right\} \text{ if } n>0.
	\end{align*}
	The maximum is well-defined, because $\Part(n)$ is finite.
For any temporal graph~$\TG$,
  we have:
	\begin{align*}
		g(p_{\sump}(\TG)) &= g \left( \sum_{t=1}^\tau \max\{1,p(\TG_t)\}  \right) \geq \sum_{t=1}^\tau f(\max\{1,p(\TG_t)\}) \stackrel{\text{(i)}}{=} \sum_{t=1}^\tau \max\{f(1),f(p(\TG_t))\} \\
		&\stackrel{\text{(ii)}}{\geq} \sum_{t=1}^\tau \max\{1,q(\TG_t)\} = q_{\sump}(\TG).
	\end{align*}
	(Note that the first inequality relies on the fact that every term in the sum is at least $1$, since a partition can only be composed of positive summands.
	Therefore, this argument would not apply, if we defined the sum parameterization as simply the sum over the parameters of the individual layers.) 
	
Lastly,
  for any temporal graph~$\TG$,
  we have:
	\begin{align*}
		g(p_{\undp}(\TG)) &= g(p(\TG_U) + \tau) \geq f(p(\TG_u)) + f(\tau) \stackrel{\text{(ii)}}{\geq} q(\TG_U) + \tau = q_\undp(\TG).
		\qedhere
	\end{align*}
\end{proof}

\noindent
Finally, we will briefly consider the relationship between $p_{\maxp}$, $p_{\sump}$, and $p_\undp$.
We will say that a graph parameter $p$ is \emph{monotonically increasing} if for any two static graphs $G=(V,E)$ and~$G'=(V,E')$ with the same vertex set, it is the case that $E\subseteq E'$ implies $p(G) \leq p(G')$.
Conversely, it is \emph{monotonically decreasing} if $E\subseteq E'$ implies $p(G) \geq p(G')$.

\begin{proposition}
	\label{prop:param-relationships}
	Let $p$ be a graph parameter.
	Then:
	\begin{enumerate}[(i)]
		\item $p_{\maxp} \preceq p_{\sump}$,
		\item $p_{\sump} \preceq p_{\undp}$, if $p$ is monotonically increasing, and
		\item $p_{\sump} \succeq p_{\undp}$, if $p$ is monotonically decreasing.
	\end{enumerate}
\end{proposition}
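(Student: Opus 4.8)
The plan is to prove each of the three items by exhibiting an appropriate function $f \colon \Nzero \to \Nzero$ witnessing the $\succeq$-relation. For item~(i), I want to bound $q_\sump$ in terms of $q_\maxp$ where here $q=p$; write $M \coloneqq p_\maxp(\TG) = \max_t p(\TG_t)$ and observe that $p_\sump(\TG) = \sum_{t=1}^\tau \max\{1,p(\TG_t)\} \leq \tau \cdot \max\{1,M\}$. The obstacle is that $\tau$ is not bounded by $M$ in general, so a single function of $M$ cannot dominate $p_\sump$. The resolution is the standard trick used elsewhere in temporal-parameter arguments: note that $p_\undp$ (hence every reasonable parameter we track) already includes $\tau$ as an additive term, but here we only have $p_\maxp$ at our disposal. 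Actually the clean statement is that $p_\maxp \preceq p_\sump$ holds \emph{unconditionally} because each summand $\max\{1,p(\TG_t)\}$ is at least $\max\{1,p(\TG_{t^\ast})\} \geq p(\TG_{t^\ast}) = p_\maxp(\TG)$ for the layer $t^\ast$ attaining the maximum, so $p_\sump(\TG) \geq p_\maxp(\TG)$ directly, and the identity function $f(x)=x$ works. This is the direction that is easy; the converse direction (bounding $p_\sump$ by $p_\maxp$) is \emph{false} in general and is not claimed.

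For item~(ii), assume $p$ is monotonically increasing. Then each layer $\TG_t = (V,E_t)$ satisfies $E_t \subseteq \bigcup_{s=1}^\tau E_s$, so $p(\TG_t) \leq p(\TG_U)$; consequently $\max\{1,p(\TG_t)\} \leq \max\{1,p(\TG_U)\}$ for every $t$, and summing over the $\tau$ layers gives $p_\sump(\TG) \leq \tau \cdot \max\{1,p(\TG_U)\}$. Since $p_\undp(\TG) = p(\TG_U) + \tau$, I need a function $f$ with $f(p(\TG_U)+\tau) \geq \tau \cdot \max\{1,p(\TG_U)\}$; taking $f(x) \coloneqq x^2$ (or $f(x) = \binom{x}{2}$, or any super-multiplicative bound) suffices, because if $a = p(\TG_U)$ and $b = \tau$ then $(a+b)^2 \geq 4ab \geq b\max\{1,a\}$ whenever $b \geq 1$ and $a \geq 1$, and the $a=0$ case gives $(b)^2 \geq b \cdot 1 = b$. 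I should handle the degenerate cases ($\tau=0$ is impossible since $\tau \geq 1$; $p(\TG_U)=0$) explicitly but briefly.

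For item~(iii), assume $p$ is monotonically decreasing. Now the inequalities flip: $E_t \subseteq \bigcup_s E_s$ gives $p(\TG_t) \geq p(\TG_U)$ for every $t$, hence $\max\{1,p(\TG_t)\} \geq \max\{1,p(\TG_U)\}$, and summing yields $p_\sump(\TG) \geq \tau \cdot \max\{1,p(\TG_U)\} \geq \tau + p(\TG_U) - 1$ (using $\tau \cdot c \geq \tau + c - 1$ for integers $\tau,c \geq 1$, which follows from $(\tau-1)(c-1)\geq 0$). Since $p_\undp(\TG) = p(\TG_U) + \tau \leq p_\sump(\TG) + 1$, the identity function (or $f(x)=x+1$) witnesses $p_\sump \succeq p_\undp$. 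The main thing to be careful about in all three parts is the $\max\{1,\cdot\}$ flooring in the definition of $p_\sump$ — it is exactly what makes (i) and the lower bounds in (iii) go through cleanly, and it means I never divide by a possibly-zero parameter value. I expect item~(ii) to require the most care, since it is the only one where I genuinely need a nonlinear witnessing function and must check the boundary cases; (i) and (iii) are essentially one-line observations once the $\max\{1,\cdot\}$ is exploited.
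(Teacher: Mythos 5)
Your proposal is correct and follows essentially the same route as the paper: item (ii) via the bound $p_{\sump}(\TG)\leq \tau\cdot\max\{1,p(\TG_U)\}\leq(\tau+p(\TG_U))^2$ with the witness $f(x)=x^2$, and item (iii) via the reversed layerwise inequality (your uniform use of $\tau c\geq \tau+c-1$ with $f(x)=x+1$ just avoids the paper's small case distinction). One phrasing slip in (i): it is not true that \emph{each} summand is at least $\max\{1,p(\TG_{t^\ast})\}$; what you need (and clearly intend) is only that the sum dominates the single summand for the maximizing layer $t^\ast$, which already gives $p_{\sump}\geq p_{\maxp}$.
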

\begin{proof}
	\begin{enumerate}[(i)]
		\item Obvious.
		\item Let $\TG$ be a temporal graph.
		Note that since $\TG_t \subseteq \TG_U$, 
		it follows that~$p(\TG_t) \leq p(\TG_U)$ for all~$t\in\{1,\ldots,\tau\}$.
		Hence:
		\begin{align*}
			p_{\sump} (\TG) & = \sum_{t=1}^\tau \max\{1,p(\TG_t)\} \leq \tau + \sum_{t=1}^\tau p(\TG_t) \leq \tau + \tau \cdot p(\TG_U) \leq (\tau + p(\TG_U))^2 = p_\undp(\TG)^2.
		\end{align*}
		\item Let $\TG$ be a temporal graph.
		Note that since $\TG_t \subseteq \TG_U$, 
		it follows that~$p(\TG_t) \geq p(\TG_U)$ for all~$t\in\{1,\ldots,\tau\}$.
		If $\tau = 1$ or $p(\TG_U) \leq 1$, the claim is obvious.
		Otherwise, we have that:
		\begin{align*}
			p_{\sump} (\TG) = \sum_{t=1}^\tau \max\{1,p(\TG_t)\} \geq \sum_{t=1}^\tau \max\{1,p(\TG_U)\} \geq \sum_{t=1}^\tau p(\TG_U) = \tau \cdot p(\TG_U) \geq p_\undp(\TG).
		\end{align*}
	\end{enumerate}
\end{proof}

\noindent
We will now investigate the problem's parameterized complexity with respect to the three types of parameterizations.
\cref{fig:param-hier-results} gives an overview of our results and of the abbreviations we use for the parameters.
Our choice of parameters is partly motivated by Sorge and Weller's compendium~\cite{Sorge2019} on graph parameters, 
but we limit our attention to those that are most interesting in the context of \mscAcr{}.
For full definitions of the parameters, 
we refer the reader to Sorge and Weller's manuscript~\cite{Sorge2019} or \cref{sec:paramzoo} in the appendix.

\subsection{Underlying graph parameterization}

\begin{lemma}
	If $\TG=(V,(E_t)_{t=1}^\tau)$ is a temporal graph and every layer $\TG_t=(V,E_t)$ of $\TG$ is bipartite for $t \in \{1,\ldots,\tau\}$, then~$\is_{\undp}(\TG) \geq  2^{-\tau} \abs{V} $.
\end{lemma}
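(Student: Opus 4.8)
The plan is to exhibit a large independent set in the underlying graph $\TG_U$ by a pigeonhole argument on \emph{color signatures}. Since every layer $\TG_t$ is bipartite, I would first fix, for each $t\in\set{\tau}$, a proper $2$-coloring $c_t\colon V\to\{1,2\}$ of $\TG_t$. For a vertex $v\in V$ define its signature $\sigma(v)\coloneqq(c_1(v),c_2(v),\dots,c_\tau(v))\in\{1,2\}^\tau$. There are exactly $2^\tau$ possible signatures, so by the pigeonhole principle there is a signature $s\in\{1,2\}^\tau$ whose preimage $S\coloneqq\{v\in V\mid\sigma(v)=s\}$ has size $\abs{S}\geq\abs{V}/2^\tau=2^{-\tau}\abs{V}$.

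The key step is to argue that $S$ is an independent set in $\TG_U$. Suppose, toward a contradiction, that $u,v\in S$ with $u\neq v$ and $\{u,v\}\in E_t$ for some $t\in\set{\tau}$. Since $c_t$ is a proper $2$-coloring of $\TG_t$, this forces $c_t(u)\neq c_t(v)$; but $u$ and $v$ share the signature $s$, so in particular their $t$-th coordinates coincide, a contradiction. Hence no two vertices of $S$ are adjacent in any layer, and since the edge set of $\TG_U$ is $\bigcup_{t=1}^\tau E_t$, the set $S$ is independent in $\TG_U$.

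Consequently $\is(\TG_U)\geq\abs{S}\geq 2^{-\tau}\abs{V}$, and therefore $\is_{\undp}(\TG)=\is(\TG_U)+\tau\geq 2^{-\tau}\abs{V}$, as claimed. I do not expect a genuine obstacle here: the only point that needs care is the verification that a common signature forces independence, which is immediate from the properness of each layer's coloring. One could also remark that the additive $+\tau$ in the definition of $\is_{\undp}$ is not even needed for this particular bound, but keeping it matches the parameterization used in the rest of the paper.
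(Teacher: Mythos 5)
Your proof is correct. The route differs from the paper's in presentation: the paper argues by induction on $\tau$, at each step restricting to the independent set $X$ obtained from the first $\tau-1$ layers, observing that the last layer induced on $X$ is still bipartite, and taking the larger color class within $X$; you instead fix a proper $2$-coloring of every full layer up front and apply a single pigeonhole argument to the $2^\tau$ color signatures, checking directly that a signature class is independent in $\TG_U$ because adjacency in any layer $t$ forces disagreement in coordinate $t$. The two arguments encode the same underlying idea (intersecting color classes across layers) and give the identical bound $2^{-\tau}\abs{V}$, but yours is non-inductive and exhibits the independent set explicitly as one signature class, whereas the paper's induction is adaptive --- it always keeps the larger class at each step, which can only help in concrete instances, and it only needs each layer to be $2$-colored on the surviving vertex set rather than on all of $V$. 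Your closing remark that the additive $+\tau$ in $\is_{\undp}$ is not needed for the inequality is also accurate.
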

\begin{proof}
	(By induction on $\tau$.)
	If $\tau=1$, then $\TG_U$~is bipartite and the larger color class in any 2\nobreakdash-coloring of~$\TG_U$ forms an independent set containing at least $\frac{1}{2} \abs{V}$ vertices.
	Suppose the claim holds for $\tau-1$.
	Then, the underlying graph of $\TG'=(V,(E_t)_{t=1}^{\tau-1})$ contains an independent set~$X\subseteq V$ of size at least~$2^{-(\tau - 1)} \abs{V}$.
	The graph $(X,\binom{X}{2} \cap E_\tau)$ is bipartite since it is a subgraph of $(V,E_\tau)$.
	Hence, it contains an independent set~$Y$ of size at least $\frac{1}{2}\abs{X} \geq 2^{-\tau} \abs{V}$.
	Then, $Y$ is also an independent set in~$\TG_U$.
\end{proof}

\begin{proposition}
  \label{prop:fpt-is}
	\mscTsc{} is fixed-parameter tractable with respect to $\is_\undp$.
\end{proposition}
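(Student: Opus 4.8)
The plan is to observe that, once the only trivial obstruction is removed, a bound on $\is_\undp$ already bounds the size of the entire instance, so the problem is its own kernel. First I would check in linear time per layer whether every layer $\TG_t=(V,E_t)$ is bipartite (equivalently, odd-cycle-free). If some layer is not bipartite, then as noted in \cref{sec:intro} the instance is a \no-instance no matter the value of $d$, so we answer \no. Otherwise every layer is bipartite and the preceding lemma applies, giving $\is(\TG_U)\geq 2^{-\tau}\abs{V}$, hence
\[
  \abs{V}\;\leq\;2^{\tau}\cdot\is(\TG_U)\;\leq\;2^{\is_\undp(\TG)}\cdot\is_\undp(\TG).
\]
Thus the number $n=\abs{V}$ of vertices is bounded by a computable function of the parameter; moreover $\tau\leq\is_\undp(\TG)$ by definition, and the number $m$ of time edges is at most $\tau\binom{n}{2}$, so $m$ too is bounded by a function of $\is_\undp(\TG)$. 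Consequently the whole remaining instance has size bounded in terms of $\is_\undp(\TG)$ alone.

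At that point I would simply invoke the brute-force algorithm of \cref{obs:bruteforce}, which decides the instance in time $\bigO(2^{\tau n}\cdot m)$. Plugging in $\tau\leq\is_\undp(\TG)$, $n\leq 2^{\is_\undp(\TG)}\is_\undp(\TG)$, and $m\leq\tau n^2$ shows that this quantity is bounded by a computable function $g$ of $\is_\undp(\TG)$. Together with the linear-time bipartiteness checks, the total running time is $g(\is_\undp(\TG))+\bigO(\poly)$ in the input size, which is the required form for fixed-parameter tractability. (Alternatively, one can note that $n$ being bounded by a function of the parameter reduces to fixed-parameter tractability with respect to $n$, which holds by \cref{cor:fromms2sat}.)

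I do not anticipate a serious obstacle: essentially all of the content sits in the preceding lemma, which converts a bound on the independence number of the underlying graph together with a bound on the lifetime into a bound on $\abs{V}$. The one point requiring a little care is the order of the steps: the lemma's hypothesis is that \emph{every} layer is bipartite, which is exactly what fails in the trivial \no-case, so the non-bipartite-layer check must be performed first, and in that case the answer is immediate.
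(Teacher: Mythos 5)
Your proposal is correct and follows essentially the same route as the paper: reject if some layer is non-bipartite, use the preceding lemma to bound $\abs{V}\leq 2^{\tau}\cdot\is(\TG_U)$ with $\tau\leq\is_{\undp}(\TG)$, and then invoke the brute-force algorithm of \cref{obs:bruteforce}, whose running time is thus bounded by a function of $\is_{\undp}$. Your added remarks (bounding $m$, or alternatively falling back on fixed-parameter tractability in $n$ via \cref{cor:fromms2sat}) are harmless elaborations of the same argument.
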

\begin{proof}
	If any layer of $\TG$ is not bipartite, then the input can be immediately rejected.
	Otherwise, let~$\TG_U$ be the underlying graph of $\TG$.
	By \cref{obs:bruteforce}, \mscAcr{} can be solved in time~$\bigO^*(2^{\tau \cdot \abs{V}}) \leq \bigO^*(2 ^ {\tau \cdot \is_{\undp}(\TG) \cdot 2 ^\tau})$.
\end{proof}

\begin{proposition}
	\label{prop:pnphard-dom}
	\mscTsc{} is NP-hard even if $\tau=4$, $\dom(\TG_U) \leq 2$, and $\dco(\TG_U)=0$.
	Hence, the problem is para-NP-hard with respect to $\dom_\undp$ and $\dco_\undp$.
\end{proposition}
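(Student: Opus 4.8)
The plan is to give a polynomial-time many-one reduction from an NP-hard problem — I would try \prob{Clique}, along the lines of \cref{constr:red-from-clique}, or else \prob{Edge Bipartization}, along the lines of the proof of \cref{thm:nphard-smalltau} — producing a temporal graph $\TG$ with exactly four layers whose underlying graph $\TG_U$ is a cograph admitting a dominating set of size at most two. Given such a reduction, the parameter bounds $\tau = 4$, $\dom(\TG_U) \le 2$, and $\dco(\TG_U) = 0$ hold by construction, so $\dom_\undp(\TG) \le 6$ and $\dco_\undp(\TG) \le 4$ are constants, and para-NP-hardness with respect to both parameters follows immediately.

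The decisive new constraint, compared with the earlier reductions, is $\dco(\TG_U) = 0$: the underlying graph must be $P_4$-free. This rules out the ``subdivision'' and ``long path'' gadgets of \cref{constr:nphard-d-fvs,constr:maxcut-red,constr:red-from-clique}, because subdividing an edge, or inserting a path of length at least three between two otherwise non-adjacent vertices, creates an induced $P_4$. I would therefore build $\TG$ from cograph-friendly pieces — stars and complete bipartite or complete split gadgets, which (like paths) force ``all-or-nothing'' recolorings inside a layer — arranged so that the union over the four layers stays $P_4$-free. For $\dom(\TG_U) \le 2$ I would add one or two near-universal dominator vertices, adjacent in $\TG_U$ to every other vertex; since the color of such a vertex can be fixed without loss of generality and it can be pinned by placing it in a gadget too large to recolor within the budget (as $u_1,u_2$ in \cref{thm:nphard-d-fvs} or the $r$-path in \cref{lem:red-from-clique}), a dominator then acts as a per-layer forcing device — in layer $t$ it forces the opposite color onto each of its layer-$t$ neighbors — which is exactly the mechanism that lets one layer set up a ``variable'' coloring and the others check it. I would also make sure $\TG_U$ is not complete multipartite, so that $\dcc(\TG_U) > 0$ and the stronger statement for $\dcc_\undp$ (left open in \cref{fig:param-hier-results}) is not implied; this is automatic once $\TG_U$ contains an induced $K_2 \cup K_1$, which any construction with a genuine variable gadget and a triangle (needed anyway, since $\TG_U$ must be non-bipartite) will provide.

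For correctness I would argue in the usual two directions. Forward: from a solution of the source instance (a clique, or a bipartition of the source graph minus the deleted edges) I construct the four colorings explicitly, check that each is proper for its layer, and bound the number of recolorings between consecutive layers by $d$. Backward: given proper colorings $f_1,f_2,f_3,f_4$ respecting the budget, I first argue that the dominators and the other pinned gadget vertices keep fixed colors, then use the forcing this induces to read a candidate solution of the source instance off the colorings, and finally verify its validity by locating where the budget must be spent (the same pigeonholing as in \cref{lem:red-from-clique}). The polynomial running time and the parameter bounds are then clear by inspection of the construction.

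I expect the only real obstacle to be the first point: designing check gadgets whose union over four layers is $P_4$-free while they still faithfully verify the source property. Every naive gadget for ``compare the colors of two non-adjacent vertices'' creates an induced $P_4$, so such comparisons must be routed through vertices carrying enough extra adjacencies to chord away every $P_4$ (turning the relevant induced subgraphs into diamonds, complete split graphs, and the like), or replaced by a more global encoding that already lives inside a cograph. Achieving this without breaking the propriety of the individual layers or the budget calibration is the heart of the argument; the rest is bookkeeping.
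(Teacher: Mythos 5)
There is a genuine gap: your write-up is a plan whose central component --- a family of check gadgets whose four-layer union is $P_4$-free yet still verifies the source property --- is explicitly left unresolved (``the heart of the argument''), and without it there is no reduction to analyze. Moreover, the premise that drives your search for new gadgets is mistaken: you discard the long-path gadgets of \cref{constr:red-from-clique} on the grounds that paths create induced $P_4$'s in $\TG_U$, but that is only true if nothing else chords them. The paper's proof keeps \cref{constr:red-from-clique} essentially verbatim and merely \emph{prepends} a fourth layer $E_0$ that is a complete bipartite graph between the two parity classes $W_1,W_2$ of the intended coloring (odd-indexed path vertices versus the rest). Since the complement of $P_4$ is connected, every induced $P_4$ of $\TG_U$ that meets both $W_1$ and $W_2$ acquires a chord from $E_0$; the only edges inside a single part are those among $w^e_1,w^e_2,w^e_3$, which contain no $P_4$, so $\dco(\TG_U)=0$. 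The same layer gives $\dom(\TG_U)\leq 2$ (one vertex from each part), and, being consistent with the parity coloring, it also pins the first coloring rather than disturbing correctness. This is exactly the ``chord away every $P_4$'' idea you gesture at, but realized globally by one spanning bipartite layer instead of by redesigning gadgets --- and it is the step your proposal does not supply.

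The remaining work in the paper is the bookkeeping you anticipate, but it is not free: the transition from $E_0$ to the old first layer recolors up to $k\Delta$ vertices ($\Delta$ the maximum degree of the \prob{Clique} instance), so the budget is recalibrated to $d\coloneqq m-\binom{k}{2}+k\Delta$, and a budget-wasting path on $k\Delta$ vertices is added that must flip completely in the later transitions, restoring the original pigeonhole count of $m-\binom{k}{2}$ effective changes there. Your backward-direction outline (pinned dominators force colorings, then count where budget is spent) matches the paper's in spirit, but it cannot be checked until the construction itself is fixed. As a side remark, your worry about accidentally making $\TG_U$ a co-cluster is immaterial to the statement being proved; establishing hardness for $\dcc_\undp$ would only be a stronger result, and in any case the statement claims nothing about it.
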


\begin{proof}
	
	The reduction in \cref{constr:red-from-clique} may be adjusted to prove this claim.
	In the following, we only describe how that construction and the proof of \cref{thm:nphard-smalltau} must be adjusted, rather than restating the entire proof.
	We will use notation defined there. 
	
Let $(G=(V,E),k)$ be the input instance for \prob{Clique} and let $\Delta$ be the maximum degree of $G$. 
	We change the value of $d$ to~$d\coloneqq m - \binom{k}{2} + k \Delta$.
	We also adjust $\ell$ accordingly, so that $\ell = d + 1$ remains true, and $\tau$ is increased to $\tau=4$.
	The layers will be called $E_0,\ldots,E_3$.
	Hence, the instance that the reduction outputs is $(\TG,d)$ with $\TG=(V',E_0,\ldots,E_3)$.
	
We introduce a new vertex set whose sole purpose is to use up budget for changes.
Let $V_4 \coloneqq \{b_1,\ldots,b_{k\Delta}\}$.
	Then, $V' \coloneqq V_1 \cup \ldots \cup V_4$ where $V_1,V_2,V_3$ are defined as in the original reduction.
	The internal edges of $V_4$
	do not change.
	Let 
\begin{align*}
		E^Q \coloneqq 
		E^P \cup \{ \{b_i,b_{i+1}\} \mid i \in \{1,\ldots, k\Delta-1\}\}.
	\end{align*}
	
	The purpose of the initial layer $E_0$ is merely to ensure that the underlying graph has domination number $2$ and is a cograph.
	We achieve this by making the initial layer a complete bipartite graph.
	The other layers are mostly very similar to those defined in the original reduction.
	Let:
	\begin{align*}
	W_1 & \coloneqq \{u_i^v \in V_1, r_i \in V_3, b_i \in V_4 \mid v \in V, i \text{ is odd}\} \\
	W_2 & \coloneqq V' \setminus W_1.
	\end{align*}
	Then, 
	\begin{align*}
		E_0  \coloneqq & \{\{x,y\} \mid x \in W_1,y\in W_2 \} \\
		E_1 \coloneqq & E^Q \cup \{ \{r_1,w^e_2\} \mid e \in E\} \cup \{ \{r_2,u^v_1\} \mid v \in V\} \cup \{ \{r_1,b_2\} \}, \\
		E_2 \coloneqq & E^Q \cup \{ \{r_1,w^e_2\} \mid e \in E\} \cup \{ \{u_1^{v_i},w^e_1\}  \{u_1^{v_j},w^e_2\} \mid e=\{v_i,v_j\}\in E, i<j \} \\ & \quad \cup \{ \{r_1,a_1 \}, \{r_1,b_1 \} \}, \\
		E_3 \coloneqq & E^Q \cup \{ \{ w^e_1,w^e_2 \}, \{ w^e_2,w^e_3 \} \mid e \in E  \} \cup \{\{r_1,b_2 \} \}.
	\end{align*}
	
	We start by showing that $\dom_{\undp}(\TG) \leq 2$.
	This follows from the fact that $\TG_1$ is complete bipartite and neither part of the partition is empty.
	Hence, taking a vertex from each part yields a dominating set of size $2$.
	
	Next we will show that $\TG_U$ is a cograph.
	Since $\TG_0$ is complete bipartite with the parts $W_1$ and~$W_2$,
	it suffices to show that $E_1$, $E_2$, and $E_3$ do not contain an induced $P_4$ containing only vertices in $W_1$ or only vertices in $W_2$.
	The only such edges are those between $w^e_1$, $w^e_2$, and $w^e_3$.
	These edges clearly do not form a $P_4$.

	It is easy to see that $\TG$ can be computed in polynomial time.
	
	The correctness proof for the reduction mostly follows along the same lines as in the original reduction.
	We will explain where it must be adjusted.
	
	Suppose that $G$ contains a clique $X\subseteq V$ of size exactly $k$.
	We will give $f_0,f_1,f_2,f_3\colon V' \rightarrow \{1,2\}$ proving that $(\TG,d)$ is a \yes-instance.
	Let $f_0(x) = 1$ if $x\in W_1$ and $f_0(x) = 2$ if $x\in W_2$.
	In the final three layers, the colors of the vertices in $V_1,V_2,V_3$ do not change compared to the colors in the proof of the original reduction.
	The colors of the vertices in $V_4$ are:
	\begin{align*}
		f_1(b_i) \coloneqq f_0(b_i), \quad
		f_2(b_i) \coloneqq 3- f_1(b_i), \quad
		f_3(b_i) \coloneqq 3 - f_2(b_i).
	\end{align*}
	It is easy to see that these colorings are proper.
	We must argue that at most $d$ vertices change color between any two consecutive layers.
	Between the first two layers,
	only the vertices $w_1^e$ if~$v_i \in X$ and $w_3^e$ if~$v_j \in X$ for any~$e=\{v_i,v_j\}$, $i <j$, change colors.
	Since the vertices in $X$ have at most~$\Delta$ incident edges, it follows that the number that change colors is
	$\abs{X} \cdot \Delta = k\Delta \leq d$.
	Between the layers $E_1$ and $E_2$, the only vertices that change colors are those that change colors in the original reduction and the vertices in $V_4$.
	Hence the total number is at most $m - \binom{k}{2} + \abs{V_4} = m - \binom{k}{2} + k\Delta = d$.
	The same thing applies to the changes between the final two layers.
	
	Now suppose that $f_1,\ldots,f_4$ are proper 2-colorings of the layers of $\TG$ such that at most $d$ vertices change colors between consecutive layers.
	Like in the original reduction, no vertex in $V_3$ may change colors.
	This also implies that the vertices in $V_1$ cannot change colors between the layers $E_0$ and $E_1$.
	This implies that the coloring of the vertices in $V_1,V_2,V_3$ in the layer $E_1$ is as in the proof of the original reduction.
	Between the layers $E_1$ and $E_2$ and between the layers $E_2$ and $E_3$, all the vertices in $V_4$ must change colors.
	Hence, only $m-\binom{k}{2}$ vertices in $V_1,V_2,V_3$ may change colors.
	Hence, the same argument as in the original reduction applies.	
\end{proof}

\begin{proposition}
	\label{prop:xp-tw}
	\mscTsc{} can be solved in $\bigO^*(2^{\tau\cdot \tw_{\undp}(\TG)}\cdot (d+1)^{2\tau})$ time.
	Hence, 
	the problem is in~\XP{} when parameterized by~$\tw_{\undp}$.
\end{proposition}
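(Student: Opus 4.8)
The plan is to design a dynamic program over a tree decomposition of the underlying graph $\TG_U$. First I would compute a tree decomposition of $\TG_U$ of minimum width $w \coloneqq \tw(\TG_U)$ by a standard exact treewidth algorithm (this step is \fpt{} in $w \le \tw_\undp(\TG)$ and will not affect the claimed bound) and convert it into a \emph{nice} tree decomposition with $\bigO(w\cdot\abs{V})$ leaf, introduce, forget, and join nodes, every bag of size at most $w+1$. Since $w+1 \le \tw(\TG_U)+\tau = \tw_\undp(\TG)$ (using $\tau\ge 1$), it then suffices to process each node in $\bigO^*\!\left(2^{\tau(w+1)}(d+1)^{2(\tau-1)}\right)$ time.

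A \emph{state} at a bag $X_i$ is a pair $(c,\kappa)$: the map $c$ assigns to each vertex of $X_i$ a tuple in $\{1,2\}^\tau$, read as a candidate $2$-coloring $c_t$ of $X_i$ for every $t\in\set{\tau}$, and $\kappa\in\{0,1,\ldots,d\}^{\tau-1}$ records, for each consecutive pair of layers, how much of the change budget has already been spent. The state is \emph{valid} if each $c_t$ properly colors $(X_i,\binom{X_i}{2}\cap E_t)$, and it is \emph{reachable} if each $c_t$ can be extended to a proper $2$-coloring $g_t$ of layer~$t$ restricted to all vertices occurring in bags of the subtree below $X_i$, so that for every $t\in\set{\tau-1}$ exactly $\kappa_t$ of the vertices in that subtree \emph{but not in} $X_i$ get different colors under $g_t$ and $g_{t+1}$. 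The DP computes, bottom-up, the set of reachable states; there are at most $2^{\tau(w+1)}(d+1)^{\tau-1}$ states per bag.

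The transitions are the routine ones. At an empty leaf only $(\emptyset,\vec{0})$ is reachable. An introduce node guesses the color tuple of the new vertex in all ways that keep the state valid and leaves $\kappa$ unchanged (the new vertex is not yet ``charged''). A forget node drops the forgotten vertex $v$ from $c$ and, for every $t\in\set{\tau-1}$ with $c_t(v)\ne c_{t+1}(v)$, increments $\kappa_t$, discarding the state if some component exceeds $d$. A join node combines reachable states $(c,\kappa^1)$ and $(c,\kappa^2)$ of its two children that share the same coloring $c$ into $(c,\kappa^1+\kappa^2)$, again discarding states with some component above $d$; iterating over all such pairs for a fixed $c$ costs $(d+1)^{2(\tau-1)}$, so this node dominates the running time. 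Finally $(\TG,d)$ is a \yes-instance iff some state is reachable at the (empty) root: because every edge of $\TG_U$ — hence every edge of every layer — lies in some bag, validity of all intermediate states ensures each layer is properly colored, and the change count at the root is precisely $\delta(f_t,f_{t+1})$, which has been kept at most $d$ throughout; conversely any solution $f_1,\dots,f_\tau$ induces a reachable root state.

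The one step that needs a short justification is correctness of the join transition, namely that the spent budgets add over the two subtrees \emph{without} double counting: this holds because a vertex contributes to $\kappa$ only once it has been forgotten, and in a tree decomposition every forgotten vertex occurs in exactly one of the two subtrees hanging off a join node (by the connectivity property), so the bag vertices — the only vertices the two subtrees share — are counted in neither. Multiplying the per-node time by the $\bigO(w\abs{V})$ nodes and using $2^{\tau(w+1)}\le 2^{\tau\tw_\undp(\TG)}$ and $2(\tau-1)\le 2\tau$ yields the claimed running time, and in particular membership in \XP{} with respect to $\tw_\undp$. I expect no serious obstacle here; the only points requiring care are the no-double-counting argument at join nodes and checking that the running-time bookkeeping (bag coloring count $2^{\tau(w+1)}$, budget vectors, and the quadratic $(d+1)^{2(\tau-1)}$ from pairing at joins) collapses into the stated expression.
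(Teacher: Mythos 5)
Your proposal is correct and takes essentially the same approach as the paper: a bottom-up dynamic program over a nice tree decomposition of $\TG_U$ whose states consist of a $\{1,2\}^\tau$-coloring of the current bag together with a vector of per-transition change budgets in $\{0,\ldots,d\}^{\tau-1}$, with the pairing of budget vectors at join nodes dominating the $\bigO^*(2^{\tau\cdot\tw_{\undp}(\TG)}(d+1)^{2\tau})$ running time. The only difference is bookkeeping — you charge a vertex's color changes at the moment it is forgotten instead of counting the cost over the whole subtree including the bag — which if anything handles the join step (no double counting) more cleanly than the paper's formulation.
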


\begin{proof}
	The proof utilizes a standard dynamic programming approach for problems parameterized by treewidth, extending it to the multistage context.
	Let $(\TG=(V,(E_t)_{t=1}^\tau),d)$ be an instance of \mscTsc{}.
	
	Let $\calT = (\calX,T)$ be a tree decomposition of width $\tw_{\undp}(\TG)$ of the underlying graph $\TG_U$ where~$\calX = \{X_1,\ldots,X_r\}$ are the bags associated with the vertex sets $V(X_1),\ldots,V(X_r) \subseteq V$ and $T$ is a rooted tree with vertex set $\calX$.
	Without loss of generality, we may assume that $\calT$ is a nice tree decomposition (for a definition, see, e.g.,~\cite[Sect.~13.1]{Kloks1994}).
	For any $s \in \{1,\ldots,r\}$, let $V_s \subseteq V$ denote the set of all vertices contained in a bag that is part of the subtree of $T$ rooted at $X_s$.
	A \emph{partial multistage two-coloring} (pmt) $f$ of $V' \subseteq V$ is a sequence of functions~$f=(f_1,\ldots,f_\tau)$ with $f_t\colon V'\rightarrow \{1,2\}$.
	We will call $f$ \emph{proper} if each $f_t$ is a proper two-coloring of $(V',E_t \cap \binom{V'}{2})$.
	Note that the number of pmts of $V'$ is at most $2^{\tau \abs{V'}}$.
	The \emph{cost} of $f$ is the vector
	\begin{align*}
		c(f) \coloneqq (\delta(f_1,f_2),\ldots,\delta(f_{\tau-1},f_{\tau})) \in \Nzero^{\tau-1}
	\end{align*}
	if $f$ is proper and $c(f) = \infty$ if it is not.
	If $\tilde{V} \subseteq V'$ and $\tilde{f}$ is a pmt of $\tilde{V}$, then $f$ is an \emph{extension} of~$\tilde{f}$, if $f(v) = \tilde{f}(v)$ for all $v \in \tilde{V}$.
	
	We compute a table $C$ with an entry $C[X_s,f,\delta_1,\delta_2,\ldots,\delta_{\tau-1}]$ for every bag $X_s \in \calX$,
	pmt $f$ of $v(X_s)$, and $\delta_1,\ldots,\delta_{\tau-1} \in \{0,\ldots,d\}$.
	The value of $C[X_s,f,\delta_1,\delta_2,\ldots,\delta_{\tau-1}]$ is $1$ if there is a proper pmt $\tilde{f}$ of $V_s$ that is an extension of $f$ and has $c(\tilde{f}) \leq (\delta_1,\ldots,\delta_{\tau-1})$ (component-wise).
	Otherwise, the value is~$0$.
	Since $\abs{V(X_s)} \leq \tw(G_U)$ for all $s$, 
	the number of entries of $C$ is at most $r\cdot 2^{\tau \cdot \tw(G_U)}\cdot (d+1)^{\tau}$.
	
	We compute $C$ from the leaves of $T$ up. First, assume that $X_s$ with $V(X_s) = \{v\}$ is a leaf node of $T$.
	Then, for any pmt $f$ of $X_s$ and $\delta_1,\ldots,\delta_{\tau-1}\in \{0,\ldots,d\}$, 
	we set $C[X_s,f,\delta_1,\delta_2,\ldots,\delta_{\tau-1}] = 1$ if and only if~$c(f)\leq (\delta_1,\ldots,\delta_{\tau-1})$.
	Now, suppose that $X_s$ is an insertion node, that $X_{s'}$ is its only child, and that $V(X_s) \setminus V(X_{s'}) = \{v\}$.
	For any proper pmt $f$ of $V(X_s)$, let $f'$ be the pmt of $V(X_{s'})$ obtained by deleting~$v$ from the domain.
	For $t \in \{1,\ldots,\tau-1\}$, let $\delta^v_t = 1$ if $f_t(v) \neq f_{t+1}(v)$ and $\delta^v_t = 0$, otherwise.
	Then, $C[X_s,f,\delta_1,\delta_2,\ldots,\delta_{\tau-1}] = C[X_{s'},f',\delta_1-\delta^v_1,\ldots,\delta_{\tau-1}-\delta^v_{\tau-1}]$.
	If $f$ is not proper, then we simply set $C[X_s,f,\delta_1,\delta_2,\ldots,\delta_{\tau-1}] = 0$.
	Next, suppose that~$X_s$ is a forget node, let $X_{s'}$ again be its only child, and let $V(X_{s'}) \setminus V(X_{s}) = \{v\}$.
	For any pmt $f$ of $V(X_s)$, define two extensions $f_1$ and $f_2$ to $V(X_{s'})$ by assigning $v$ the colors $1$ and $2$, respectively.
	Then,~$C[X_s,f,\delta_1,\delta_2,\ldots,\delta_{\tau-1}] = \min_{i\in \{1,2\}} C[X_{s'},f_i,\delta_1,\ldots,\delta_{\tau-1}]$.
	Finally, suppose that $X_s$ is a join node with children $X_{s'}$ and $X_{s''}$ such that~$V(X_s) = V(X_{s'}) = V(X_{s''})$.
	Then, we set $C[X_s,f,\delta_1,\delta_2,\ldots,\delta_{\tau-1}] = 1$ if there are $\delta'_1,\ldots,\delta'_{\tau-1}$ and $\delta''_1,\ldots,\delta''_{\tau-1}$ such that the following conditions hold:
	\begin{inparaenum}[(i)]
		\item $\delta_t \geq \delta'_t + \delta''_t$ for all $t \in \{1,\ldots,\tau-1\}$,
		\item $C[X_{s'},f,\delta'_1,\ldots,\delta'_{\tau-1}] = 1$, and
		\item $C[X_{s''},f,\delta''_1,\ldots,\delta''_{\tau-1}] = 1$.
	\end{inparaenum}

	The input instance $(\TG,d)$ is a \yes-instance if and only if there is an $f$ with $C[X_a,f,d,\ldots,d] = 1$ where $X_a$ is the root of $\calX$.
	
	\emph{Running time: }
	As we mentioned before, the table $C$ has at most $r\cdot 2^{\tau \cdot \tw(G_U)}\cdot (d+1)^{\tau}$ entries.
	Moreover, $r \in \bigO(n)$.
	Computing an entry requires determining whether a pmt is proper and  in the worst case (join nodes) $(d+1)^\tau$ look-ups.
	This leads to a total running time of $\bigO^*( 2^{\tau \cdot \tw(G_U)}\cdot (d+1)^{2\tau})$.
	
	\emph{Correctness}: By induction on the structure of $\calT$.
\end{proof}

\noindent
Note that the running time of this algorithm also implies that \mscTsc{} is \fpt{} with respect to $\tau + d + \tw_{\undp}$.

\begin{proposition}
	\label{prop:pnph-delta-undp}
	\mscTsc{} is NP-hard even if $\tau=3$ and $\Delta(G) =3$.
	Hence, the problem is para-NP-hard with respect to $\Delta_\undp$.
\end{proposition}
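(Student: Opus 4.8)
The plan is to reuse the reduction from \prob{Clique} in \cref{constr:red-from-clique} and to modify it so that the underlying graph has maximum degree three while the lifetime stays $\tau = 3$. In that construction the only vertices of unbounded degree in $\TG_U$ are $r_1$, which is adjacent to every $w^e_2$ (in $E_1$ and $E_2$); $r_2$, which is adjacent to every $u^v_1$ (in $E_1$); and each $u^v_1$, which is adjacent to the $\deg_G(v)$ vertices among $\{w^e_1, w^e_3\}$ with $v \in e$ (in $E_2$). Every other vertex already has degree at most three, so it suffices to redistribute the edges incident to $r_1$, $r_2$, and the $u^v_1$ without breaking the correctness argument of \cref{lem:red-from-clique}.

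First I would lengthen the ``anchor path'' $V_3$: rather than $d+1$ vertices, take $N \coloneqq 2(n+m)+5$ vertices forming a path $r_1 - r_2 - \cdots - r_N$ present in every layer. Since $N > d$, this path still cannot be recoloured, so every $r_i$ keeps its alternating colour in all three layers. Now attach each $w^e_2$ to a private \emph{odd}-indexed internal vertex $r_{i(e)}$ and each $u^v_1$ to a private \emph{even}-indexed internal vertex $r_{j(v)}$, with all these indices pairwise distinct (there are enough by the choice of $N$). Because $f_t(r_{i(e)})$ is always colour $1$ and $f_t(r_{j(v)})$ always colour $2$, this pins $f_t(w^e_2) = 2$ and $f_1(u^v_1) = 1$ exactly as $r_1$ and $r_2$ did before, and every $r_i$ gains at most one new neighbour.

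The vertices $u^v_1$ are the genuine obstacle: $\deg_G(v)$ is unbounded, and we cannot simply start from bounded-degree \prob{Clique}, which is polynomial-time solvable. Instead I would spread these connections along the gadget path $u^v_1 - \cdots - u^v_\ell$. First pad $G$ with a long disjoint path; a path has maximum degree two and contains no triangle, so this changes neither $\Delta(G)$ nor---since $k \ge 3$---whether $G$ has a $k$-clique, and (using a path rather than the star of the original construction precisely to keep the degree bounded, while also handling the divisibility of $m-\binom{k}{2}$ by $k$) its length may be chosen so that $\ell \ge 2\Delta(G) + 3$. Then, for each $v$ and each edge $e \ni v$, put the edge from $w^e_1$ (respectively $w^e_3$) into $E_2$ not at $u^v_1$ but at $u^v_p$ for a distinct \emph{odd} position $p$, one per incident edge. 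Each $u^v_p$ then has at most two path-neighbours and at most one $w$-neighbour, so $\Delta(\TG_U) = 3$, while $\tau = 3$.

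It remains to check that the equivalence of \cref{lem:red-from-clique} carries over with only cosmetic changes: the long path $V_3$ still forbids recolouring any $r_i$, so all anchors are fixed and $f_1(u^v_1) = 1$ for every $v$; replacing $u^v_1$ by an odd-indexed $u^v_p$ as interface vertex does not change the parity of its colour, so in the forward direction the colouring of \cref{lem:red-from-clique} still works verbatim, and in the backward direction $f_2(w^e_1) = 1$ still holds iff the lower-indexed endpoint of $e$ lies in $X \coloneqq \{v \in V \mid f_1(u^v_1) \neq f_2(u^v_1)\}$; the ``a $u^v$-path flips as a whole'' argument still gives $\abs{X} \le d/\ell = k$; and between layers $2$ and $3$ the triple $w^e_1, w^e_2, w^e_3$ keeps all of its colours exactly when both endpoints of $e$ lie in $X$, which forces $m - \binom{\abs{X}}{2} \le d$ and hence $\abs{X} = k$ and $X$ a clique. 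This establishes NP-hardness for temporal graphs with $\tau = 3$ and $\Delta(\TG_U) = 3$; since $\Delta_\undp(\TG) = \Delta(\TG_U) + \tau$ is then a constant, para-NP-hardness with respect to $\Delta_\undp$ follows. The steps I would be most careful about are the padding argument (it must introduce no spurious cliques, keep $\Delta(G)$ bounded, and simultaneously make $\ell$ large) and the parity bookkeeping after spreading the edge-interfaces over the $u^v$-paths, in both directions of the equivalence.
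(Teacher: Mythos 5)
Your proposal is correct and follows the same core strategy as the paper's (sketched) proof: both take \cref{constr:red-from-clique}, lengthen the vertex-selection paths and reroute the layer-$E_2$ interface edges from $u^v_1$ to distinct odd positions $u^v_3,u^v_5,\ldots$ so that each path vertex sees at most one $w$-vertex, and both replace the short anchor path $r_1,\ldots,r_{d+1}$ by a longer, still-unrecolourable path whose internal vertices serve as private attachment points for the $w^e_2$ and $u^v_1$ (this is exactly how the paper eliminates the high-degree vertices $r_1$, $r_2$, and $u^v_1$). The one genuine divergence is how the budget is reconciled with the longer paths: the paper changes $d$ to $\max\{k(2\Delta+1),\,m-\binom{k}{2}\}$ and inserts an additional budget-wasting gadget between consecutive layers, whereas you pad the \prob{Clique} instance with a disjoint path so that $\ell=(m-\binom{k}{2})/k\geq 2\Delta+3$ while keeping $d=k\ell=m-\binom{k}{2}$ exactly. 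Your variant buys a cleaner correctness argument -- \cref{lem:red-from-clique} carries over essentially verbatim, with no new gadgets and no case distinction over which transition absorbs the surplus -- at the cost of a slightly larger (still polynomial) instance; you also correctly spotted that the star padding used in \cref{constr:red-from-clique} for divisibility must be replaced by a path, since a star centre would itself violate the degree bound, a point the paper's sketch leaves implicit. The parity bookkeeping you flag does check out: odd interface positions receive the same colours as $u^v_1$ in both directions of the equivalence, and the anchor path of length exceeding $d$ still pins all forced colours.
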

\begin{proof}[Proof sketch]
	The proof is an adjustment of \cref{constr:red-from-clique}, similar to the proof of \cref{prop:pnphard-dom}.
	We will only give a brief sketch.
Let $\Delta$ denote the maximum degree of the graph $G$ in the input instance $(G,k)$ for \prob{Clique}.
	We extend the lengths of the paths representing the vertices in $G$ such that they each contain $2\Delta+1$ vertices.
	Then, the edges in $E_2$ between $u^v_1$ and vertices representing the edges incident to~$v$ are connected to $u^v_3,u^v_5,\ldots,u^v_{2\deg(v)+1}$.
	This requires us to change~$d$ to $d\coloneqq \max\{ k(2\Delta +1),m-\binom{k}{2}\}$.
	Then, we introduce a gadget (like in the proof of \cref{prop:pnphard-dom}) to use up the extraneous budget either between layers $E_1$ and $E_2$ or between $E_2$ and $E_3$.
	In the same way, 
	we replace the path $r_1,\ldots,r_{d+1}$ by a longer path in order to reduce the degree of the vertices $r_i$.
\end{proof}

\begin{proposition}
	\mscTsc{} is fixed-parameter tractable with respect to $\vc_{\undp}$.
\end{proposition}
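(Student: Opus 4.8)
The plan is to use a vertex cover $S$ of the underlying graph $\TG_U$ together with the fact that its complement $I\coloneqq V\setminus S$ is independent in every layer, so that once the colours of the vertices in $S$ are fixed in all layers, the vertices of $I$ interact only through the change budget. First I would reject immediately if some layer $\TG_t$ is not bipartite. Otherwise I would compute, in polynomial time, a vertex cover $S$ of $\TG_U$ of size at most $2\vc(\TG_U)$ (any vertex cover of size $\bigO(\vc(\TG_U))$ suffices), and then branch over all at most $2^{2\vc(\TG_U)\cdot\tau}$ assignments of colours $f_t(s)\in\{1,2\}$ to the vertices $s\in S$ for all layers $t\in\set{\tau}$. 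A branch is discarded unless the guessed colouring is proper on all edges lying inside $S$ in every layer and unless $\sigma_t\coloneqq\delta(f_t|_S,f_{t+1}|_S)\leq d$ for every $t\in\set{\tau-1}$, where $\sigma_t$ is the now-fixed number of colour changes among $S$ between layers $t$ and $t+1$.

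For a surviving branch, every $v\in I$ has all its neighbours in $S$, so in each layer $t$ the set $A_t(v)\coloneqq\{1,2\}\setminus\{f_t(u)\mid u\in N_{\TG_t}(v)\}$ of colours available to $v$ is determined; the branch is discarded if $A_t(v)=\emptyset$ for some $v$ and $t$. I call $P(v)\coloneqq(A_1(v),\ldots,A_\tau(v))\in\{\{1\},\{2\},\{1,2\}\}^\tau$ the \emph{profile} of $v$; there are at most $3^\tau$ distinct profiles. A completion of the branch must assign to each $v\in I$ a \emph{strategy}, i.e.\ a sequence $s\in\prod_{t\in\set{\tau}}A_t(v)$ of available colours; writing $\chi(s)\in\{0,1\}^{\tau-1}$ for the indicator of colour changes of $s$ between consecutive layers, a choice of strategies extends the branch to a multistage $2$-colouring of $\TG$ within budget $d$ exactly when, for every $t\in\set{\tau-1}$, at most $d-\sigma_t$ vertices of $I$ use a strategy that changes at step $t$.

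I would decide whether such an assignment of strategies exists via an integer linear program with a nonnegative integer variable $x_{P,s}$ for every profile $P$ and every strategy $s$ of $P$ (at most $6^\tau$ variables), the equalities $\sum_s x_{P,s}=n_P$ with $n_P$ the number of vertices of profile $P$, and the inequalities $\sum_{P,s}\chi(s)_t\cdot x_{P,s}\leq d-\sigma_t$ for $t\in\set{\tau-1}$. Since feasibility of an integer linear program is fixed-parameter tractable with respect to the number of variables~\cite{CyganFKLMPPS15}, each branch is processed in time $g(\tau)$ times a polynomial in the input size, for some computable function $g$, so the overall running time is $\bigO^*(2^{2\vc(\TG_U)\cdot\tau}\cdot g(\tau))$, which is fixed-parameter tractable with respect to $\vc_{\undp}(\TG)=\vc(\TG_U)+\tau$.

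The main obstacle is exactly this last reduction: although the vertices of $I$ are pairwise non-adjacent, they compete for the shared per-step budget, so one cannot simply colour them one at a time with a locally cheapest strategy (for instance, three vertices of profile $(\{1\},\{1,2\},\{2\})$ cannot be coloured within per-step budgets $(1,1)$, even though each of them needs only a single colour change). Once the bounded-dimension integer linear program is in place, correctness is the routine check that it is feasible precisely when the guessed colouring of $S$ extends to a multistage $2$-colouring of $\TG$ obeying the budget; in particular some branch survives with a feasible program if and only if $(\TG,d)$ is a \yes-instance.
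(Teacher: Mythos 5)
Your argument is correct, but it is a genuinely different route from the paper's. The paper proves this proposition in two lines by a parameter-hierarchy argument: since $\dcc\preceq\vc$ and $\vc$ is monotonically increasing, \cref{prop:param-preservation,prop:param-relationships} give $\dcc_{\sump}\preceq\vc_{\sump}\preceq\vc_{\undp}$, and the statement then follows from the stronger result that \mscAcr{} is \FPT{} with respect to $\dcc_{\sump}$ (\cref{thm:fpt-dcc-sum}), whose algorithm enumerates colorings of per-layer co-cluster modulators and solves the remaining edgeless extension problem (\msceAcr{}) in polynomial time via a scheduling reduction (\cref{lem:coleps-poly}). You instead give a direct, self-contained algorithm: guess the colors of a vertex cover of $\TG_U$ in every layer ($2^{\bigO(\vc\cdot\tau)}$ branches, which is bounded in $\vc_{\undp}$), group the remaining independent vertices by their availability profiles, and decide the per-step budget allocation by an integer program with at most $6^\tau$ variables, solved by Lenstra-type FPT machinery. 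Both are correct; what each buys differs. The paper's route yields the proposition as a corollary of a result for a smaller parameter ($\dcc_{\sump}$) and with a milder running time (roughly single-exponential in $(\vc+\tau)^2$), whereas your branch count combined with bounded-dimension ILP feasibility is doubly exponential in $\tau$. Conversely, your argument does not need \cref{thm:fpt-dcc-sum} at all. Note also that your residual allocation problem for the independent vertices is essentially the forced-recoloring problem the paper handles in \cref{lem:coleps-poly}: each independent vertex only needs to flip between consecutive time points at which its available color is forced to differ, so the ILP could be replaced by the same polynomial-time earliest-due-date scheduling argument, which would remove the doubly exponential factor from your running time.
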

\begin{proof}
	Note that $\dcc \preceq \vc$ and, hence, by \cref{prop:param-preservation}, $\dcc_{\sump} \preceq \vc_{\sump}$.
	Moreover, $\vc$ is a monotonically increasing parameter.
	Therefore, by \cref{prop:param-relationships}, it follows that $\vc_{\sump} \preceq \vc_{\undp}$.
	As we will show in \cref{thm:fpt-dcc-sum}, 
	\mscAcr{} is \fpt{} when parameterized by~$\dcc_{\sump}$.
\end{proof}

\begin{proposition}
	\label{prop:nphard-dbi-undp}
	\mscTsc{} is NP-hard even if $\tau = 3$ and $\dbi_{\undp} = 2$.
\end{proposition}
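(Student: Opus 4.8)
The plan is to reuse the reduction of \cref{constr:red-from-clique} from \prob{Clique} essentially verbatim (as was done for \cref{prop:pnphard-dom} and \cref{prop:pnph-delta-undp}): it already produces a temporal graph $\TG=(V',E_1,E_2,E_3)$ with lifetime $\tau=3$, and \cref{lem:red-from-clique} already certifies that this is a correct polynomial-time many-one reduction. Hence the only thing that needs to be checked is that the underlying graph $\TG_U$ is at distance at most $2$ from being bipartite; once this is done, $\dbi_{\undp}(\TG)=\dbi(\TG_U)+\tau$ is a constant and para-\NP-hardness with respect to $\dbi_{\undp}$ follows immediately.

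First I would exhibit a bipartite modulator of size $2$, namely $S=\{r_1,r_2\}$, the two ``hub'' vertices at which the connector paths and the edge gadgets of the construction meet. Deleting $S$ removes every edge incident to $r_1$ or $r_2$ — in particular all edges $\{r_1,w^e_2\}$ and all edges $\{r_2,u^v_1\}$ — so the edges remaining in $\TG_U-S$ are exactly the path edges of the paths $u^v_1-\dots-u^v_\ell$ ($v\in V$) and of the residual path $r_3-\dots-r_{d+1}$, together with, for each $e=\{v_i,v_j\}\in E$ with $i<j$, the four edges $\{u^{v_i}_1,w^e_1\},\{w^e_1,w^e_2\},\{w^e_2,w^e_3\},\{w^e_3,u^{v_j}_1\}$. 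I would then verify bipartiteness of $\TG_U-S$ via the explicit $2$-colouring that colours each $u^v_i$ and each $r_i$ according to the parity of $i$ and colours every $w^e_1,w^e_3$ with one colour and every $w^e_2$ with the other, checking that all of these edges receive distinct colours at their endpoints. This gives $\dbi(\TG_U)\le 2$ and hence $\dbi_{\undp}(\TG)\le 5$. As a bonus, every individual layer $\TG_t$ of this construction is in fact a forest, so the same instances also witness para-\NP-hardness with respect to $\dbi_{\maxp}$ and $\dbi_{\sump}$.

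The only mildly delicate step — and the one I would write out most carefully — is the claim that $S$ really is a modulator, i.e.\ that every odd cycle of $\TG_U$ uses $r_1$ or $r_2$. I would argue this by showing that every closed walk in $\TG_U-S$ has even length: such a walk splits into maximal stretches running along a single $u$- or $r$-path (each of even length, since those paths are bipartite) and maximal ``detours'' into some edge gadget, and each detour enters and leaves at $u$-path vertices $u^{v_i}_1,u^{v_j}_1$ and has length exactly $4$ (the only way through a gadget is the full path $u^{v_i}_1,w^e_1,w^e_2,w^e_3,u^{v_j}_1$, since $w^e_1$ and $w^e_3$ have degree $1$ in $\TG_U-S$ apart from their gadget neighbour $w^e_2$); summing gives an even total. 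Hence $\TG_U-S$ is bipartite. (If the exact value $\dbi(\TG_U)=2$ is wanted rather than the bound $\le 2$, note that deleting $r_1$ alone already destroys every odd cycle, so one would either phrase the modulator bound as $\le 2$ or append a second vertex-disjoint copy of the path gadget carrying its own odd cycle; neither change affects \cref{lem:red-from-clique}, and hence the reduction's correctness.)
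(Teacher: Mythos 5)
Your proposal takes essentially the same route as the paper: the paper likewise reuses \cref{constr:red-from-clique} (whose correctness is \cref{lem:red-from-clique}) and simply observes that $\TG_U-\{r_1,r_2\}$ is bipartite, so your explicit $2$-coloring and even-closed-walk argument merely spell out details the paper leaves implicit. Your side remark that deleting $r_1$ alone already suffices (so the stated value $2$ should be read as an upper bound, or enforced by padding) is accurate and does not affect the para-\NP-hardness conclusion.
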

\begin{proof}
	The claim follows from \cref{constr:red-from-clique} and the proof of \cref{thm:nphard-smalltau}.
	Note that, if $\TG_U$ is the underlying graph of the temporal graph generated by the reduction, then $\TG_U -\{r_1,r_2\}$ is bipartite.
\end{proof}

Next, we will prove that \mscTsc{} is \W{1}-hard with respect to $\fes_{\undp}$.
In fact, we will prove the following slightly stronger statement:

\begin{proposition}
 \label{prop:whardfesUtau}
 \mscTsc{} is \W{1}-hard when parameterized by~$\tau$,
 even if the feedback edge number~$\fes(\TG_U)$ of the underlying graph is constant.
\end{proposition}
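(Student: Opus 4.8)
The plan is to give a parameterized reduction from \prob{Multicolored Clique} (\W{1}-hard when parameterized by the number~$k$ of colour classes) that, from an instance with colour classes $V_1,\dots,V_k$, produces a temporal graph $\TG$ and a budget~$d$ with $\tau$ bounded by a function of~$k$ alone (I would aim for $\tau\in\O(k^2)$, roughly one block of layers per pair $\{i,j\}$) and $\fes(\TG_U)\in\O(1)$. Since then $\fes_{\undp}(\TG)=\fes(\TG_U)+\tau\in\O(\tau)$, \W{1}-hardness with respect to~$\tau$ immediately gives \W{1}-hardness with respect to~$\fes_{\undp}$, which is the stated conclusion.

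First I would set up the \emph{selection} part: for each colour class~$V_i$ a path~$P_i$, the $P_i$ being pairwise vertex-disjoint so that they contribute nothing to the feedback edge number, whose colourings over a dedicated block of layers encode a choice $x_i\in V_i$. The key point is that although a connected path has only two proper $2$-colourings, by temporarily \emph{removing} edges of~$P_i$ in the designated layers one splits~$P_i$ into segments that can be recoloured independently, and the number of recoloured vertices (weighed against~$d$) records where a ``colour wave'' is frozen, i.e.\ which $x_i$ is selected; segment lengths are chosen so the budget forces exactly one selection per class, in the spirit of the edge-counting argument in \cref{constr:red-from-clique}. Then I would add, for each pair $\{i,j\}$, a \emph{verification} block of constantly many layers together with disjoint tree-gadgets (again contributing nothing to $\fes$) which, read through ``probe'' vertices of $P_i$ and $P_j$, can keep all their colours iff $(x_i,x_j)$ lies in the compatibility relation $E_{ij}$; letting the global slack in~$d$ absorb at most $\binom{k}{2}$ ``failed'' checks and pinning the total then forces every check to succeed, exactly as $m-\binom{k}{2}$ is used in \cref{constr:red-from-clique}. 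All genuine cycles would be confined to a single control component of constant size (needed, e.g., so that the trivial constant colouring does not already satisfy the instance) — this is what keeps $\fes(\TG_U)$ a constant.

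For correctness one argues the two directions as usual: from a multicoloured clique $\{x_1,\dots,x_k\}$ one reads off, class by class, the required sequence of colourings and checks the change bounds; conversely, from any feasible $(f_1,\dots,f_\tau)$ one extracts for each~$i$ the frozen wave position as~$x_i$ and shows, via the budget accounting in the verification blocks, that all pairs are compatible, so $\{x_1,\dots,x_k\}$ is a clique. The main obstacle is designing the selection and verification gadgets under the simultaneous constraints that (i)~$\tau$ may depend on~$k$ only, so the number of layers cannot scale with $|V_i|$, ruling out any ``one layer per candidate'' device, and (ii)~$\fes(\TG_U)$ is a constant, so the underlying graph is essentially a forest, every connected piece carries only two proper $2$-colourings, and in particular $E_{ij}$ cannot be wired in structurally (that would create $|E_{ij}|$ essentially independent cycles) but must be transported through the temporal dimension and the change budget instead. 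Making the budget bookkeeping couple ``which wave position is frozen'' across different layers tightly enough to simulate consistency, while staying within constant feedback edge number, is the delicate step; the remaining verifications are routine.
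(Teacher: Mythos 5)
Your high-level plan coincides with the paper's actual proof: a parameterized reduction from \prob{Multicolored Clique} with $\tau\in\O(k^2)$ (the paper uses $\tau=2k(k-1)+3$), selection of $x_c\in V_c$ encoded by \emph{how many} vertices of a per-class path-like gadget are recolored in designated steps, per-edge verification gadgets that are trees, a per-step budget calibrated so that slack of size $\binom{k}{2}$ pins down adjacency, and all cycles confined to a constant-size control component so that $\fes(\TG_U)$ is constant. However, the proposal stops exactly at the step that carries the proof, and you say so yourself (``the delicate step''): you never specify a mechanism that (a) freezes colors of designated vertices at designated times and (b) forces the \emph{same} selection $x_c$ to be re-read consistently in each of the $k-1$ verification blocks for class $c$. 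Without (a) and (b) the counting argument has no teeth: a single path $P_c$ whose ``wave position'' is frozen once cannot be re-interrogated in later blocks, because nothing prevents the adversarial solution from recoloring it arbitrarily (or not at all) when the budget of a later step is examined, and nothing ties the number of recolored vertices in the block for pair $\{c,c'\}$ to the number recolored in the block for $\{c,c''\}$.

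Concretely, the paper resolves (a) by a \emph{blocking} device: to freeze a vertex $v$ at step $t$ it attaches $d$ pendant vertices adjacent to $v$ in layers $t-1$ and $t$ and isolated elsewhere (trees, so $\fes$ is unaffected), which makes any recoloring of $v$ at step $t$ cost $d+1>d$. It resolves (b) by replacing your single path $P_c$ with a grid-like family $w^c_{i,j}$ ($i\in\{1,\ldots,n-1\}$, $j\in\{1,\ldots,k-1\}$): column $j$ is unblocked only at the two steps associated with the pair $(c,j)$, and edges between consecutive columns appear at times chosen so that the count of recolored vertices is provably equal across all columns, i.e.\ one single value $i_c$ is read $2(k-1)$ times. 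Verification for an edge $e=\{v^c_j,v^{c'}_{j'}\}$ then uses four pendant paths of lengths $n-1-j$, $j$, $n-1-j'$, $j'$ that can only flip in those same steps, together with budget-wasting paths that leave exactly $n-1$ spare changes per selection step, so that ``gadget for $e$ keeps its final color'' is possible iff both endpoints are the selected vertices; the last two layers plus a $\binom{k}{2}$-sized waste path force at least $\binom{k}{2}$ gadgets to do so. These devices are not routine bookkeeping but the substance of the argument, so as written the proposal has a genuine gap, even though the route you chose is the same one the paper takes.
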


\noindent
We already showed that \mscAcr{} is \XP{} regarding~$\tw_{\undp}$ implying that 
it is~\XP{} and \W{1}-hard when parameterized by~$\tw_{\undp}$, 
$\fvs_{\undp}$, 
and $\fes_{\undp}$, 
since~$\tw \preceq \fvs \preceq \fes$.
The following hardness proof is a little more involved than most of the previous ones.
Our reduction is from the following:
\decprob{\prob{Multicolored Clique} (\prob{MC})}{mc}
{A $k$-colored static graph $G=(V,E)$ with $V = V_1 \uplus \ldots \uplus V_k$.}
{Does $G$ contain a clique $X \subseteq V$ such that $\abs{X \cap V_i}=1$ for all $i\in \set{k}$?}
\noindent
\prob{Multicolored Clique} is \W{1}-hard when parameterized by~$k$~\cite{FellowsHRV09,Pietrzak03}.

\begin{construction}
	\label{constr:mc-clique-red}
	Let $(G=(V,E),k)$ with $V = V_1 \uplus \ldots \uplus V_k$ be an instance of \prob{Multicolored Clique}.
	We may assume that $\abs{V_1} = \ldots = \abs{V_k} = n$
	(if color classes do not have the same size, we can add isolated vertices),
	that all $V_i$ are independent,
	and that $\abs{E} \geq \binom{k}{2}$ (otherwise, this is clearly a \no-instance).
	Let~$V_i = \{v^i_0,\ldots,v^i_{n-1}\}$.
	
	We will now describe an instance $(\TG=(V',(E_t)_{t=1}^\tau),d)$ of \mscTsc{} with $\fes(\TG_U) = 2$
	(see~\cref{fig:clique-red-3} for an illustration).
	\begin{figure}
		\includegraphics[width=\textwidth]{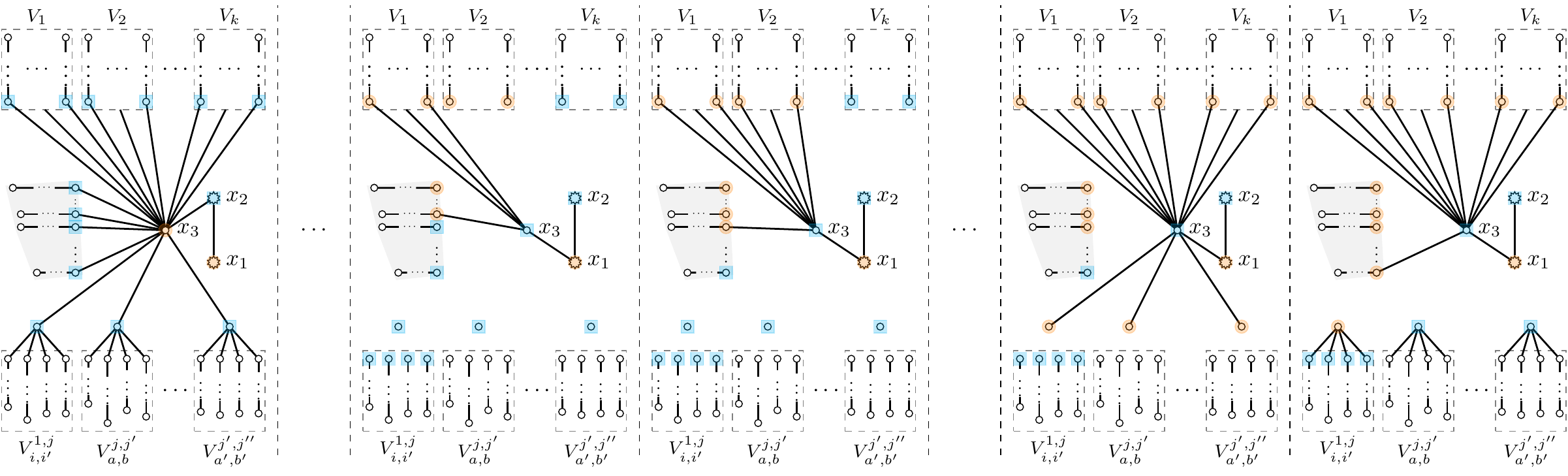}
		\caption{Illustration of \cref{constr:mc-clique-red}.
		Shown are
		the first layer (left),
		the two layers when transitioning from the phase regarding~$V_2$ to the phase regarding~$V_3$ (middle),
		the last two layers (right).
		In the gray area,
		the waste-budget gadget is depicted.
		In this example,
		the edge~$\{v_i^1,v_{i'}^j\}$ is chosen into the clique.
		Note that many vertices (those from paths and stars)
		are not depicted.}
		\label{fig:clique-red-3}
  \end{figure}
	We let $\tau \coloneqq 2 k (k-1) + 3$ and $d \coloneqq \abs{E}$.
	
	The general idea behind the reduction is as follows.
	We consider the steps between consecutive layers and the number of changes to the coloring in those steps.
	The value of $\tau$ implies that there are $2 k (k-1) + 2$ steps in total.
	There are $2k-2$ such steps for each color class in $G$,
	while the final two steps do not correspond to any color class.
	Of the $2k-2$ steps that correspond to $c \in \set{k}$, two will be used to verify adjacency to each of the $k-1$ other color classes.
	In order to be able to refer to these steps easily, we will use the following notation for any $c,c'\in\set{k}$, $c\neq c'$:
	\begin{align*}
		T(c\rightarrow c') & \coloneqq
		\begin{cases}
			2(c-1)(k-1) + c', & \text{ if } c<c',\\
			2(c-1)(k-1) + c' -1, & \text{ if } c > c',
		\end{cases}
		\text{ and }\\
		T(c\Rightarrow c') & \coloneqq T(c\rightarrow c') + k -1
	\end{align*}
	
	We will use several gadgets.
	The first gadget maintains its coloring throughout most of the lifetime of the instance.
	We use it to enforce a particular, predictable coloring on vertices in other gadgets at certain points.
	The second type of gadget represents the selection of a vertex in a certain color class.
	If the vertex $v^i_j$ is to be added to the clique,
	it forces any multistage $2$-coloring to make $j$ changes in the first $k-1$ steps corresponding to the color class $i$ and $n-j-1$ changes in the following $k-1$ steps corresponding to this class.
	There is a third type of gadget.
	Its purpose is to verify that the vertices selected by the first gadget type are pairwise adjacent.
	There are numerous additional vertices whose sole purpose is to ensure that the coloring of vertices cannot change in unexpected ways.
	More specifically, 
	when we say that a vertex $v$ is \emph{blocked} in time step~$t$, 
	we mean that we add $d$ vertices that are adjacent to $v$ in the layers $t-1$ and $t$
	and isolated in all other layers.
	There are also further vertices designed to use up extraneous budget for changes during certain time steps.
	
	We start by describing the first gadget, whose purpose is to maintain a predictable coloring so it can be used to enforce a certain coloring on other parts of the instance at particular points in time.
	This gadget contains the vertices $x_1,x_2,x_3$.
	The edge~$\{x_1,x_2\}$ is present in every layer of $\TG$.
	The edge $\{x_2,x_3\}$ exists only in the first layer, 
	while $\{x_1,x_3\}$ is in the all but the first layers.
	The verices $x_1$ and~$x_2$ are blocked in every step.
	
	Next, we define the second type of gadget, which models the selection of a vertex in a color class.
	The gadget representing a certain color class $V_c$, $c \in \set{k}$, consists of $(n-1)(k-1)$ vertices $w^c_{i,j}$ for $i \in \set{n-1}$, $j \in \set{k-1}$.
	The vertex $w^c_{i,j}$ is blocked in all time steps except for the step $T(c\rightarrow j)$ and the step $T(c\Rightarrow j)$.
	There is an edge between $w^c_{i,j}$ and $w^c_{i,j+1}$ in the layers from $T(c\rightarrow j + 1)$ to $T(c\Rightarrow 1)$ and from $T(c\Rightarrow j)$ to $T(c+1\rightarrow 1)$.
	Additionally, 
  in the very first and in the final layer of $\TG$, 
  all edges $\{w^c_{i,j},w^c_{i,j+1}\}$ are present and there is an edge from $x_3$ to $w^c_{i,1}$ for all $c \in \set{k}$ and $i\in\set{n-1}$.
  Moreover, 
  for every $c \in \set{k}$,
  there is an edge from $x_3$ to $w^c_{i,1}$ for all~$i\in\set{n-1}$ in all layers of index larger than~$T(c\Rightarrow c')$,
  with~$c'=\max\set{k}\setminus\{c\}$.
	This gadget is illustrated in the top part of \cref{fig:clique-red-3-1}.
	
	Next, we will describe the gadget that verifies that vertices selected in the previous gadget are pairwise adjacent.
	There is one such gadget for every edge $e=\{v^c_j,v^{c'}_{j'}\}\in E$, $1\leq c<c'\leq k$, $j,j' \in \set[0]{n-1}$.
	The gadget consists of a root vertex $u^e_0$ and four paths.
	The root is blocked in every step except for the final two.
	There is an edge between $u^e_0$ and $x_3$ in the first and the~$(\tau-2)$nd layer.
	The first vertex of each of the four paths is adjacent to~$u^e_0$ in the first and in the final layer.
	The edges of the paths are present in every layer.
	These paths consist of $n-1-j$, $j$, $n-1-j'$, and $j'$ vertices, respectively.
	The vertices on the path of size $n-1-j$ are blocked in every time step except for step $T(c\rightarrow c')$.
	Those on the path of size $j$ are blocked except for step $T(c\Rightarrow c')$.
	The vertices on the path of size $n-1-j'$ are blocked except for step $T(c'\rightarrow c)$.
	Finally, those on the path of size $j'$ are blocked except for step  $T(c' \Rightarrow c)$.
	
	Finally, there is a gadget whose purpose is to waste extraneous budget for changes.
	It consists of 
	$\tau-2$ paths.
	There are $\tau-4$ paths $P_3,\ldots,P_{\tau-2}$ containing $d-(n-1)$ vertices each,
	one path $P_{2}$ that consists of $d-n$ vertices,
	and one path $P_{\tau}$ that consists of $\binom{k}{2}$ vertices.
	For each~$i\in\set[2]{\tau}\setminus\{\tau-1\}$,
	the first vertex in $P_i$ is adjacent to $x_3$ exactly in the first and $i$th layer,
	where in all but the $i$th layer,
	all vertices from~$P_i$ are blocked.
	\cqed
\end{construction}

\begin{lemma}
	\label{lemma:mc-clique-red}
	The input instance to \cref{constr:mc-clique-red} is a \yes-instance for \prob{Multicolored Clique} if and only if the output instance is a \yes-instance for \mscTsc{}.
\end{lemma}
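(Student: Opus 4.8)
The plan is to verify both directions of the equivalence; that the construction is computable in polynomial time is routine. I would first isolate two elementary facts that drive everything. \textbf{(i)} A vertex $v$ that is blocked in step $t$ cannot change colour between layers $t-1$ and $t$: otherwise each of the $d$ private neighbours attached to $v$ in those two layers would also have to change, for a total of $d+1$ changes. \textbf{(ii)} A path whose edge set is present in every layer has exactly two proper $2$-colourings, so at any step it either keeps every colour or flips every colour; if additionally all of its vertices are blocked outside a single step $t$, it can flip only at step $t$, and flipping costs exactly its length. Together with the pinned colouring of the first gadget --- $f_t(x_1)=1$, $f_t(x_2)=2$ for all $t$ (both are always blocked, so constant after normalisation), whence the edge $\{x_2,x_3\}$ forces $f_1(x_3)=1$ and the edges $\{x_1,x_3\}$ force $f_t(x_3)=2$ for $t\ge 2$ --- these facts make every gadget's behaviour deterministic up to a few binary choices, and the waste-budget paths are there precisely so that the cost of \emph{every} step is forced to equal $d=\abs{E}$ whenever a solution exists.

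\RD{} Given a multicoloured clique $X=\{v^1_{j_1},\dots,v^k_{j_k}\}$ with $v^c_{j_c}\in V_c$, I would build a multistage $2$-colouring that encodes $X$. Colour the first gadget as above. For colour class $c$, arrange the selection gadget so that exactly $j_c$ of its $n-1$ tracks flip during the $k-1$ first-half steps $T(c\rightarrow c')$ and the complementary $n-1-j_c$ tracks flip during the $k-1$ second-half steps $T(c\Rightarrow c')$, so that over the $2(k-1)$ steps belonging to class $c$ every track flips exactly once (this fixes a consistent colouring for the first and last layers, where all the dormant track edges reappear). For each of the $\binom{k}{2}$ edges $e=\{v^c_{j_c},v^{c'}_{j_{c'}}\}$ of $X$, flip the four paths of its edge gadget --- of lengths $n-1-j_c$, $j_c$, $n-1-j_{c'}$, $j_{c'}$ --- at the steps $T(c\rightarrow c')$, $T(c\Rightarrow c')$, $T(c'\rightarrow c)$, $T(c'\Rightarrow c)$ respectively. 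Flip each waste-budget path at its designated step, flip $P_\tau$ (of length $\binom{k}{2}$) at the last step, and flip the gadget of every edge outside $X$ only at the last step. A direct check then shows every layer is properly $2$-coloured --- the one delicate point is the first and last layers, where all normally-absent edges are simultaneously present, so one must confirm the alternating colourings along the tracks and the gadget paths mesh with $f(x_3)$ --- and that at each step the contributions sum to exactly $d$: at $T(c\rightarrow c')$ the $j_c$ track flips, the $n-1-j_c$ flip of the chosen edge gadget, and the waste-path flip of size $d-(n-1)$ add up to $d$, and symmetrically for the $\Rightarrow$-steps and the final step.

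\LD{} Conversely, from a multistage $2$-colouring in which every step has cost at most $d$, I would recover a clique. Using (i), (ii) and the pinned colouring of $x_3$, each waste-budget path flips exactly at its own step and consumes a precisely determined amount of that step's budget, leaving a tight residual of $n-1$ at each class step and $\binom{k}{2}$ at the last step. At a first-half step $T(c\rightarrow c')$ this residual must be shared between the class-$c$ selection vertices and the unblocked paths of the edge gadgets attached to that step; by (ii) the only way to meet the residual at all $2(k-1)$ steps of class $c$ is that a fixed number $j_c$ of tracks flip throughout the first half and the remaining $n-1-j_c$ throughout the second half, which picks out a vertex $v^c_{j_c}\in V_c$, and that for each pair $\{c,c'\}$ at most one edge gadget can ``cooperate'' --- necessarily one for an edge $\{v^c_j,v^{c'}_{j'}\}$ with $j=j_c$ and $j'=j_{c'}$, since its four path lengths must match the four residuals at $T(c\rightarrow c')$, $T(c\Rightarrow c')$, $T(c'\rightarrow c)$, $T(c'\Rightarrow c)$. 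Since $d=\abs{E}$ and the last step already spends $\binom{k}{2}$ on $P_\tau$, at most $\abs{E}-\binom{k}{2}$ edge gadgets can be flipped only at the last step, so at least $\binom{k}{2}$ edges cooperate; as there are exactly $\binom{k}{2}$ pairs of colour classes and at most one cooperating edge per pair, every pair $\{c,c'\}$ yields an edge joining $v^c_{j_c}$ and $v^{c'}_{j_{c'}}$, so $\{v^1_{j_1},\dots,v^k_{j_k\}}$ is a multicoloured clique.

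\textbf{Main obstacle.} The hard part is the index and budget bookkeeping in the \LD{}-direction: turning ``every step stays within budget'' into ``the selection tracks and the edge-gadget paths can coexist only if they encode one vertex per class and one matching edge per pair'', and in particular controlling the first and last layers, where all the normally-absent edges appear at once and could in principle impose unintended constraints. Choosing the step indices $T(c\rightarrow c')$, $T(c\Rightarrow c')$ and the various path lengths so that no step is ever over budget unless the encoded selection is genuinely a clique is where essentially all of the effort lies; the \RD{}-direction is then mostly a matter of exhibiting the colouring that realises these tight equalities and checking properness.
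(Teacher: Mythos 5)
Your overall strategy matches the paper's (blocked vertices cannot be recolored, the pinned coloring of $x_1,x_2,x_3$, tight budgets enforced by the waste paths, and counting at least $\binom{k}{2}$ ``cooperating'' edge gadgets), but the backward direction has a genuine gap at its central step. You claim that ``by (ii) the only way to meet the residual at all $2(k-1)$ steps of class $c$ is that a fixed number $j_c$ of tracks flip throughout the first half.'' Fact (ii) does not apply to the selection tracks: the edges $\{w^c_{i,j},w^c_{i,j+1}\}$ are \emph{not} present in every layer, and the vertices of a track are not blocked outside a single step --- $w^c_{i,j}$ is free exactly at the two steps $T(c\rightarrow j)$ and $T(c\Rightarrow j)$, which differ with $j$, so a track never flips as a unit. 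The pinning via $x_3$ only forces each individual vertex $w^c_{i,j}$ to flip at one of its two steps; the per-step budget merely upper-bounds how many may flip at each step. Nothing in your argument prevents the number of first-half flips from varying with the target class $c'$, in which case the cooperating edge for the pair $\{c,c'\}$ could use a \emph{different} vertex of $V_c$ for each $c'$, and the $\binom{k}{2}$ cooperating edges would not assemble into a multicolored clique (so the reduction would accept, e.g., any graph with one edge between every pair of classes). Closing this gap requires exactly the propagation argument the construction's timed track edges are there for: if $w^c_{i,j}$ flips at $T(c\rightarrow j)$, then the edge $\{w^c_{i,j},w^c_{i,j+1}\}$, present in layer $T(c\rightarrow j+1)$ while $w^c_{i,j}$ is blocked there, forces $w^c_{i,j+1}$ to flip at $T(c\rightarrow j+1)$, and conversely; hence the count $i_{c,j}$ of first-half flips is independent of $j$ and a single vertex $v^c_{i_c}$ is selected. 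This is the paper's claim $i_{c,j}=i_{c,j+1}$, and it is proved structurally, not by budget accounting.

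A smaller but real issue is your forward direction near the end of the lifetime. The edge between each root $u^e_0$ and $x_3$ late in the sequence forces \emph{every} root to be recolored from $2$ to $1$ at the second-to-last step (cost exactly $d=\abs{E}$ there), after which only the roots of non-clique edges flip back at the last step, alongside $P_\tau$, for a total of $\binom{k}{2}+(d-\binom{k}{2})=d$. Your plan to ``flip the gadget of every edge outside $X$ only at the last step'' skips the second-to-last transition entirely: the four paths cannot flip there (they are blocked), and leaving a non-clique root at color $2$ clashes with the $x_3$--root edge, while flipping it to $1$ in the final layer clashes with the first vertices of its unflipped paths. The fix is simply the paper's scheduling of the roots; but as written your budget bookkeeping for the last two steps does not go through.
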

\begin{proof}
	\RD
	Suppose that $X = \{v^1_{i_1}, \ldots, v^k_{i_k}\}$ with $v^i_{j_i} \in V_i$ is a multicolored clique in $G$.
	We must construct proper 2-colorings $f_1,\ldots,f_\tau \colon V' \rightarrow \{1,2\}$ with $\delta(f_t,f_{t+1}) \leq d$.
	
	We let $f_t(x_1) \coloneqq 1$ and $f_t(x_2) \coloneqq 2$ for all $t \in\set{\tau}$.
	The coloring of $x_3$ is $f_1(x_3) \coloneqq 1$ and $f_t(x_3)\ceq 2$ for all $t\in\set[2]{\tau}$.
	Next, 
	we consider the vertices that are part of the second type of gadget
	(see~\cref{fig:clique-red-3-1} for an illustrative example).
	\begin{figure}
		\includegraphics[width=\textwidth]{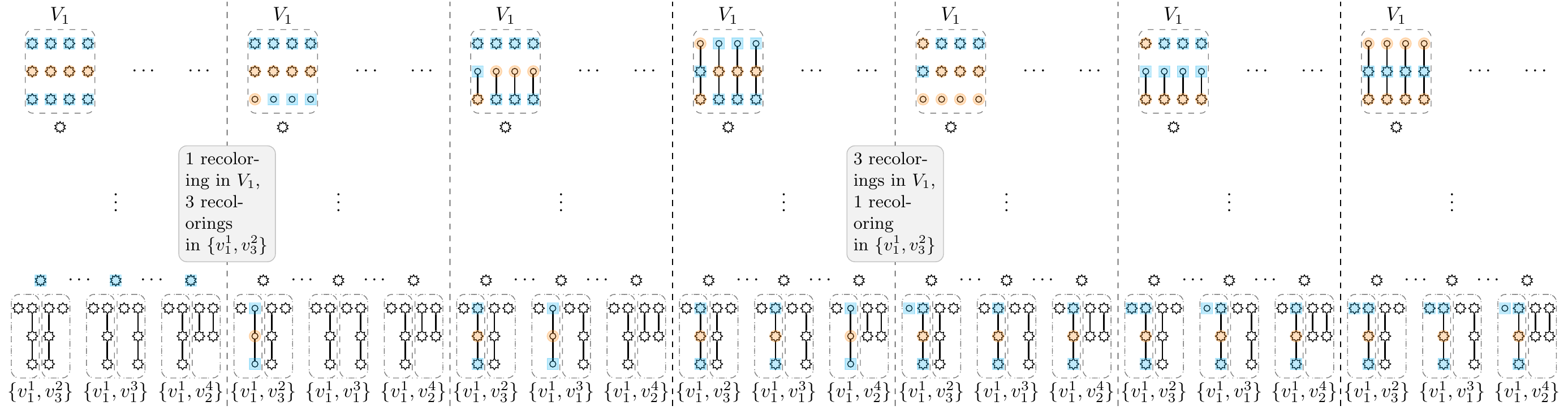}
		\caption{Illustrative example of the recolorings in \cref{constr:mc-clique-red}.
		Here,
		$n=5$ and~$k=4$.
		The recolorings here represents the case that vertex~$v_1^1$ is chosen into the clique,
		together with its incident edges to~$v_3^2$, $v_1^3$, and~$v_2^4$.}
		\label{fig:clique-red-3-1}
	\end{figure}
	Let $c\in\set{k}$, then for any $t \in \set{\tau}$, we let:
	\begin{align*}
		f_t(w^c_{i,j}) \coloneqq
		\begin{cases}
		1 + (j \bmod 2) , & \text{ if } t \leq T(c\rightarrow j),\\
		1 + (j \bmod 2) , & \text{ if }  T(c\rightarrow j)+ 1 \leq t \leq T(c\Rightarrow j) \text{ and } i > i_c, \\
		1 + (j +1 \bmod 2) , & \text{ if }  T(c\rightarrow j)+ 1 \leq t \leq T(c\Rightarrow j) \text{ and } i \leq i_c, \\
		1 + (j+1 \bmod 2), & \text{ if } t > T(c\Rightarrow j) +1.
		\end{cases}
	\end{align*}
	\noindent
	Next, we will give the coloring of the vertices in the gadget that verifies adjacency within the clique.
	First, consider an edge $e$ that has at least one endpoint outside of the clique.
	Then, $f_t(u^e_0) \coloneqq 2$ for all $t \in \set{\tau-2}\cup\{\tau\}$ and $f_{\tau-1}(u^e_0) \coloneqq 1$.
	If $y$ is the the $i$-th vertex on one of the four paths in the gadget, then $f_t(y) \coloneqq 1 + (i + 1 \bmod 2)$ for $t\in\set{\tau}$.
	Next, consider an edge $e=\{v^c_{i_c},v^{c'}_{i_{c'}}\}$ with $c \neq c'$ that has both endpoints in $X$.
	Then, $f_t(u^e_0) \coloneqq 2$ for all $t \in \set{\tau-2}$ and $f_t(u^e_0) \coloneqq 1$ for $t \in \{\tau-1,\tau\}$.
	Let $y^{T(c\rightarrow c')}_i$, $y^{T(c\Rightarrow c')}_i$, $y^{T(c'\rightarrow c)}_i$, and $y^{T(c'\Rightarrow c)}_i$  be the $i$-th vertex on the path containing $n-1-c$, $c$, $n-1-c'$, and $c'$ vertices, respectively.
	Then, the coloring of this vertex is:
	\begin{align*}
		f_t(y^{t'}_i) \coloneqq
		\begin{cases}
			1 + (i + 1 \bmod 2), & \text{ if } t\leq t', \\
			1 + (i \bmod 2), & \text{ if } t> t',
		\end{cases}
	\end{align*}
	for $t' \in \{T(c\rightarrow c'),T(c\Rightarrow c'), T(c'\rightarrow c),T(c'\Rightarrow c)\}$.
	As to the gadget used to waste budget,
	The vertices on the paths $P_i$ only change (completely) their colors in the $i$th layer.
	Finally, any vertex $u$ introduced to block another vertex $v$ in step $t$ receives the coloring $f_{t'} (u) \coloneqq 3 - f_{t}(v)$ for all $t' \in\set{\tau}$.
	
	It is easy to verify that $f_1,\ldots,f_\tau$ are proper 2-colorings.
	We must show that $\delta(f_t,f_{t+1}) \leq d$ for all $t \in \set{\tau-1}$.
	First, consider $t=T(c\rightarrow c')$.
	The number of vertices that change their colors in order to use up budget between $f_t$ and $f_{t+1}$ is $d-(n-1)$.
	The vertices $w^c_{i,c'}$ for all $i \in \set{i_c}$ also change colors.
	Finally, the only vertices in an adjacency verification gadget that change colors are in the gadget for the edge $\{v^c_{i_c}\}$.
	All vertices in the path containing $n-i_c-1$ vertices change their colors.
	No other vertex changes its colors.
	That accounts for a total of $d-(n-1) + i_c + (n-i_c -1) = d$ changes.
	The argument for $t=T(c\Rightarrow c')$ is analogous.
	Next, consider $t = \tau-2$.
	Between $f_t$ and $f_{t+1}$, only~$y^e_0$ for all $e\in E$ change colors,
	that is,
	$d$ changes.
	For $t=\tau-1$, the vertices that change colors between $f_t$ and $f_{\tau+1}$ are $\binom{k}{2}$ vertices that waste budget and all $y^e_0$ for any $e \in E$ that does not have both endpoints in~$X$.
	This accounts for $\binom{k}{2} + (d - \binom{k}{2}) = d$ changes.
	
	\LD{}
	Let $f_1,\ldots,f_\tau \colon V' \rightarrow \{1,2\}$ be proper $2$-colorings of the layers of $\TG$ with $\delta(f_t,f_{t-1}) \leq d$ for all~$t \in \set{\tau-1}$.
	
	We note that if a vertex $v \in V'$ is blocked in time step $t$, then $f_{t-1}(v) = f_{t}(v)$.
	Otherwise, the~$d$~vertices adjacent to $v$ in both $E_{t-1}$ and $E_t$ would also have to be re-colored for a total of~$d+1$~vertices that change colors.
	
	\Wilog{}, we may assume that $f_1(x_1) = 1$.
	This implies that $f_t(x_1) = 1$ and $f_t(x_2) = 2$ for all $t \in \set{\tau}$, since $x_1$ is blocked in all steps.
	This fact, 
	in turn, 
	means that $f_1(x_3) = 1$ and $f_{t}(x_3) =2$ for all $t\in \set[2]{\tau}$.
	Hence, all vertices on the path $P_i$ must change colors between $f_{i-1}$ and $f_i$.
	For $u^e_0$, 
	$e\in E$, 
	we have that $f_t(u^e_0) = 2$ for all $t\in\set{\tau-2}$.
	Therefore, $f_{\tau-1}(u^e_0) = 1$.
Hence, in the remaining gadgets, at most~$n-1$ vertices may change colors between the layers $T(c\rightarrow c')$ and $T(c\rightarrow c')+1$, 
	no vertex may be re-colored between layers $\tau-2$ and $\tau-1$, 
	and at most~$d - \binom{k}{2}$ may change between layers $\tau-1$ and $\tau$.
	
	Because $f_1(x_3) = 1$, it follows that $f_1(w^c_{i,j}) = 1 + (j \bmod 2)$ for all $c\in\set{k}$, $i\in\set{n-1}$, and $j \in\set{k-1}$.
	Because $f_t(x_3) = 2$ for all $t>1$, it follows that $f_{t'}(w^c_{i,j}) = 1 + (j + 1 \bmod 2)$ for all $t' > T(c\Rightarrow c')$ with $c' = \max\set{k}\setminus{c}$.
	Since $w^c_{i,j}$ is blocked in all other steps, it must change colors between the layers $T(c\rightarrow j)-1$ and $T(c\rightarrow j)$ or between the layers $T(c\Rightarrow j) -1$ and $T(c\Rightarrow j)$.
	
	For any $c\in\set{k}$, let $i_{c,j}\coloneqq \abs{\{i \in\set{n-1} \mid f_{T(c\rightarrow j)-1} (w^c_{i,1})\neq f_{T(c\rightarrow j)} (w^c_{i,1}) \}}$ denote the number of vertices in the color class gadget re-colored in step $T(c\rightarrow j)$.
	We claim that $i_{c,j}=i_{c,j+1}$ for all $j\in\set{k-2}$ and, hence, $i_{c,j}=i_{c,j'}$ for all $j,j'\in\set{k-1}$.
	This follows from the fact that, if $w^c_{i,j}$ is re-colored in step $T(c\rightarrow j)$, then its color is $1+(j+1 \bmod 2)$ in layer $T(c\rightarrow j+1)$.
	Moreover, the edge $\{w^c_{i,j},w^c{i,j+1}\}$ appears in layer $T(c\rightarrow j+1)$.
	The color of $w^c_{i,j}$ cannot change in step $T(c\rightarrow j+1)$, since this vertex is blocked.
	Hence, $f_{T(c\rightarrow j+1)}(w^c_{i,j+1}) = 1 + (j \mod 2)$.
	Since $f_{T(c\rightarrow j)}(w^c_{i,j+1}) = 1 + (j + 1 \mod 2)$, it follows that $w^c_{i,j+1}$ changes colors in this step.
	By a similar argument, $w^c_{i,j+1}$ cannot change colors in step $T(c\rightarrow j+1)$ if the color of $w^c_{i,j}$ does not change in step $T(c\rightarrow j)$.
	This implies the claim.
	We also note that this implies that $n-c_{i,j} -1$ vertices must change colors in step $T(c\Rightarrow j)$.
	We let~$i_c \coloneqq i_{c,1}$ and $X \coloneqq \{v^1_{i_1},\ldots,v^k_{i_k}\}\subseteq V$.
	We will show that $X$ is a clique.
	
	Let $F_i \coloneqq \{e \in E \mid f_\tau(u^e_0) = i \}$ for $i \in \{1,2\}$.
	Since, as we mentioned before, $f_{\tau-1}(u^e_0) = 1$ for all $e\in E$ and at most $d - \binom{k}{2}$ of these vertices can change colors between the layers $\tau-1$~and~$\tau$, it follows that $\abs{F_2} \leq \abs{E} - \binom{k}{2}$ and, thus, $\abs{F_1} \geq \binom{k}{2}$.
	We will show that the graph $(X,F_1)$ is complete.
	Since $F_1$ contains $\binom{k}{2}$ edges, it suffices to prove that all edges in $F_1$ have both of their endpoints in $X$.
	
	First, we consider the coloring of the four paths in the gadget representing an edge $e\in E^1$ with $e=\{v^c_i,v^{c'}_{i'}\}$.
	Let $y_i$ denote the $i$-th vertex on one of these paths.
	Since $y_1$ is adjacent to $u^e_0$ in the first layer and $f_1(u^e_0)=2$, it follows that $f_1(y_i)=1 + (i+1 \bmod 2)$.
	However, since $e\in F_1$, we have that $f_\tau(u^e_0) = 1$ and thus $f_\tau(y_i) = 1 + (i \bmod 2)$.
	Hence, the color of each of the four paths must change.
	Now, assume that one of the endpoints of $e$, \wilog{} $v^c_i$, is not in $X$.
	Then, $i \neq i_c$. 
	First, suppose that $i < i_c$.
	Consider the path of length $n-i-1$ that is part of the gadget for $e$.
	As we argued before, the color of every vertex on this path must change in some step.
	Since all of the vertices are blocked in every step but $T(c\rightarrow c')$, it follows that they must change colors in this step.
	In that step, $i_c$ vertices in the gadget for color class $c$ are recolored.
	But then, $i_c + n - i -1 > i_c + n - i_c -1 = n -1$ vertices in those two gadgets are colored in step $T(c\rightarrow c')$.
	This contradicts the fact that at most $n-1$ in these gadgets may change colors in that step.
	A similar argument, but involving the step $T(c\Rightarrow c')$, applies if $i>i_c$.
	This proves that $i=i_c$ and therefore $v^c_i \in X$.
	Hence, both endpoints of all edges in $F_1$ are in $X$ and, therefore, $X$ is a clique with one vertex in each color class.
\end{proof}

This allows us to prove \cref{prop:whardfesUtau}.

\begin{proof}[Proof of \cref{prop:whardfesUtau}]
	It is easy to see that \cref{constr:mc-clique-red} can be computed in polynomial time.
	Moreover, the edge $\{x_1,x_2\}$ forms a feedback edge set of size $1$ in the underlying graph of $\TG$, the temporal graph output by \cref{constr:mc-clique-red}.
	This along with \cref{lemma:mc-clique-red} implies the claim.
\end{proof}

\subsection{Maximum parameterization}

We turn our attention to the parameterized complexity of \mscTsc{} with respect to several structural parameters under the maximum parameterization.
We begin with $\ncc_{\maxp}$, the maximum number of connected components over all layers.
Observe that under any 2-coloring the color of a single vertex determines the coloring of its entire connected component.

\begin{observation}
	Every 2-colorable static graph with~$N$ connected components admits exactly~$2^N$ different 2-colorings.
\end{observation}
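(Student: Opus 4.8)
The plan is to first handle a single connected component and then combine components multiplicatively. So first I would fix a connected $2$-colorable static graph $H$ and a vertex $v \in V(H)$. Any proper $2$-coloring $f$ of $H$ is determined by the value $f(v) \in \{1,2\}$: for every other vertex $w$, pick any path $v = z_0, z_1, \ldots, z_p = w$ in $H$; since $f(z_{i-1}) \neq f(z_i)$ for each edge, we get $f(w) = f(v)$ if $p$ is even and $f(w) = 3 - f(v)$ if $p$ is odd. Hence there are at most $2$ proper $2$-colorings of $H$. Conversely, since $H$ is $2$-colorable it contains no odd cycle, which guarantees that the parity of the length of a $v$--$w$ path is independent of the chosen path; therefore the two candidate assignments (one for each choice of $f(v)$) are both well defined and both proper. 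Thus a connected $2$-colorable graph has exactly $2$ proper $2$-colorings.

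Next I would extend this to a graph $G$ with connected components $H_1, \ldots, H_N$. Every proper $2$-coloring of $G$ restricts to a proper $2$-coloring of each $H_i$, and conversely any tuple consisting of one proper $2$-coloring of each $H_i$ glues to a proper $2$-coloring of $G$, since no edge of $G$ joins distinct components. This gives a bijection between proper $2$-colorings of $G$ and the product set $\prod_{i=1}^N \{\text{proper $2$-colorings of } H_i\}$. By the previous paragraph each factor has size exactly $2$ (note each $H_i$ is $2$-colorable, being a subgraph of $G$), so the number of proper $2$-colorings of $G$ is $2^N$.

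There is essentially no obstacle here; the only point requiring a moment's care is the well-definedness of the forced extension within a component, which is exactly where $2$-colorability (absence of odd cycles) is used.
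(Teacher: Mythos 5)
Your proof is correct and follows the same idea the paper uses (stated just before the observation): the color of a single vertex determines the coloring of its whole connected component, each component thus has exactly two proper $2$-colorings, and the components combine multiplicatively. Your elaboration of the parity/odd-cycle argument for well-definedness is a fine fleshing-out of what the paper leaves implicit.
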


\noindent
This implies that \mscAcr{} is fixed-parameter tractable with respect to~$\ncc_{\maxp}$.

\begin{proposition}
	\label{prop:fpt-cc}
	\mscTsc{} admits an~$\O(4^{\ncc_{\maxp}(\TG)}\tau)$-time algorithm.
\end{proposition}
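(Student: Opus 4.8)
The plan is a dynamic program over the layers that exploits the observation above: a proper $2$-coloring of a $2$-colorable static graph is fixed by one ``sign'' choice in each connected component. I first check, in a single pass, that every layer $\TG_t$ is $2$-colorable, rejecting otherwise. Writing $N := \ncc_{\maxp}(\TG)$, I then compute for each $t$ the connected components $B^t_1,\dots,B^t_{c_t}$ of $\TG_t$ (with $c_t \le N$ by definition of $\ncc_{\maxp}$), fix a reference vertex $r^t_i \in B^t_i$ and one proper $2$-coloring $g_t$ of $\TG_t$. The proper $2$-colorings of $\TG_t$ are then exactly the colorings $f^b_t$ indexed by sign vectors $b \in \{0,1\}^{c_t}$, where $f^b_t$ equals $g_t$ on $B^t_i$ if $b_i = 0$ and equals its pointwise complement on $B^t_i$ if $b_i = 1$; there are at most $2^{c_t} \le 2^N$ of them.

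Next I build a layered digraph whose level-$t$ nodes are the sign vectors of $\TG_t$ and which has an arc from $b$ (level $t$) to $b'$ (level $t+1$) exactly when $\delta(f^b_t, f^{b'}_{t+1}) \le d$. A sequence $f^{b_1}_1, \dots, f^{b_\tau}_\tau$ is a feasible multistage $2$-coloring if and only if $b_1 b_2 \cdots b_\tau$ is a directed path in this digraph, so $(\TG,d)$ is a \yes-instance iff some level-$\tau$ node is reachable from some level-$1$ node, which a single breadth-first search decides in time linear in the number of arcs, namely $\O(4^N \tau)$. What remains is to produce those arcs within the same budget, i.e.\ to evaluate $\delta(f^b_t, f^{b'}_{t+1})$ for all $\le 4^N$ pairs per transition in amortized constant time.

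For this, fix consecutive layers $t, t+1$ and bucket the vertices in one pass: let $n_{ijk}$ be the number of $v \in B^t_i \cap B^{t+1}_j$ with $g_t(v) \oplus g_{t+1}(v) = k$, for $i \in \set{c_t}$, $j \in \set{c_{t+1}}$, $k \in \{0,1\}$. Since for such $v$ one has $f^b_t(v) \neq f^{b'}_{t+1}(v)$ iff $b_i \oplus b'_j \oplus (g_t(v) \oplus g_{t+1}(v)) = 1$, we get the closed form
\[
  \delta(f^b_t, f^{b'}_{t+1}) = \sum_{i=1}^{c_t} \sum_{j=1}^{c_{t+1}} \bigl( n_{ij0}\,[\,b_i \oplus b'_j = 1\,] + n_{ij1}\,[\,b_i \oplus b'_j = 0\,] \bigr),
\]
which is an affine function of each single bit of $b$ and of $b'$. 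Hence, enumerating the pairs $(b,b')$ so that consecutive pairs differ in exactly one coordinate (nested Gray codes, together with per-coordinate increments that are themselves maintained incrementally as the source vector $b$ changes), each value of $\delta$ is obtained from the previous one by an $\O(1)$ update. Adding the straightforward preprocessing (the initial $2$-colorability test and the bucketing) and the final breadth-first search, the algorithm runs within the claimed $\O(4^N \tau)$ bound.

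Correctness is immediate from the path characterization, so the only real — and rather minor — obstacle is the amortized-constant evaluation of $\delta$: keeping the Gray-code increments up to date, which the affine form above makes routine.
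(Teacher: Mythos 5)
Your proposal is correct and follows essentially the same route as the paper: enumerate the at most $2^{\ncc_{\maxp}(\TG)}$ proper $2$-colorings of each layer (each determined by one sign choice per connected component), build a layered auxiliary digraph with arcs between colorings of consecutive layers whose recoloring cost is at most $d$, and decide the instance by a reachability/path check. Your additional bucketing-plus-Gray-code machinery for evaluating the recoloring costs in amortized constant time per pair is a refinement the paper simply glosses over (it only counts the $\O(4^{\ncc_{\maxp}(\TG)}\tau)$ arcs), so it is a welcome but not divergent addition.
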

\begin{proof}
	Let $N\coloneqq \ncc_{\maxp}(\TG)$.
	We create an auxiliary static directed graph in the following manner.
	For each layer of $\TG$,
	we include a node for every one of the at most~$2^N$ many 2-colorings of this layer.
	There is a directed edge from a node representing a coloring of $\TG_{t}$ to a node representing a coloring of $\TG_{t+1}$ if the recoloring cost between the two is at most $d$.
	Finally, add two nodes $s,t$ and connect $s$ to every node corresponding to a coloring of the first layer and connect  every node that corresponds to a coloring of the final layer to $t$.
	Then, $(\TG,d)$ is a \yes-instance if and only if the auxiliary graph contains a path from $s$ to $t$.
	Moreover, the auxiliary graph contains at most~$\O(4^{\ncc_{\maxp}(\TG)}\tau)$ edges.
\end{proof}

This result is essentially a stronger version of the statement in \cref{cor:fromms2sat} that \mscTsc{} is fixed-parameter tractable with respect to $n$, the number of vertices.
However, 
$\ncc$ and larger parameters are the only structural parameters that yield fixed-parameter tractability with respect to the maximum parameterization.

\begin{proposition}
	\label{prop:pNPh-max}
	\mscTsc{} is NP-hard even for constant values of $\dcc_{\maxp}$, $\vc_{\maxp}$, $\fes_{\maxp}$, and~$\bw_{\maxp}$.
\end{proposition}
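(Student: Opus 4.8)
The plan is to observe that no new reduction is needed: the temporal graph $\TG'$ produced by \cref{constr:nphard-fewedges} already witnesses all four lower bounds, because each of its layers is structurally trivial. First I would record the exact shape of a layer. By construction every layer of $\TG'$ equals $\{e\}\cup E^+_i$ for some edge $e$ of the original temporal graph (on the vertex set $V$) and some $i\in\{1,2,3\}$, where $E^+_i$ consists of two edges on the six auxiliary vertices $V^+=\{u_1,u_2,u_3,u'_1,u'_2,u'_3\}$; since $V\cap V^+=\emptyset$ and each $E^+_i$ is itself a matching, every layer of $\TG'$ is the disjoint union of exactly three edges together with some isolated vertices. (This is just the maximum-degree-one property already asserted in \cref{thm:nphard-fewedges}, made concrete.)

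Next I would compute the parameters on such a static graph $H$ (three disjoint edges plus isolated vertices). One endpoint of each edge forms a vertex cover, so $\vc(H)=3$; $H$ is a forest, so $\fes(H)=0$; listing the two endpoints of each edge consecutively (followed by the isolated vertices) shows $\bw(H)\le 1$; and deleting one endpoint of each edge leaves an edgeless graph, which is complete multipartite and hence a co-cluster graph, so $\dcc(H)\le 3$. Because the maximum parameterization is just the maximum of the per-layer values, it follows that $\vc_{\maxp}(\TG')=3$, $\fes_{\maxp}(\TG')=0$, $\bw_{\maxp}(\TG')\le 1$, and $\dcc_{\maxp}(\TG')\le 3$. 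Combining this with the NP-hardness established in \cref{thm:nphard-fewedges} (for exactly this family of temporal graphs) yields the proposition.

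The proof is essentially bookkeeping, so there is no real obstacle; the two points that deserve a line of care are (i) verifying that the three edges of a layer genuinely lie on pairwise disjoint vertex sets --- which they do, since the two gadget edges live on $V^+$, the third on $V$, and $V^+$ itself carries a matching in each $E^+_i$ --- and (ii) using the characterization of co-cluster graphs as $(K_1\cup K_2)$-free graphs, so that an edgeless graph qualifies. As an alternative to the direct $\dcc$ bound, one could invoke $\dcc\preceq\vc$ together with \cref{prop:param-preservation} to get $\dcc_{\maxp}\preceq\vc_{\maxp}$, but the explicit argument is shorter.
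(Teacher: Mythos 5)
Your proposal is correct and matches the paper's own proof: both invoke \cref{thm:nphard-fewedges} and observe that for temporal graphs in which every layer has at most three edges and maximum degree one, the per-layer (and hence maximum) values of $\vc$, $\fes$, $\bw$, and $\dcc$ are bounded by small constants. Your version just spells out the bookkeeping (disjointness of the three edges, edgeless graphs being co-clusters) in slightly more detail than the paper does.
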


\begin{proof}
	By \cref{thm:nphard-fewedges}, \mscAcr{} is NP-hard even if each layer contains at most three edges and the maximum degree in each layer is at most one.
	For temporal graphs $\TG$ with this property, $\dcc_{\maxp}(\TG), \vc_{\maxp}(\TG) \leq 3$, $\bw_{\maxp}(\TG) \leq 1$, and~$\fes_{\maxp}(\TG) = 0$.
\end{proof}

We note that \cref{prop:no-pk} implies that \mscAcr{} does not admit a polynomial kernel for any parameter $p$ listed in \cref{fig:param-hier-results}, since $n \succeq p_{\maxp}$ for all of these parameters.

\subsection{Sum parameterization}

We start with the parameterized complexity of \mscTsc{} with respect to several structural parameters under the sum parameterization.
For $\ncc_{\sump}$, fixed-parameter tractability follows from that for~$\ncc_{\maxp}$.

We start by proving that \mscAcr{} is \fpt{} with respect to the distance to co-cluster under the sum parameterization.
This stands in contrast to the maximum parameterization (see \cref{prop:pNPh-max}).
A graph is a co-cluster if and only if it does not contain $K_2 + K_1$ as an induced subgraph.
By a general result obtained by Cai~\cite{Cai1996}, 
this implies that the problem of determining whether~$\dcc(G) \leq k$ for a static graph~$G$ is fixed-parameter tractable with respect to~$k$.
We will make use of the following fact:

\begin{observation}
  \label{obs:cocl-edge-con}
	If $G$ is a co-cluster, then $G$ is edgeless or connected.
\end{observation}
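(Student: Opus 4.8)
The plan is to argue by contradiction, using the forbidden-induced-subgraph characterization of co-clusters recalled immediately above: $G$ is a co-cluster if and only if $G$ contains no induced copy of $K_2 + K_1$.

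First I would suppose, for contradiction, that $G$ is a co-cluster that is neither edgeless nor connected, and fix an edge $\{u,v\} \in E(G)$. Since $\{u,v\}$ is an edge, $u$ and $v$ lie in the same connected component of $G$. Because $G$ is disconnected, there exists a vertex $w$ in a connected component different from the one containing $u$ and $v$. Consequently $w$ is adjacent to neither $u$ nor $v$, so the subgraph induced on $\{u,v,w\}$ consists of exactly the edge $\{u,v\}$ together with the isolated vertex $w$; that is, $G[\{u,v,w\}] \cong K_2 + K_1$. This contradicts the assumption that $G$ is a co-cluster, which proves that $G$ must be edgeless or connected.

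There is essentially no obstacle here; the single point requiring a moment's care is that the third vertex $w$ is non-adjacent to \emph{both} endpoints of the chosen edge, which is immediate once one observes that $u$ and $v$ necessarily share a component (as they are adjacent) and $w$ is picked outside that component. One could alternatively phrase the argument through the fact that co-clusters are precisely the complete multipartite graphs, which are connected as soon as they contain an edge, but the forbidden-subgraph route above is the shortest and matches the characterization already stated in the text.
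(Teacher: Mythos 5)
Your proof is correct; the paper states this as an observation without any proof, treating it as immediate, and your argument via the $K_2 + K_1$ forbidden-induced-subgraph characterization (quoted just before the observation) is exactly the intended, natural justification. Both your main route and the alternative you mention (co-clusters are complete multipartite, hence connected once they contain an edge) are fine.
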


\begin{theorem}
	\label{thm:fpt-dcc-sum}
	\mscTsc{} is fixed-parameter tractable with respect to $\dcc_{\sump}$.
\end{theorem}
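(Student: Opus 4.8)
The plan is to combine a modulator-based branching with a small scheduling argument for the vertices that are isolated in individual layers. Write $k \coloneqq \dcc_{\sump}(\TG)$. Since every summand of $\dcc_{\sump}(\TG)=\sum_{t=1}^{\tau}\max\{1,\dcc(\TG_t)\}$ is at least~$1$, we have $\tau\le k$. For each layer I would compute a minimum co-cluster modulator $S_t$ with $\abs{S_t}=\dcc(\TG_t)\le k$; this is fixed-parameter tractable because co-cluster graphs are exactly the $(K_1\cup K_2)$-free graphs (complements of $P_3$-free graphs), so a bounded-depth branching on an induced $K_1\cup K_2$ suffices~\cite{Cai1996}. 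Put $S\coloneqq\bigcup_{t=1}^{\tau}S_t$, so $\abs{S}\le\sum_t\abs{S_t}\le k$, and reject immediately if some layer contains an odd cycle. The structural fact I would exploit is that $\TG_t-S$, being an induced subgraph of the co-cluster $\TG_t-S_t$, is itself a co-cluster, i.e.\ complete multipartite; and since it must be $2$-colorable, $\TG_t-S$ is either edgeless or a complete bipartite graph $K_{A_t,B_t}$ with $A_t,B_t\neq\emptyset$ — call the layer \emph{edgeless} or \emph{bipartite} accordingly.

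Next I would branch over all tuples $(g_1,\dots,g_\tau)$ of proper $2$-colorings $g_t\colon S\to\{1,2\}$ of $\TG_t[S]$, together with, for every bipartite layer~$t$, the binary choice of which of $A_t,B_t$ receives colour~$1$; this gives at most $2^{(\abs{S}+1)\tau}\le 2^{(k+1)k}$ branches. Within a fixed branch the chosen colours propagate: in an edgeless layer every $v\in V\setminus S$ with a neighbour in $S$ is forced to the colour avoided by its $S$-neighbours (discard the branch if both colours occur among them), and in each layer one checks that all forced colours respect the edges incident to $S$, discarding the branch otherwise. After propagation, every vertex $v$ has a determined colour $\phi_v(t)$ in every layer except its \emph{free layers} — those $t$ in which $v$ is isolated in $\TG_t$ — and such free layers occur only among edgeless layers (in a bipartite layer $V\setminus S=A_t\cup B_t$ contains no isolated vertex). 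Vertices isolated in all layers are irrelevant (colour them~$1$ throughout), so I may assume every remaining vertex has at least one determined layer.

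The remaining task is to decide whether the free colours can be filled in so that $\delta(f_t,f_{t+1})\le d$ for all~$t$. For each transition $t\to t+1$ let $D_t$ be the number of vertices whose already-determined colours in layers $t$ and $t+1$ differ; discard the branch if $D_t>d$ for some~$t$. For a vertex~$v$ and a maximal run $[a,b]$ of its free layers: if $2\le a$, $b\le\tau-1$, and $\phi_v(a-1)\neq\phi_v(b+1)$, then any completion recolours $v$ on at least one of the transitions $a-1,a,\dots,b$, and exactly one such recolouring is achievable; every other maximal free run can be completed with $v$ never changing on the surrounding transitions (set $v$ to the colour of the adjacent determined layer(s)). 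Each run of the first kind thus becomes a \emph{token} that must be assigned to one transition in the interval $[a-1,b]$, and the token-intervals of a single vertex are pairwise disjoint. So this branch certifies a \yes-instance if and only if the tokens can be assigned so that transition~$t$ receives at most $d-D_t$ of them; this is a feasibility question for an interval-constrained assignment problem, solvable in polynomial time (a left-to-right greedy, or one maximum-flow computation). For correctness: any valid multistage $2$-colouring of $\TG$ determines a branch and, choosing for each token one transition on which the colouring actually recolours that vertex, a valid assignment, because the number of tokens placed on~$t$ is at most the number of non-forced recolourings on~$t$, i.e.\ $\delta(f_t,f_{t+1})-D_t\le d-D_t$; conversely a valid assignment reassembles into a colouring with $\delta(f_t,f_{t+1})=D_t+(\text{tokens on }t)\le d$. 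The total running time is $2^{\bigO(k^2)}$ times a polynomial.

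The main obstacle I expect is precisely the treatment of the vertices that are isolated in some layers: their colours are not pinned down by any local branching, and the only coupling among them is the global per-transition budget~$d$. The key insight that makes this manageable is that a cheapest completion recolours such a vertex at most once per ``bichromatic'' free run, which reduces the budget question to the interval-assignment problem above; everything else — the finitely-characterised modulator computation, the complete-multipartite dichotomy for $\TG_t-S$, and the colour propagation — is routine bookkeeping.
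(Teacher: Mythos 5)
Your proposal is correct and takes essentially the same route as the paper's proof: per-layer co-cluster modulators, branching over the modulator colorings together with the two colorings of each connected (complete bipartite) remainder, propagation, and then a polynomial-time resolution of the vertices that are isolated in individual layers. Your token/interval-assignment step with residual capacities $d-D_t$ is equivalent to the paper's reduction of the resulting edgeless \msceAcr{} instance to Horn's $(1\mid r_j, p_j=1\mid L_{\max})$ scheduling problem (your ``left-to-right greedy'' is exactly the earliest-due-date rule), and using the union modulator $S$ merely worsens the branching to $2^{\bigO(k^2)}$ versus the paper's $2^{\bigO(k)}$-type bound, which still yields fixed-parameter tractability.
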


\noindent
We will use the following as an intermediate problem.

\decprob{\msceTsc{} (\msceAcr)}{msp2ce}
{A temporal graph~$\TG=(V,(E_t)_{t=1}^\tau)$, proper partial 2-colorings~$f_1,\dots,f_\tau\colon V\to \{1,2\}$, and an integer~$d\in\Nzero$.}
{Are there 2-coloring extensions~$\bar{f}_1,\dots,\bar{f}_\tau$, 
where~$\bar{f}_t$ is the extension of~$f_t$ for every~$t\in\set{\tau}$,
such that $\bar{f}_t$ is a proper 2-coloring of~$(V,E_t)$ for every~$t\in\set{\tau}$ 
and $\delta(f_t,f_{t+1})\leq d$ for every~$t\in\set{t-1}$?}

\noindent
We have the following immediate reduction rule.

\begin{rrule}
 \label{rr:coleps}
 If an edge~$e$ has two colored endpoints, 
 then delete~$e$.
\end{rrule}

\begin{lemma}
 \label{lem:coleps-poly}
 \msceTsc{} is polynomial-time solvable if the input does not contain any edges.
\end{lemma}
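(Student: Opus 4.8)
The plan is to observe that, once the input has no edges, the properness requirement becomes vacuous, so the only constraints left are that each $\bar f_t$ extends $f_t$ and that at most $d$ vertices change color in each of the $\tau-1$ transitions. I would then decouple the problem vertex by vertex. For a vertex $v$ let $t_1<\dots<t_k$ be the time steps at which $f_{t}(v)$ is defined. Before $t_1$, after $t_k$, and between two consecutive defined steps $t_i,t_{i+1}$ with $f_{t_i}(v)=f_{t_{i+1}}(v)$, one can keep the color of $v$ constant, costing nothing; between consecutive defined steps $t_i,t_{i+1}$ with $f_{t_i}(v)\neq f_{t_{i+1}}(v)$ one needs an odd, hence at least one, number of color changes among the transitions $t_i,t_i+1,\dots,t_{i+1}-1$. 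Call $I_{v,i}\coloneqq\{t_i,\dots,t_{i+1}-1\}$ the corresponding \emph{discrepancy interval}; note that the discrepancy intervals of a fixed vertex are pairwise disjoint.

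The key step is to show that the instance is a \yes-instance if and only if the following assignment problem is feasible: assign to every discrepancy interval $I$ (ranging over all vertices) a transition $s(I)\in I$ so that no transition receives more than $d$ intervals. For the ``if'' direction, from such an assignment I would build $\bar f_1,\dots,\bar f_\tau$ by letting each vertex keep its forced colors, flip its color exactly at the transitions assigned to its discrepancy intervals, and stay constant everywhere else (a vertex with no defined step gets a constant color, say $1$); then $\bar f_t$ extends $f_t$, and, using disjointness, the number of vertices changing color at transition $j$ equals the number of intervals $I$ with $s(I)=j$, hence is at most $d$. For the ``only if'' direction, given any feasible $\bar f_1,\dots,\bar f_\tau$, inside each discrepancy interval $I_{v,i}$ there is at least one transition at which $v$ changes color, because $\bar f_{t_i}(v)=f_{t_i}(v)\neq f_{t_{i+1}}(v)=\bar f_{t_{i+1}}(v)$; pick one such transition as $s(I_{v,i})$. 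Since a vertex's discrepancy intervals are disjoint, at most one interval of $v$ is mapped to any fixed transition $j$, so the intervals with $s(I)=j$ correspond injectively to vertices that change color at transition $j$, of which there are at most $d$.

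Finally, the assignment problem is a polynomial-time solvable task: it is exactly the problem of scheduling unit-length jobs with integer release times and deadlines under a per-slot capacity of $d$. I would cast it as an integral maximum-flow computation (source $\to$ one node per discrepancy interval with capacity $1$; an arc from that node to each transition lying in the interval; each transition node $\to$ sink with capacity $d$), where feasibility holds iff a maximum flow saturates all interval nodes; the network has $\bigO(|V|\cdot\tau)$ nodes and arcs, so this runs in polynomial time. Putting the pieces together---computing the discrepancy intervals, solving the flow, and reading off $\bar f_1,\dots,\bar f_\tau$---yields the claimed polynomial-time algorithm. I expect the only delicate points to be the parity observation (an odd, and hence nonzero, number of flips is forced within each discrepancy interval) and the disjointness of a vertex's discrepancy intervals, which is what makes both directions of the reduction to capacitated interval scheduling go through; everything else is routine.
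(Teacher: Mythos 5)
Your proposal is correct and follows essentially the same route as the paper: there, too, the edgeless instance is decomposed into ``forced re-colorings'' (your discrepancy intervals), each of which must be assigned a transition inside its interval subject to a capacity of $d$ changes per transition, which the paper encodes as unit-length jobs with release date $d(t_1-1)+1$ and due date $d\,t_2$ and solves via Horn's earliest-due-date algorithm for $(1\mid r_j, p_j=1\mid L_{\max})$ scheduling. The only difference is the final subroutine --- you solve the resulting capacitated interval-assignment problem by an integral maximum-flow computation rather than by the known scheduling algorithm --- which is an equally valid polynomial-time finish.
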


\begin{proof}
	We reduce \msceTsc{} with no edges to the following job scheduling problem:	
	
	\decprob{$(1\mid r_j, p_{j} = 1 \mid L_{\max})$ Scheduling}{sched}
	{A list of jobs $j_1,\ldots,j_n$, 
	where each job $j_i=(r_i,d_i)$ has a release date $r_i \in \Nzero$ and a due date $d_i\in \Nzero$,
	and a maximum lateness $L \in \Nzero$.}
	{Is there a schedule $s \colon \{j_1,\ldots,j_n\} \rightarrow \Nzero$ such that
		\begin{inparaenum}[(i)]
			\item $s(j_i) \neq s(j_{i'})$ if $i\neq i'$,
			\item $s(j_i) \geq r_i$ for all $i \in \{1,\ldots,n\}$, and
			\item $s(j_i) - d_i \leq L$ for all $i\in \{1,\ldots,n\}$?
		\end{inparaenum}
	}
	\noindent
 	Horn~\cite[Sect.~2]{Horn1974} showed that this variant of the scheduling problem
 	can be solved 
	by a polynomial-time greedy algorithm that always schedules the available job with the earliest due date.
	Let $(\TG=(V,(\emptyset)_{t=1}^\tau),f_1,\ldots,f_\tau,d)$ be an instance for \msceAcr{}.
We will say that  vertex $v\in V$ between $t_1, t_2 \in\{1,\ldots,\tau\}$ is \emph{forced to be re-colored} $i \in\{1,2\}$ if:
	\begin{inparaenum}[(i)]
		\item $t_1 < t_2$ and there is no $t_3$ with $t_1 < t_3 < t_2$ such that $f_{t_3}(v)$ is defined,
		\item $f_{t_2}(v) = i$, and
		\item $f_{t_1}(v) = 3 -i$.
	\end{inparaenum}
	Let $\mathcal{R} \subseteq V \times \{1,\ldots,\tau-1\} \times \{2,\ldots,\tau\} \times \{1,2\}$ be the set of all forced re-colorings.
	Specifically,~$(v,t_1,t_2,i) \in \mathcal{R}$ if and only if $v$ is forced to be re-colored $i$ between $t_1$ and $t_2$.

	In the machine scheduling model, 
	only one job can be performed per time step, 
	but, 
	in a solution for an \mscAcr{} instance, 
	up to $d$ vertices can be re-colored.
	Hence, 
	we each transition between two layers with $d$ time slots.
	For~$t\in\{1,\ldots,\tau-1\}$, the time slots $d(t-1)+1,\ldots, dt$ correspond to changes in the coloring between the layers $t$ and $t+1$.
	For any forced re-coloring~$(v,t_1,t_2,c) \in \calR$, we create a job $j_i$ with release date $r_i = d(t_1-1)+1$ and due date $d_i = dt_2$.
	We will show that the given instance of \msceAcr{} admits a solution if and only if this set of jobs admits a schedule with maximum lateness~$0$.
	
	\RD{}
	Suppose that $\bar{f}_1,\ldots,\bar{f}_\tau$ is a solution to the instance that extends $f_1,\ldots,f_\tau$.
	It is easy to see that, 
	if $(v,t_1,t_2,i) \in \mathcal{R}$, 
	then $f_{t_1}(v) \neq f_{t_2}(v)$.
	Hence, 
	there must be a $t$ with $f_{t}(v) \neq f_{t+1}(v)$ and $t\in\set[t_1]{t_2-1}$.
	Then, a machine schedule for the instance described above can be constructed by scheduling the job corresponding to $(v,t_1,t_2,i)$ in one of the slots $d(t-1)+1,\ldots,dt$.
	Since $\delta(f_t,f_{t+1}) \leq d$, there are enough slots.
	
	\LD{}
	Suppose that we are given a machine schedule with maximum lateness $0$ for the aforementioned instance.
	We construct an initial coloring~$\bar{f}_1$ by assigning each vertex~$v$ the color~$i$, 
	if there is a $t\in\{1,\ldots,\tau\}$ such that $f_t(v)=1$ and $f_{t'}(v)$ is undefined for all $t' < t$.
	If $f_t(v)$ is undefined for all $t\in\set{\tau}$, then we assign $\bar{f}_1(v)$ arbitrarily.
	We construct $f_2,\ldots,f_\tau$ as follows.
	We let $f_{t+1}(v) = 3 - f_t(v)$ if the given schedule assigns a job $j_i$ corresponding to a forced re-coloring~$(v,t_1,t_2,3 - f_t(v)) \in \calR$ to a slot between $d(t-1)+1$ and $dt$.
	Otherwise, we let $f_{t+1}(v) = f_t(v)$.
\end{proof}

\noindent
The idea in the proof of~\cref{thm:fpt-dcc-sum} is as follows.
After computing a distance-to-co-cluster set for each layer,
we check for all possible colorings of these sets,
and then propagate the colorings.
We finally arrive at an instance of \msceAcr{} with no edges,
which is decidable in polynomial time.

\begin{algorithm}[t]
  $T^+,T^-\gets\emptyset$\;
  \ForEach{$t\in \set{\tau}$}{\label{alg:xp:forloopx}
    $X_t \gets$ a minimum set such that $G_t - X_t$ is a co-cluster\;
    \myIf{$G_t -X_t$ is connected}{$T^+\gets T^+\cup\{t\}$ \textbf{else} $T^-\gets T^-\cup\{t\}$}
  }
  \ForEach(\tcp*[f]{$2^{\dcc_{\undp}}$~many}){$f_1\colon X_1\to \{1,2\},\dots,f_\tau\colon X_\tau\to\{1,2\}$}{\label{alg:xp:forloopfone}
    \ForEach{$t\in\set{\tau}$}{
      \myIf{$t\in T^+$}{while $\exists \, \{u,v\} \in E_t$ s.t.\ $f_t(u) = i$ and $f_t(v)$ is undefined, let $f_t(v) \gets 3 -i$}
      \myIf{$t\in T^-$}{$F_t\gets\{f_t^1,f_t^2\}$ with the two possible colorings~$f_t^1,f_t^2$ of~$G_t-X_t$}
    }
    \ForEach(\tcp*[f]{$\leq 2^{\tau}$~many}){$(f_{t_1}',\dots,f_{t_{|T^-|}}')\in\bigtimes_{t\in T^-} F_t$}{\label{alg:xp:forloopg}
      Let~$\tilde{f}_t\gets f_t$ if~$t\in T^+$ and~$\tilde{f}_t\gets f_t\cup f_t'$ if~$t\in T^-$\;
      \If{$\tilde{f}_1,\dots,\tilde{f}_\tau$ are proper partial colorings}{
        \If{$(\TG,\tilde{f}_1,\ldots,\tilde{f}_\tau,d)$ is a \yes-instance for \prob{MS2CE}}{
          \Return{\yes}\tcp*[f]{decidable in polynomial~time (\cref{lem:coleps-poly})}\label{alg:xp:yes}
        }
      }
    }
  }

  \Return{\no}
  \caption{FPT-algorithm on input instance~$\TG=(V,(E_t)_{t=1}^\tau),d \in \Nzero$. 
  }
\label{alg:fpt-dcc-sum}
\end{algorithm}

\begin{proof}[Proof of~\cref{thm:fpt-dcc-sum}]
	Let $\I=(G,d)$ be an instance of \mscTsc{}.
	Let $G=(V,(E_t)_{t=1}^\tau)$
	and
	$G_t \coloneqq (V,E_t)$ be the $t$-th layer of $G$.
	Let $k\coloneqq \sum_{t=1}^\tau \dcc(G_t)$.
The following algorithm is summarized in pseudocode in \cref{alg:fpt-dcc-sum}.
	
	For each $t \in \{1,\ldots,\tau\}$, 
  using Cai's algorithm~\cite{Cai1996},
	we can compute in~$2^{\O(k)}\cdot|G_t|^{\bigO(1)}$ time
	a minimum set $X_t \subseteq V$ 
	such that $G_t - X_t$ is a co-cluster.
	Let~$(T^+,T^-)$ be a partition of~$\set{\tau}$ such 
	that~$t\in T^+$ if and only if~$G_t - X_t$ is connected
	(see~\cref{obs:cocl-edge-con}).
	For~$t\in T^+$,
	let~$V_t\ceq V(G_t - X_t)$,
	and for~$t\in T^-$,
	let~$V_t\ceq \{v\in V(G_t - X_t)\mid \deg_{G_t}(v)>0\}$ be the vertices in~$G_t - X_t$ incident to at least one edge in~$G_t$.
	We then iterate over all the at most~$2^{k}$~possible partial 2-colorings of $(X_1,\dots,X_\tau)$.
For every layer $t\in T^+$ there are only two possible 2-colorings of~$G_t - X_t$.
	We iterate over all the at most~$2^\tau$ possible 2-colorings of these layers.
	For every~$t\in T^-$,
	if there is an uncolored vertex~$v$ with a neighbor~$w$ colored~$i\in\{1,2\}$,
	then color~$v$ with color~$3-i$.
	Note that this colors all vertices in~$V_t$.
	Let $\tilde{f}_1,\ldots,\tilde{f}_\tau$ be the resulting partial coloring.
	The important thing to note is that for every~$t\in\set{\tau}$ and every edge in~$E_t$ both its endpoints are colored by~$\tilde{f}_t$.
	If one of $\tilde{f}_1,\ldots,\tilde{f}_\tau$ is not proper,
	we reject the coloring,
	otherwise we proceed as follows.
	
	Construct the instance~$\tilde{\I}=(\TG,(\tilde{f}_t)_{t=1}^\tau,d)$ of~\msceTsc{}.
	Since every edge has two colored endpoints,
	applying~\cref{rr:coleps} exhaustively results in an instance~$\tilde{\I}'=(\TG',(\tilde{f}_t)_{t=1}^\tau,d)$ of~\msceTsc{} where~$\TG'$ contains no edge.
	Hence,
	due to~\cref{lem:coleps-poly},
	we can solve $\tilde{\I}'$ in polynomial-time.
  Thus, 
  the overall running time is in~$\sum_{t=1}^\tau 2^{\O(k)}\cdot|G_t|^{\O(1)}+2^{k+\tau}|\TG|^{\O(1)}$.
	
	Clearly,
	if~$\tilde{\I}'$ is a \yes-instance in one choice,
	then~$\I$ is a \yes-instance of~\mscAcr{}.
	That the opposite direction is correct too
	is also not hard not see.
	Note that every solution~$f_1,\dots,f_\tau$ induces a proper partial coloring~$\tilde{f}_1,\dots,\tilde{f}_\tau$,
	where~$\tilde{f}_t$ is induced on~$V_t\cup X_t$ for every~$t\in\set{\tau}$,
	that we will eventually check.
	Moreover,
	the resulting input to~\msceAcr{} is clearly a \yes-instance: 
	$f_1,\dots,f_\tau$ is a solution to~$(\TG,(\tilde{f}_t)_{t=1}^\tau,d)$.
\end{proof}

\begin{proposition}
	\label{prop:nphard-sump}
	\mscTsc{} is \NP-hard even for constant values of \begin{inparaenum}[(i)]
		\item $\dco_{\sump}$,
		\item $\fes_{\sump}$, and
		\item $\Delta_{\sump}$.
	\end{inparaenum}
\end{proposition}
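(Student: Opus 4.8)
The plan is to reuse constructions already in the paper, exploiting that each of the three parameterizations is a sum over the layers (with the $\max\{1,\cdot\}$-correction): if a reduction produces temporal graphs with a \emph{constant} number $\tau$ of layers and each layer has the relevant static parameter bounded by a constant, then the corresponding sum-parameter is bounded by a constant. So in each case it suffices to locate such a reduction.

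For (i) and (ii) I would re-examine \cref{constr:maxcut-red}, which by \cref{lemma:maxcut-red} is a correct polynomial-time reduction from the NP-complete \prob{Edge Bipartization} and which outputs a temporal graph $\TG$ with $\tau=2$ layers. The point is purely structural: the first layer consists of the pendant edges $\{v_i,u^e_i\}$, hence is a disjoint union of stars (one star centred at each original vertex, together with isolated subdivision vertices), while in the second layer every subdivision vertex lies on exactly one edge $\{u^e_i,u^e_j\}$ and every original vertex is isolated, so the second layer is a disjoint union of edges together with isolated vertices. Both a disjoint union of stars and a disjoint union of edges contain no path on four vertices, hence no induced $P_4$, hence are cographs; they are of course also forests. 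Therefore $\dco(\TG_1)=\dco(\TG_2)=\fes(\TG_1)=\fes(\TG_2)=0$, which gives $\dco_{\sump}(\TG)=\fes_{\sump}(\TG)=\max\{1,0\}+\max\{1,0\}=2$.

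For (iii) this trick does not apply directly: bounding $\Delta$ in each layer is not enough, because $\Delta_{\sump}$ also picks up a factor of $\tau$, and the few-edges-per-layer reductions (\cref{thm:nphard-fewedges}, \cref{prop:pNPh-max}) have unbounded lifetime. Instead I would invoke \cref{prop:pnph-delta-undp}, which yields NP-hardness already for $\tau=3$ when the underlying graph has maximum degree $3$; since every layer $\TG_t$ is a subgraph of the underlying graph on the same vertex set, every layer has maximum degree at most $3$, so $\Delta_{\sump}(\TG)=\sum_{t=1}^{3}\max\{1,\Delta(\TG_t)\}\le 9$. (Alternatively one can re-run \cref{constr:maxcut-red} on an instance of \prob{Edge Bipartization} whose graph has maximum degree $3$ --- still NP-hard, since \prob{Maximum Cut} is NP-hard on cubic graphs --- and obtain a two-layer instance whose first layer is a disjoint union of stars with at most three leaves and whose second layer is a matching, so that $\Delta_{\sump}(\TG)\le 3+1=4$.)

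There is no substantial obstacle here; the only things to be careful about are the $\max\{1,\cdot\}$-bookkeeping in the definition of the sum parameterization (so that an edgeless layer or a single cograph layer still contributes $1$) and, for (iii), the observation that one really must start from a \emph{constant-lifetime} reduction, which is why \cref{prop:pnph-delta-undp} rather than the few-edges-per-layer reductions is the right tool.
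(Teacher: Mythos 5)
Your proof is correct, but for parts (i) and (ii) it takes a genuinely different route from the paper. The paper argues via \cref{constr:red-from-clique} (the three-layer reduction from \prob{Clique}): each of its layers is acyclic, giving $\fes_\sump=3$, and for (i) it additionally modifies each layer by completing every connected component to a complete bipartite graph --- an operation that preserves the proper 2-colorings of each layer and hence the answer --- so that every layer becomes a cograph and $\dco_\sump=3$. You instead observe that \cref{constr:maxcut-red} already does the job with $\tau=2$: its first layer is a disjoint union of stars and its second a matching plus isolated vertices, both $P_4$-free forests, so $\dco_\sump=\fes_\sump=2$ with no modification of the construction needed; this is perfectly valid and arguably simpler (your only slip is cosmetic: in the first layer no subdivision vertex is isolated, since each $u^e_i$ is pendant at $v_i$, which affects nothing). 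For (iii) you and the paper do essentially the same thing: the paper combines \cref{prop:pnph-delta-undp} with \cref{prop:param-relationships} and the monotonicity of $\Delta$, while you unfold that argument directly ($\tau=3$ and every layer being a subgraph of an underlying graph of maximum degree $3$ gives $\Delta_\sump\le 9$); your alternative via \prob{Edge Bipartization} restricted to maximum-degree-$3$ graphs is also sound and is in fact the restriction the paper itself invokes in \cref{prop:nphard-sump-bw}. Your remark that a constant-lifetime reduction is indispensable here (since $\Delta_\sump\ge\tau$) correctly identifies why the few-edges-per-layer reductions cannot be used for (iii).
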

\begin{proof}
	\begin{enumerate}[(i)]
		\item First, note that if all connected components of a static graph $G$ are complete bipartite, then $G$ is a co-graph.
		Secondly, adding edges to every layer of a temporal graph to make every connected component in every layer complete bipartite does not change the solution to \mscAcr{}.
		Hence, we can apply this modification to the output of \cref{constr:red-from-clique} in order to generate instances in which every layer is a co-graph and $\tau = 3$.
		\item Every layer in the temporal graph $\TG$ generated by \cref{constr:red-from-clique} is acyclic.
		Hence, $\fes_{\sump}(\TG) = \tau = 3$.
		\item Follows from \cref{prop:pnph-delta-undp,prop:param-relationships} and the fact that $\Delta$ is monotonically increasing.
	\end{enumerate}
\end{proof}

\noindent
Our final result on structural parameters concerns $\bw_{\sump}$, that is, bandwidth with the sum parameterization.
We first briefly note the following:

\begin{observation}
	\label{obs:bw}
	Let~$G$ be an undirected graph.
	If every connected component in~$G$ contains at most $k$ vertices, 
	then $\bw(G) \leq k-1$.
\end{observation}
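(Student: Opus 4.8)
The plan is to exhibit an explicit linear vertex ordering that witnesses the bound. Let $C_1,\ldots,C_r$ be the connected components of $G$ and set $n_i \coloneqq \abs{V(C_i)}$, so that $\sum_{i=1}^r n_i = \abs{V(G)}$ and each $n_i \leq k$ by hypothesis. First I would fix an arbitrary enumeration of the vertices within each $C_i$ and then concatenate these enumerations in the order $C_1, C_2, \ldots, C_r$ to obtain a bijection $\pi \colon V(G) \to \{1,\ldots,\abs{V(G)}\}$. Thus the vertices of $C_i$ occupy the contiguous block of positions $B_i \coloneqq \{1 + \sum_{j<i} n_j, \ldots, \sum_{j\leq i} n_j\}$, which is a set of $n_i \leq k$ consecutive integers, and the blocks $B_1,\ldots,B_r$ are pairwise disjoint and together partition $\{1,\ldots,\abs{V(G)}\}$.

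The key step is the observation that every edge of $G$ stays inside a single such block: if $\{u,v\} \in E(G)$, then $u$ and $v$ lie in a common component $C_i$, so $\pi(u),\pi(v) \in B_i$, whence $\abs{\pi(u) - \pi(v)} \leq n_i - 1 \leq k - 1$. Taking the maximum over all edges shows that the layout $\pi$ has bandwidth at most $k-1$, and since $\bw(G)$ is by definition the minimum of this quantity over all layouts, $\bw(G) \leq k-1$ follows.

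There is essentially no obstacle here; the only point requiring a little care is verifying that the per-component blocks are genuinely contiguous and pairwise disjoint, which is immediate from the construction of $\pi$. The degenerate cases — $G$ edgeless, or a component consisting of a single vertex — are handled automatically, the edgeless case being vacuous under the usual convention $\bw(G) = 0$.
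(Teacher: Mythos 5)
Your proof is correct and is precisely the standard argument the paper leaves implicit for this observation (which it states without proof): laying out each component as a contiguous block forces every edge to span at most $k-1$ positions under the paper's definition $\bw(G) = \min_{\pi} \max_{\{v_i,v_j\}\in E} \abs{\pi(i)-\pi(j)}$.
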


\noindent
We will use this observation to show that \mscTsc{} is para-\NP-hard when parameterized by~$\bw_{\sump}$.

\begin{proposition}
	\label{prop:nphard-sump-bw}
	\mscTsc{} is \NP-hard even for a constant value of $\bw_{\sump}$.
\end{proposition}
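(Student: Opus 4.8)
The plan is to adjust \cref{constr:maxcut-red} so that \emph{both} layers become disjoint unions of paths and isolated vertices, while keeping $\tau=2$ and $d=k$; since a disjoint union of paths (and isolated vertices) has bandwidth at most $1$, this gives $\bw_{\sump}(\TG)=\max\{1,\bw(\TG_1)\}+\max\{1,\bw(\TG_2)\}\le 2$, a constant. Starting from an instance $(G,k)$ of \prob{Edge Bipartization} (\NP-complete by \cite{Yannakakis1978}) with $V(G)=\{v_1,\dots,v_n\}$, I keep the double subdivision and the second layer unchanged: for every $e=\{v_i,v_j\}\in E(G)$ introduce $u^e_i,u^e_j$, and set $E_2\coloneqq\{\{u^e_i,u^e_j\}\mid e=\{v_i,v_j\}\in E(G)\}$, which is a perfect matching on the subdivision vertices. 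The first layer, which in \cref{constr:maxcut-red} is a union of stars (of unbounded bandwidth), is replaced as follows: for each $v_i$, let $e^1,\dots,e^{d_i}$ be the edges of $G$ incident with $v_i$ (where $d_i=\deg_G(v_i)$), introduce fresh \emph{connector} vertices $z^i_1,\dots,z^i_{d_i-1}$, and let $E_1$ contain the edges of the path $u^{e^1}_i\,\text{--}\,z^i_1\,\text{--}\,u^{e^2}_i\,\text{--}\,z^i_2\,\text{--}\,\cdots\,\text{--}\,z^i_{d_i-1}\,\text{--}\,u^{e^{d_i}}_i$ (if $d_i\le 1$ this path is a single vertex and contributes no edge). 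Then $\TG_1$ is a disjoint union of $n$ paths and $\TG_2$ is a disjoint union of $\abs{E(G)}$ single edges, with all remaining vertices isolated in the respective layer; in particular $\bw(\TG_1),\bw(\TG_2)\le 1$, the construction is computable in polynomial time, and it outputs instances with $\bw_{\sump}$ at most the constant $2$.

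It remains to prove correctness, which I expect to follow the proof of \cref{lemma:maxcut-red} almost verbatim once the role played by ``the colour of $v_i$'' is identified. The key observation is that in any proper $2$-colouring $f_1$ of $\TG_1$, all vertices $u^e_i$ with $e\ni v_i$ receive the same colour, since on the path belonging to $v_i$ they all occupy positions of equal parity (the connectors occupy the positions in between); denote this common colour by $g(v_i)$. For the forward direction, from $E'\subseteq E(G)$ with $\abs{E'}\le k$ and a proper $2$-colouring $f_0$ of $G-E'$, I set $f_1(u^e_i)\coloneqq f_0(v_i)$ for all $e\ni v_i$ and $f_1(z^i_\ell)\coloneqq 3-f_0(v_i)$ (this is proper on $\TG_1$, since each path then alternates), and I take $f_2$ equal to $f_1$ except that for each $e=\{v_i,v_j\}\in E'$ (where $f_0(v_i)=f_0(v_j)$, so $f_1(u^e_i)=f_1(u^e_j)$) I flip the colour of the endpoint with the larger index, as in \cref{constr:maxcut-red}, so that the layer-$2$ edge $\{u^e_i,u^e_j\}$ becomes properly coloured; at most $\abs{E'}\le k=d$ vertices change between $f_1$ and $f_2$. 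For the backward direction, given proper $2$-colourings $f_1,f_2$ with $\delta(f_1,f_2)\le d$, set $E'\coloneqq\{e=\{v_i,v_j\}\in E(G)\mid f_1(u^e_i)=f_1(u^e_j)\}$; since $\{u^e_i,u^e_j\}\in E_2$ forces at least one recolouring for each $e\in E'$, we get $\abs{E'}\le d=k$, and for $e\notin E'$ we have $g(v_i)=f_1(u^e_i)\neq f_1(u^e_j)=g(v_j)$, so $g$ properly $2$-colours $G-E'$; hence $(G,k)$ is a \yes-instance. Combining this with the argument of \cref{thm:nphard-smalltau} yields the claim.

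I do not anticipate a real difficulty: the only point that needs care is the parity bookkeeping on the paths of $\TG_1$ — one connector inserted between consecutive $u$-vertices is exactly what forces all $u^e_i$ for a fixed $i$ to share a colour, while the connectors take the opposite colour — together with the elementary fact that isolated vertices and matchings do not raise the bandwidth above $1$. For completeness, one can also avoid the modification entirely: \prob{Edge Bipartization} remains \NP-hard on graphs of maximum degree $3$ (equivalently, \prob{Maximum Cut} is \NP-hard on cubic graphs), and feeding such instances into \cref{constr:maxcut-red} unchanged yields temporal graphs in which every layer has connected components of at most four vertices, so \cref{obs:bw} gives $\bw(\TG_t)\le 3$ for $t\in\{1,2\}$ and hence $\bw_{\sump}\le 4$, again a constant. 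I would present the first, self-contained variant as the main argument and mention this one only as a remark.
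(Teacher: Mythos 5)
Your proposal is correct, and your closing remark is in fact exactly the paper's proof: the paper takes the \NP-hardness of \prob{Edge Bipartization} on graphs of maximum degree~$3$, feeds such instances into \cref{constr:maxcut-red} \emph{unchanged}, observes that the connected components then have at most four vertices in the first layer and at most two in the second, and applies \cref{obs:bw} to conclude $\bw_{\sump}(\TG)\leq 4$, with correctness already given by \cref{lemma:maxcut-red}. Your main argument is a genuinely different route: you re-engineer the first layer, replacing each star around $v_i$ by a path through fresh connector vertices so that all $u^e_i$ with $e\ni v_i$ sit on positions of equal parity and are thus forced to a common colour $g(v_i)$, which takes over the role of $f_1(v_i)$; both layers become disjoint unions of paths, matchings and isolated vertices, giving the stronger bound $\bw_{\sump}\leq 2$ and removing any need for the bounded-degree restriction on the source problem. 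The trade-off is that you must re-verify correctness (which you do, essentially mirroring \cref{lemma:maxcut-red}), whereas the paper's version is a three-line corollary of existing results. One small slip in your forward direction: for $e=\{v_i,v_j\}\in E'$ it need not hold that $f_0(v_i)=f_0(v_j)$, since $f_0$ may happen to colour an edge of $E'$ properly; the fix is immediate — flip only those $e\in E'$ whose layer-two edge is monochromatic under $f_1$, so the number of recolourings is still at most $\abs{E'}\leq k=d$ — and the rest of your argument, including the backward direction via $g$, is sound.
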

\begin{proof}
	\prob{Edge Bipartization} is NP-complete, even when restricted to graphs with maximum degree $3$~\cite{Yannakakis1978}.
	First, note that, in the first layer of the temporal graph $\TG$ output by \cref{constr:maxcut-red}, connected components consist of a vertex $v_i$ as well $u^e_i$ for each edge $e\in E$ incident to $v_i$.
	If we assume that $G$ has maximum degree three, it follows that each such connected component contains at most four vertices.
	Hence, $\bw(\TG_1) \leq 3$ by \cref{obs:bw}.
	In the second layer, connected components cannot contain more than two vertices and, hence, $\bw(\TG_2) \leq 1$ by \cref{obs:bw}.
	In all, it follows that $\bw_{\sump}(\TG) \leq 4$.
\end{proof}

\section{Global budget}
\label{sec:global}
The problem we have considered so far is the multistage version of \textsc{2-Coloring} with a \emph{local} budget.
The solution may only be changed by a certain amount between any two consecutive stages.
Heeger~et~al.~\cite{Heeger2021} started the parameterized research of multistage graph problems on a \emph{global} budget where there is no restriction on the number of changes between any two consecutive layers, but instead a restriction on the total number of changes made throughout the lifetime of the instance.
All graph problems studied by Heeger~et~al. are NP-hard even for constant values of the global budget parameter.
By contrast, we will show that a global budget version of \mscTsc{} is fixed-parameter tractable with respect to the budget.
Formally, the global budget version of \mscTsc{} is:

\decprob{\mscgbTsc{} (\mscgbAcr{})}{mscgb}
{A temporal graph~$\TG=(V,(E_t)_{t=1}^\tau)$ and an integer~$D\in\Nzero$.}
{Are there~$f_1,\dots,f_\tau\colon V\to \{1,2\}$ such that~$f_t$ is a 2-coloring of~$(V,E_t)$ for every~$t\in\set{\tau}$
	and $\sum_{t=1}^{\tau-1}\delta(f_t,f_{t+1})\leq D$?}

\noindent
We start by pointing out that \mscgbAcr{}, 
like the local budget version, 
is \NP-hard.
This follows from \cref{thm:nphard-smalltau}, 
since there is no distinction between a local and a global budget if $\tau = 2$.

\begin{observation}
	\label{thm:nphard-global}
	\mscgbTsc{} is \NP-hard.
\end{observation}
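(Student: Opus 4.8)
The plan is to observe that the local and global budget versions of the problem coincide on instances with lifetime $\tau = 2$, and then invoke \cref{thm:nphard-smalltau}. Concretely, I would argue as follows: take any instance $(\TG, d)$ of \mscTsc{} with $\TG = (V, E_1, E_2)$, i.e.\ with $\tau = 2$, and map it to the instance $(\TG, D)$ of \mscgbTsc{} with the same temporal graph and $D \coloneqq d$. This map is clearly computable in polynomial time (it is the identity on the temporal graph).

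For correctness I would note that when $\tau = 2$ the only index $t \in \set{\tau - 1}$ is $t = 1$, so the local-budget requirement ``$\delta(f_t, f_{t+1}) \le d$ for every $t \in \set{\tau-1}$'' is literally the single inequality $\delta(f_1, f_2) \le d$, and likewise the global-budget requirement ``$\sum_{t=1}^{\tau-1} \delta(f_t, f_{t+1}) \le D$'' is the single inequality $\delta(f_1, f_2) \le D$. Since $D = d$, a sequence $f_1, f_2$ of proper $2$-colorings of the two layers witnesses that $(\TG, d)$ is a \yes-instance of \mscTsc{} if and only if it witnesses that $(\TG, D)$ is a \yes-instance of \mscgbTsc{}. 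Because \cref{thm:nphard-smalltau} shows that \mscTsc{} is already \NP-hard on the $\tau = 2$ instances produced by \cref{constr:maxcut-red}, this reduction establishes \NP-hardness of \mscgbTsc{}.

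There is essentially no obstacle here: the result is immediate once one spots that a single transition makes ``per-step budget'' and ``total budget'' the same constraint. The only thing to be careful about is to phrase the reduction through the general \mscgbTsc{} problem (not merely assert equivalence of problems), so that the polynomial-time many-one reduction from \prob{Edge Bipartization} underlying \cref{thm:nphard-smalltau} composes cleanly with the identity map above.
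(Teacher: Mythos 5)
Your proposal is correct and matches the paper's argument exactly: the paper also derives \NP-hardness of \mscgbTsc{} from \cref{thm:nphard-smalltau}, observing that local and global budgets coincide when $\tau=2$. Your write-up just spells out the identity reduction in a bit more detail.
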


\noindent
In order to show that \mscgbTsc{} is fixed-parameter tractable, we will prove the existence of a parameter-preserving transformation to the \prob{Almost 2-SAT} problem, which is defined by:

\decprob{\prob{Almost 2-SAT} (\prob{A2SAT})}{a2sat}
{A Boolean formula $\varphi$ in 2-CNF and an integer $k$.}
{Is there a set of at most $k$ clauses whose removal from $\varphi$ makes the formula satisfiable?}

\noindent
Razgon and O'Sullivan~\cite{Razgon2009} prove that \prob{A2SAT} is fixed-parameter tractable when parameterized by~$k$, 
but the fastest presently known algorithm runs in $\bigO ^* (2.3146^k)$ and is due to Lokshtanov~et~al.~\cite{Lokshtanov2014}.
Kratsch and Wahlstr\"{o}m~\cite{Kratsch2020} show that this problem admits a randomized polynomial kernel.

\begin{proposition}
	\label{prop:globalbudget-a2sat}
	\mscgbTsc{} parameterized by~$D$
	admits a parameter-preserving transformation to 
	\prob{Almost 2-SAT} parameterized by~$k$.
\end{proposition}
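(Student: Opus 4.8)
The plan is to give a polynomial-time transformation turning a \mscgbAcr{} instance $(\TG=(V,(E_t)_{t=1}^\tau),D)$ into an \prob{Almost 2-SAT} instance $(\varphi,k)$ with $k=D$, refining the variable--clause encoding already used above for the connection to \prob{Multistage 2-SAT}. First I would introduce one Boolean variable $x_v^t$ for every $v\in V$ and every $t\in\set{\tau}$, with the intended meaning that $x_v^t$ is true exactly when $f_t(v)=1$. The formula $\varphi$ then has two kinds of clauses. The \emph{layer clauses}: for each $t\in\set{\tau}$ and each edge $\{v,w\}\in E_t$, the two clauses $(x_v^t\lor x_w^t)$ and $(\overline{x_v^t}\lor\overline{x_w^t})$, which together force $f_t(v)\neq f_t(w)$. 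Since these must never be removed, I would include $D+1$ identical copies of each of them; this keeps the formula of polynomial size because we may assume $D\le(\tau-1)\abs{V}$ (the total number of changes never exceeds this, so larger budgets are equivalent to $D=(\tau-1)\abs{V}$) — alternatively one can appeal to the standard weighted variant of \prob{Almost 2-SAT}. The \emph{stability clauses}: for each $v\in V$ and each $t\in\set{\tau-1}$, the two clauses $(\overline{x_v^t}\lor x_v^{t+1})$ and $(x_v^t\lor\overline{x_v^{t+1}})$, which together express $x_v^t\leftrightarrow x_v^{t+1}$.

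The crucial bookkeeping point is the following: under any truth assignment, if $x_v^t\neq x_v^{t+1}$ then exactly one of the two stability clauses for $(v,t)$ is falsified, and if $x_v^t=x_v^{t+1}$ then neither is; moreover the stability clauses of distinct pairs $(v,t)$ are syntactically distinct from one another (and from the layer clauses). Hence the number of stability clauses falsified by an assignment equals exactly $\sum_{t=1}^{\tau-1}\delta(f_t,f_{t+1})$ for the coloring sequence it encodes.

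With this observation in hand the equivalence proof is short. For the forward direction, a solution $f_1,\dots,f_\tau$ with $\sum_{t=1}^{\tau-1}\delta(f_t,f_{t+1})\le D$ gives the assignment setting $x_v^t$ to true iff $f_t(v)=1$; it satisfies every layer clause, and deleting precisely the at most $D$ stability clauses it falsifies leaves a satisfiable formula, so $k=D$ suffices. For the backward direction, if deleting a set $S$ with $\abs{S}\le D$ makes $\varphi$ satisfiable via an assignment $\alpha$, then $S$ cannot contain all $D+1$ copies of any layer clause, so $\alpha$ satisfies all layer clauses and thus encodes proper $2$-colorings $f_1,\dots,f_\tau$ of all layers; and for every pair $(v,t)$ with $f_t(v)\neq f_{t+1}(v)$, the unique stability clause of $(v,t)$ that $\alpha$ falsifies must lie in $S$, and these clauses are pairwise distinct, so the total number of changes is at most $\abs{S}\le D$. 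Since the construction is polynomial-time and $k=D$, this is a parameter-preserving transformation, which via~\cite{Lokshtanov2014,Kratsch2020} in particular yields fixed-parameter tractability of \mscgbAcr{} with respect to $D$ and a randomized polynomial kernel.

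The construction itself is routine; I expect the only genuinely delicate point to be setting up the stability gadget so that falsified stability clauses correspond to vertex changes \emph{one for one} (an injective-only correspondence would suffice for one direction but not both), together with the standard trick of duplicating the layer clauses $D+1$ times to shield them from deletion.
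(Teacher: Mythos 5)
Your construction is exactly the paper's: variables $x_v^t$, edge clauses duplicated $D+1$ times to protect them from deletion, stability clauses $(\overline{x_v^t}\lor x_v^{t+1})$ and $(x_v^t\lor\overline{x_v^{t+1}})$, and $k=D$, with the same forward/backward accounting of falsified stability clauses. The proof is correct and follows essentially the same route; your added remark that one may assume $D\le(\tau-1)\abs{V}$ to keep the duplication polynomial is a harmless refinement the paper leaves implicit.
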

\begin{proof}	
	Let $(\TG,D)$ with $\TG=(V,(E_t)_{t=1}^\tau)$ be an instance of \mscgbAcr{}.
	Let~$k\coloneqq D$ and define a Boolean formula~$\varphi$ in the following manner.
	We use the variables $x^v_t$ for $v\in V$ and $t\in\{1,\ldots,\tau\}$.
	Intuitively, the variable $x^v_t$ represents that the vertex $v$ is colored with $1$ at time step $t$ if this variable is set to true and colored with $2$ if it is set to false.
	For every edge $\{u,v\} \in E_t$, we add~$D+1$ copies of the clauses $(x^u_t \vee x^v_t)$ and $(\neg x^u_t \vee \neg x^v_t)$ to $\varphi$.
	These clauses express that the edge~$\{u,v\}$ should not be monochromatic.
	Let~$\varphi_t \ceq \bigwedge_{\{u,v\} \in E_t} \bigwedge_{i=1}^{D+1} (x^u_t \vee x^v_t) \land (\neg x^u_t \vee \neg x^v_t)$ for every~$t\in\set{\tau}$.
	Additionally, 
	for every $t \in \{1,\ldots,\tau-1\}$ and $v \in V$ we add the two clauses $(\neg x^v_t \vee x^v_{t+1})$ and $(x^v_t \vee \neg x^v_{t+1})$.
	These clauses express that the color of~$v$ should not change between layers $t$ and $t+1$.
	Let~$\varphi_t^+ \ceq \bigwedge_{v\in V} (\neg x^v_t \vee x^v_{t+1}) \land (x^v_t \vee \neg x^v_{t+1})$ for every~$t\in\set{\tau-1}$.
	Clearly, $\varphi\ceq \varphi_\tau\wedge\bigwedge_{t\in\set{\tau-1}} (\varphi_t\wedge \varphi_t^+)$ can be computed in polynomial time from~$(\TG,d)$.
	We claim that $(\TG,D)$ admits a multistage 2-coloring with a global budget of at most~$D$ 
	if and only if 
	there is a size-at-most-$k$ subset of the clauses of~$\varphi$ whose removal makes the formula satisfiable.
	
	\RD{}
	Suppose that the temporal graph $\TG$ admits a multistage 2-coloring $f_1,\ldots,f_\tau \colon V \rightarrow \{1,2\}$ such that $\sum_{t=1}^\tau\delta(f_t,f_{t+1})\leq D$.
	We start by giving a truth assignment of the variables of $\varphi$.
	Let:
	\begin{align*}
		\alpha (x^v_t) \coloneqq
		\begin{cases}
			\top, & \text{ if } f_t(v) = 1 \\
			\bot, & \text{ if } f_t(v) = 2.
		\end{cases}
	\end{align*}
	Observe that because the colorings~$f_1,\ldots,f_\tau$ are proper,
	$\varphi_t$ is satisfied for every~$t\in\set{\tau}$.
	We continue by giving a set $C$ of clauses that are to be removed from $\varphi$.
	If $f_t(v) = 1$, but~$f_{t+1}(v) =2$, then we add the clause $(\neg x^v_t \vee x^v_{t+1})$ to $C$.
	Conversely, 
	if $f_t(v) = 2$, 
	but $f_{t+1}(v) =1$, 
	then we add the clause $( x^v_t \vee \neg x^v_{t+1})$.
	Note that $\abs{C} = \sum_{t=1}^\tau|\{v\in V\mid f_{t}(v)\neq f_{t+1}(v)\}| \leq D = k$.
	Hence,
	the assignment $\alpha$ also satisfies all clauses in $\varphi_t^+$ that are not in $C$,
	for every~$t\in\set{\tau-1}$.
	It follows that~$\alpha$ satisfies~$\varphi$.
	
	\LD{} 
	Suppose that $C$ is a set of at most $k$ clauses from $\varphi$ and $\alpha$ is a truth assignment that satisfies all clauses in $\varphi$ that are not in $C$.
	We derive a multistage coloring of $G$ by setting:
	\begin{align*}
				f_t(v) \coloneqq
		\begin{cases}
			1, & \text{ if } \alpha (x^v_t) = \top \\
			2, & \text{ if } \alpha (x^v_t) = \bot.
		\end{cases}
	\end{align*}
	First, note that $C$ cannot contain all clauses representing an edge, since $\abs{C} \leq k=D$.
	Hence, for every $t \in \{1,\ldots,\tau\}$, 
	since~$\varphi_t$ is satisfied,
	$f_t$ is a proper coloring of $(V,E_t)$.
	We must show that there are at most~$D$ changes in the coloring.
	Suppose that the vertex~$v$ changes colors between layers~$t$ and $t+1$.
	If~$f_t(v) = 1$ and $f_{t+1}(v) = 2$, then $\alpha$ does not satisfy the clause $(\neg x_t^v \vee x_{t+1}^v)$, so this clause must be in $C$.
	Similarly, if $f_t(v) = 2$ and $f_{t+1}(v) = 1$, then the clause $( x_t^v \vee  \neg x_{t+1}^v)$ must be in $C$.
	Since~$\abs{C} \leq k$, this implies that there can be at most $k=D$ such color changes.
\end{proof}

\noindent
This directly implies the following:

\begin{corollary}
	\label{cor:fpt-global}
	\mscgbTsc{} parameterized by~$D$
	is fixed-parameter tractable and admits a randomized polynomial kernel.
\end{corollary}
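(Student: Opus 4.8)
The plan is to obtain both statements directly from the \ppt{} constructed in \cref{prop:globalbudget-a2sat} together with the cited algorithmic results on \prob{Almost 2-SAT}. First I would note that the transformation of \cref{prop:globalbudget-a2sat} is a \ppt{} in the usual sense: it runs in polynomial time and its output parameter is $k = D$, hence in particular bounded by a polynomial in $D$. Everything else is a matter of plugging in known results.

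For fixed-parameter tractability I would simply feed the output instance into a known FPT-algorithm for \prob{Almost 2-SAT}: by Razgon and O'Sullivan~\cite{Razgon2009}, with the improved running time $\bigO^*(2.3146^k)$ of Lokshtanov~et~al.~\cite{Lokshtanov2014}, \prob{Almost 2-SAT} is solvable in time $\bigO^*(2.3146^k)$. Composing with the polynomial-time transformation, an instance $(\TG,D)$ of \mscgbAcr{} can be decided in time $\bigO^*(2.3146^D)$, which gives membership in \FPT{} with respect to $D$.

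For the randomized polynomial kernel I would use the standard two-step argument for transferring kernels along a \ppt{} (see, e.g.,~\cite{CyganFKLMPPS15}). Kratsch and Wahlstr\"{o}m~\cite{Kratsch2020} show that \prob{Almost 2-SAT} admits a randomized polynomial kernelization; applying it to the instance produced by \cref{prop:globalbudget-a2sat} yields, in randomized polynomial time, an equivalent \prob{Almost 2-SAT} instance of size polynomial in $D$. To convert this into an actual kernel for \mscgbAcr{}, I would compose once more with a polynomial-time Karp reduction from \prob{Almost 2-SAT} back to \mscgbAcr{}; such a reduction exists because \prob{Almost 2-SAT} is in \NP{} and \mscgbAcr{} is \NP-hard (\cref{thm:nphard-global}). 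The resulting \mscgbAcr{}-instance has size polynomial in $D$, i.e., we obtain a randomized polynomial kernel.

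The argument is essentially bookkeeping, so I do not expect a genuine obstacle; the only point requiring care is the final step, where one must recall that a \ppt{} by itself only transfers a polynomial \emph{compression}, and that the \NP-hardness of \mscgbAcr{} from \cref{thm:nphard-global} is exactly what upgrades this compression to a kernel for \mscgbAcr{} itself.
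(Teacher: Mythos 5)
Your proposal is correct and follows essentially the same route as the paper, which derives the corollary directly from the parameter-preserving transformation of \cref{prop:globalbudget-a2sat} combined with the cited FPT algorithm and randomized polynomial kernel for \prob{Almost 2-SAT}. Your explicit final step—using the \NP-hardness of \mscgbAcr{} (\cref{thm:nphard-global}) and \NP-membership of \prob{Almost 2-SAT} to upgrade the resulting compression to a genuine kernel—is exactly the standard bookkeeping the paper leaves implicit, and it is handled correctly.
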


\noindent
We briefly note that the approach described here for \mscAcr{} can also be used to reduce a global budget version of the more general \prob{Multistage 2-SAT} to \prob{Almost 2-SAT}, 
proving the following:

\begin{observation}
	\prob{Multistage 2-SAT on a Global Budget} parameterized by the total number of allowed changes is fixed-parameter tractable and admits a randomized polynomial kernel.
\end{observation}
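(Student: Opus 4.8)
The plan is to observe that the transformation constructed in the proof of \cref{prop:globalbudget-a2sat} never uses any property specific to \textsc{2-Coloring} beyond the fact that the constraint contributed by a single edge in a layer is a conjunction of two $2$-clauses. Replacing ``edge'' by ``$2$-clause'' throughout yields a parameter-preserving transformation from \prob{Multistage 2-SAT on a Global Budget} to \prob{Almost 2-SAT}, and the corollary then follows exactly as \cref{cor:fpt-global} follows from \cref{prop:globalbudget-a2sat}.

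Concretely, given an instance of \prob{Multistage 2-SAT on a Global Budget} consisting of $2$-CNF formulas $\varphi_1,\dots,\varphi_\tau$ over a common variable set $V$ together with a global budget~$D$, I would set $k\ceq D$ and build a $2$-CNF formula $\psi$ over fresh variables $x_t^v$ for $v\in V$ and $t\in\set{\tau}$, where $x_t^v$ encodes the value of $v$ in layer~$t$. For each $t\in\set{\tau}$ and each clause $c$ of $\varphi_t$, I add $D+1$ copies of the clause obtained from $c$ by relabeling each literal on $v$ to the corresponding literal on $x_t^v$; collect these into $\psi_t$. For each $t\in\set{\tau-1}$ and each $v\in V$, I add the two persistence clauses $(\neg x_t^v\vee x_{t+1}^v)$ and $(x_t^v\vee\neg x_{t+1}^v)$, which together force $x_t^v\leftrightarrow x_{t+1}^v$ and whose removal models a change of $v$'s assignment between layers~$t$ and~$t+1$; collect these into $\psi_t^+$. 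Then $\psi\ceq\psi_\tau\wedge\bigwedge_{t\in\set{\tau-1}}(\psi_t\wedge\psi_t^+)$ is computable in polynomial time from the input.

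The correctness argument is the same as in \cref{prop:globalbudget-a2sat}: from a multistage assignment $f_1,\dots,f_\tau$ with $\sum_{t=1}^{\tau-1}\delta(f_t,f_{t+1})\le D$, set $\alpha(x_t^v)$ to reflect $f_t(v)$ and put into $C$ exactly the persistence clauses violated by a change; since every $f_t$ satisfies $\varphi_t$, all clauses of $\psi$ outside $C$ are satisfied, and $|C|\le D=k$. Conversely, from an assignment $\alpha$ satisfying all clauses of $\psi$ outside a set $C$ with $|C|\le k$, define $f_t(v)$ from $\alpha(x_t^v)$; because $|C|\le D<D+1$, no layer loses all copies of any of its clauses, so each $f_t$ satisfies $\varphi_t$, and each vertex--layer pair where $f_t$ and $f_{t+1}$ disagree forces one of the two corresponding persistence clauses into~$C$, so the total number of changes is at most $k=D$. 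Since the transformation runs in polynomial time with $k=D$, it is a parameter-preserving transformation, so fixed-parameter tractability and the randomized polynomial kernel for \prob{Almost 2-SAT} (\cite{Razgon2009,Lokshtanov2014,Kratsch2020}) transfer to \prob{Multistage 2-SAT on a Global Budget}.

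I do not expect a genuine obstacle here; this is a routine generalization of \cref{prop:globalbudget-a2sat}. The only two points requiring a little care are (i) making the clause gadgets unbreakable within the budget, which is why $D+1$ copies are taken rather than one, and (ii) checking that the reduction is parameter-preserving with $k=D$ so that the kernelization result also carries over --- both of which are handled as above.
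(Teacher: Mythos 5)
Your proposal is correct and matches the paper's approach: the paper proves this observation exactly by noting that the reduction of \cref{prop:globalbudget-a2sat} treats each edge as a pair of 2-clauses and therefore generalizes verbatim to arbitrary 2-CNF layers, with the same $D+1$-copy trick and persistence clauses. The two points you flag (unbreakable clause gadgets and $k=D$ parameter preservation) are precisely what makes the transfer of both the FPT algorithm and the randomized polynomial kernel go through, so nothing is missing.
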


{\small
\bibliography{strings-long,ms2col}

\begin{thebibliography}{10}

\bibitem{Aleidi2020}
Shorouq Al-Eidi, Yuanzhu Chen, Omar Darwishand, and Ali M.~S. Alfosool.
\newblock Time-ordered bipartite graph for spatio-temporal social network
  analysis.
\newblock In {\em Proceedings of the 2020 International Conference on
  Computing, Networking and Communications ({ICNC})}, pages 833--838, 2020.
\newblock \href {https://doi.org/10.1109/ICNC47757.2020.9049668}
  {\path{doi:10.1109/ICNC47757.2020.9049668}}.

\bibitem{Bampis2020}
Evripidis Bampis, Bruno Escoffier, and Alexander~V. Kononov.
\newblock {LP}-based algorithms for multistage minimization problems.
\newblock In {\em Proceedings of the 18th International Workshop on
  Approximation and Online Algorithms ({WAOA})}, pages 1--15, 2020.
\newblock \href {https://doi.org/10.1007/978-3-030-80879-2\_1}
  {\path{doi:10.1007/978-3-030-80879-2\_1}}.

\bibitem{Bampis2018}
Evripidis Bampis, Bruno Escoffier, Michael Lampis, and Vangelis~Th. Paschos.
\newblock Multistage matchings.
\newblock In {\em Proceedings of the 16th Scandinavian Symposium and Workshops
  on Algorithm Theory (SWAT)}, pages 7:1--7:13, 2018.
\newblock \href {https://doi.org/10.4230/LIPIcs.SWAT.2018.7}
  {\path{doi:10.4230/LIPIcs.SWAT.2018.7}}.

\bibitem{Bampis2019}
Evripidis Bampis, Bruno Escoffier, and Alexandre Teiller.
\newblock Multistage knapsack.
\newblock In {\em Proceedings of the 44th International Symposium on
  Mathematical Foundations of Computer Science (MFCS)}, pages 22:1--22:14,
  2019.
\newblock \href {https://doi.org/10.4230/LIPIcs.MFCS.2019.22}
  {\path{doi:10.4230/LIPIcs.MFCS.2019.22}}.

\bibitem{BodlaenderDFH09}
Hans~L. Bodlaender, Rodney~G. Downey, Michael~R. Fellows, and Danny Hermelin.
\newblock On problems without polynomial kernels.
\newblock {\em Journal of Computer and System Sciences}, 75(8):423--434, 2009.
\newblock \href {https://doi.org/10.1016/j.jcss.2009.04.001}
  {\path{doi:10.1016/j.jcss.2009.04.001}}.

\bibitem{BodlaenderJK14}
Hans~L. Bodlaender, Bart M.~P. Jansen, and Stefan Kratsch.
\newblock Kernelization lower bounds by cross-composition.
\newblock {\em SIAM Journal on Discrete Mathematics}, 28(1):277--305, 2014.
\newblock \href {https://doi.org/10.1137/120880240}
  {\path{doi:10.1137/120880240}}.

\bibitem{Bredereck2020}
Robert Bredereck, Till Fluschnik, and Andrzej Kaczmarczyk.
\newblock Multistage committee election, 2020.
\newblock \href {http://arxiv.org/abs/2005.02300} {\path{arXiv:2005.02300}}.

\bibitem{Cai1996}
Leizhen Cai.
\newblock Fixed-parameter tractability of graph modification problems for
  hereditary properties.
\newblock {\em Information Processing Letters}, 58(4):171--176, 1996.
\newblock \href {https://doi.org/10.1016/0020-0190(96)00050-6}
  {\path{doi:10.1016/0020-0190(96)00050-6}}.

\bibitem{Chimani2021}
Markus Chimani, Niklas Troost, and Tilo Wiedera.
\newblock Approximating multistage matching problems.
\newblock In {\em Proceedings of the 32nd International Workshop on
  Combinatorial Algorithms (IWOCA)}, pages 558--570, 2021.
\newblock \href {https://doi.org/10.1007/978-3-030-79987-8\_39}
  {\path{doi:10.1007/978-3-030-79987-8\_39}}.

\bibitem{CyganFKLMPPS15}
Marek Cygan, Fedor~V. Fomin, Lukasz Kowalik, Daniel Lokshtanov, D{\'{a}}niel
  Marx, Marcin Pilipczuk, Michal Pilipczuk, and Saket Saurabh.
\newblock {\em Parameterized Algorithms}.
\newblock Springer, 2015.
\newblock \href {https://doi.org/10.1007/978-3-319-21275-3}
  {\path{doi:10.1007/978-3-319-21275-3}}.

\bibitem{Diestel2017}
Reinhard Diestel.
\newblock {\em Graph Theory}.
\newblock Springer, 5th edition, 2017.
\newblock \href {https://doi.org/10.1007/978-3-662-53622-3}
  {\path{doi:10.1007/978-3-662-53622-3}}.

\bibitem{Drucker15}
Andrew Drucker.
\newblock New limits to classical and quantum instance compression.
\newblock {\em SIAM Journal on Computing}, 44(5):1443--1479, 2015.
\newblock \href {https://doi.org/10.1137/130927115}
  {\path{doi:10.1137/130927115}}.

\bibitem{Fellows2009}
Michael Fellows, Daniel Lokshtanov, Neeldhara Misra, Matthias Mnich, Frances
  Rosamond, and Saket Saurabh.
\newblock The complexity ecology of parameters: An illustration using bounded
  max leaf number.
\newblock {\em Theory of Computing Systems}, 45(4):822–848, 2009.
\newblock \href {https://doi.org/10.1007/s00224-009-9167-9}
  {\path{doi:10.1007/s00224-009-9167-9}}.

\bibitem{FellowsHRV09}
Michael~R. Fellows, Danny Hermelin, Frances~A. Rosamond, and St{\'{e}}phane
  Vialette.
\newblock On the parameterized complexity of multiple-interval graph problems.
\newblock {\em Theoretical Computer Science}, 410(1):53--61, 2009.
\newblock \href {https://doi.org/10.1016/j.tcs.2008.09.065}
  {\path{doi:10.1016/j.tcs.2008.09.065}}.

\bibitem{Fellows2013}
Michael~R. Fellows, Bart~M.P. Jansen, and Frances Rosamond.
\newblock Towards fully multivariate algorithmics: Parameter ecology and the
  deconstruction of computational complexity.
\newblock {\em European Journal of Combinatorics}, 34(3):541--566, 2013.
\newblock \href {https://doi.org/10.1016/j.ejc.2012.04.008}
  {\path{doi:10.1016/j.ejc.2012.04.008}}.

\bibitem{Fluschnik2021}
Till Fluschnik.
\newblock A multistage view on 2-satisfiability.
\newblock In {\em Proceedings of the 12th International Conference on
  Algorithms and Complexity (CIAC)}, pages 231--244, 2021.
\newblock \href {https://doi.org/10.1007/978-3-030-75242-2_16}
  {\path{doi:10.1007/978-3-030-75242-2_16}}.

\bibitem{Fluschnik2020}
Till Fluschnik, Hendrik Molter, Rolf Niedermeier, Malte Renken, and Philipp
  Zschoche.
\newblock As time goes by: Reflections on treewidth for temporal graphs.
\newblock In Fedor~V. Fomin, Stefan Kratsch, and Erik~Jan van Leeuwen, editors,
  {\em Treewidth, Kernels, and Algorithms: Essays Dedicated to Hans L.
  Bodlaender on the Occasion of His 60th Birthday}, pages 49--77. Springer
  International Publishing, Cham, 2020.
\newblock \href {https://doi.org/10.1007/978-3-030-42071-0_6}
  {\path{doi:10.1007/978-3-030-42071-0_6}}.

\bibitem{Fluschnik2020b}
Till Fluschnik, Hendrik Molter, Rolf Niedermeier, Malte Renken, and Philipp
  Zschoche.
\newblock Temporal graph classes: A view through temporal separators.
\newblock {\em Theoretical Computer Science}, 806:197--218, 2020.
\newblock \href {https://doi.org/10.1016/j.tcs.2019.03.031}
  {\path{doi:10.1016/j.tcs.2019.03.031}}.

\bibitem{Fluschnik2019}
Till Fluschnik, Rolf Niedermeier, Valentin Rohm, and Philipp Zschoche.
\newblock Multistage vertex cover.
\newblock In {\em Proceedings of the 14th International Symposium on
  Parameterized and Exact Computation (IPEC)}, pages 14:1--14:14, 2019.
\newblock \href {https://doi.org/10.4230/LIPIcs.IPEC.2019.14}
  {\path{doi:10.4230/LIPIcs.IPEC.2019.14}}.

\bibitem{Fluschnik2020a}
Till Fluschnik, Rolf Niedermeier, Carsten Schubert, and Philipp Zschoche.
\newblock Multistage {$s$}-{$t$} path: Confronting similarity with
  dissimilarity in temporal graphs.
\newblock In {\em Proceedings of the 31st International Symposium on Algorithms
  and Computation (ISAAC)}, pages 43:1--43:16, 2020.
\newblock \href {https://doi.org/10.4230/LIPIcs.ISAAC.2020.43}
  {\path{doi:10.4230/LIPIcs.ISAAC.2020.43}}.

\bibitem{Gupta2014}
Anupam Gupta, Kunal Talwar, and Udi Wieder.
\newblock Changing bases: Multistage optimization for matroids and matchings.
\newblock In {\em Proceedings of the 41st International Colloquium on Automata,
  Languages, and Programming (ICALP)}, pages 563--575, 2014.
\newblock \href {https://doi.org/10.1007/978-3-662-43948-7\_47}
  {\path{doi:10.1007/978-3-662-43948-7\_47}}.

\bibitem{Heeger2021}
Klaus Heeger, Anne-Sophie Himmel, Frank Kammer, Rolf Niedermeier, Malte Renken,
  and Andrej Sajenko.
\newblock Multistage graph problems on a global budget.
\newblock {\em Theoretical Computer Science}, 868:46--64, 2021.
\newblock \href {https://doi.org/10.1016/j.tcs.2021.04.002}
  {\path{doi:10.1016/j.tcs.2021.04.002}}.

\bibitem{Horn1974}
W.~A. Horn.
\newblock Some simple scheduling algorithms.
\newblock {\em Naval Research Logistics Quarterly}, 21(1):177--185, 1974.
\newblock \href {https://doi.org/10.1002/nav.3800210113}
  {\path{doi:10.1002/nav.3800210113}}.

\bibitem{Jansen2013}
Bart M.~P. Jansen.
\newblock {\em The Power of Data Reduction: Kernels for Fundamental Graph
  Problems}.
\newblock PhD thesis, Utrecht University, 2013.
\newblock URL: \url{http://dspace.library.uu.nl/handle/1874/276438}.

\bibitem{Kellerhals2021}
Leon Kellerhals, Malte Renken, and Philipp Zschoche.
\newblock Parameterized algorithms for diverse multistage problems.
\newblock In {\em Proceedings of the 29th Annual European Symposium on
  Algorithms (ESA)}, pages 55:1--55:17, 2021.
\newblock \href {https://doi.org/10.4230/LIPIcs.ESA.2021.55}
  {\path{doi:10.4230/LIPIcs.ESA.2021.55}}.

\bibitem{Kloks1994}
Ton Kloks.
\newblock {\em Treewidth: Computations and Approximations}.
\newblock Springer, Berlin, Heidelberg, 1994.
\newblock \href {https://doi.org/10.1007/BFb0045375}
  {\path{doi:10.1007/BFb0045375}}.

\bibitem{Kratsch2020}
Stefan Kratsch and Magnus Wahlstr\"{o}m.
\newblock Representative sets and irrelevant vertices: New tools for
  kernelization.
\newblock {\em Journal of the ACM}, 67(3), June 2020.
\newblock \href {https://doi.org/10.1145/3390887} {\path{doi:10.1145/3390887}}.

\bibitem{Latapy2019}
Matthieu Latapy, Cl{\'e}mence Magnien, and Tiphaine Viard.
\newblock Weighted, bipartite, or directed stream graphs for the modeling of
  temporal networks.
\newblock In Petter Holme and Jari Saram{\"a}ki, editors, {\em Temporal Network
  Theory}, pages 49--64. Springer International Publishing, Cham, 2019.
\newblock \href {https://doi.org/10.1007/978-3-030-23495-9_3}
  {\path{doi:10.1007/978-3-030-23495-9_3}}.

\bibitem{Lokshtanov2014}
Daniel Lokshtanov, N.~S. Narayanaswamy, Venkatesh Raman, M.~S. Ramanujan, and
  Saket Saurabh.
\newblock Faster parameterized algorithms using linear programming.
\newblock {\em ACM Transactions on Algorithms}, 11(2):1--31, 2014.
\newblock \href {https://doi.org/10.1145/2566616} {\path{doi:10.1145/2566616}}.

\bibitem{Molter2020}
Hendrik Molter.
\newblock {\em Classic Graph Problems Made Temporal: A Parameterized Complexity
  Analysis}.
\newblock PhD thesis, Technische Universität Berlin, 2020.
\newblock \href {https://doi.org/10.14279/depositonce-10551}
  {\path{doi:10.14279/depositonce-10551}}.

\bibitem{Okrasa2020}
Karolina Okrasa and Paweł Rzążewski.
\newblock Subexponential algorithms for variants of the homomorphism problem in
  string graphs.
\newblock {\em Journal of Computer and System Sciences}, 109:126--144, 2020.
\newblock \href {https://doi.org/10.1016/j.jcss.2019.12.004}
  {\path{doi:10.1016/j.jcss.2019.12.004}}.

\bibitem{Pietrzak03}
Krzysztof Pietrzak.
\newblock On the parameterized complexity of the fixed alphabet shortest common
  supersequence and longest common subsequence problems.
\newblock {\em Journal of Computer and System Sciences}, 67(4):757--771, 2003.
\newblock \href {https://doi.org/10.1016/S0022-0000(03)00078-3}
  {\path{doi:10.1016/S0022-0000(03)00078-3}}.

\bibitem{Razgon2009}
Igor Razgon and Barry O'Sullivan.
\newblock Almost {2-SAT} is fixed-parameter tractable.
\newblock {\em Journal of Computer and System Sciences}, 75(8):435--450, 2009.
\newblock \href {https://doi.org/10.1016/j.jcss.2009.04.002}
  {\path{doi:10.1016/j.jcss.2009.04.002}}.

\bibitem{Sasak2010}
Róbert Sasák.
\newblock Comparing 17 graph parameters.
\newblock Master's thesis, University of Bergen, 2010.
\newblock URL: \url{https://bora.uib.no/bora-xmlui/handle/1956/4329}.

\bibitem{Schaefer1978}
Thomas~J. Schaefer.
\newblock The complexity of satisfiability problems.
\newblock In {\em Proceedings of the 10th Annual ACM Symposium on Theory of
  Computing (STOC)}, page 216–226, 1978.
\newblock \href {https://doi.org/10.1145/800133.804350}
  {\path{doi:10.1145/800133.804350}}.

\bibitem{Schroder2019}
Johannes Schröder.
\newblock Comparing graph parameters.
\newblock Bachelor's thesis, Technische Universität Berlin, 2019.
\newblock URL:
  \url{http://fpt.akt.tu-berlin.de/publications/theses/BA-Schröder.pdf}.

\bibitem{Sorge2019}
Manuel Sorge and Mathias Weller.
\newblock The graph parameter hierarchy.
\newblock Unpublished manuscript, 2019.
\newblock URL: \url{https://manyu.pro/assets/parameter-hierarchy.pdf}.

\bibitem{Vatshelle2012}
Martin Vatshelle.
\newblock {\em New Width Parameters of Graphs}.
\newblock PhD thesis, University of Bergen, 2012.
\newblock URL: \url{https://www.ii.uib.no/~martinv/Papers/MartinThesis.pdf}.

\bibitem{Wu2014}
Tsunghan Wu, Sheau-Harn Yu, Wanjiun Liao, and Cheng-Shang Chang.
\newblock Temporal bipartite projection and link prediction for online social
  networks.
\newblock In {\em Proceedings of the 2014 IEEE International Conference on Big
  Data (Big Data)}, pages 52--59, 2014.
\newblock \href {https://doi.org/10.1109/BigData.2014.7004444}
  {\path{doi:10.1109/BigData.2014.7004444}}.

\bibitem{Yannakakis1978}
Mihalis Yannakakis.
\newblock Node- and edge-deletion {NP}-complete problems.
\newblock In {\em Proceedings of the 10th Annual ACM Symposium on Theory of
  Computing (STOC)}, page 253–264, 1978.
\newblock \href {https://doi.org/10.1145/800133.804355}
  {\path{doi:10.1145/800133.804355}}.

\end{thebibliography}
}

\section{Parameter zoo}
\label{sec:paramzoo}
If $\calC$ is a class of static graphs and $G=(V,E)$ a static graph, then $X\subseteq V$ is a \emph{$\calC$-modulator} in $G$ if $G-X \in \calC$.

Let $G=(V,E)$ be a static graph.

\subparagraph{Bandwidth ($\bw$): }
Let $S_n$ denote the set of all permutations of $\{1,\ldots,n\}$ and assume that $V=\{v_1,\ldots,v_n\}$.
The \emph{bandwidth} of $G$ is $\bw(G) \coloneqq \min_{\pi \in S_{n}} \max_{\{v_i,v_j\} \in E} \abs{\pi(i) - \pi(j)}$.

\subparagraph{Cliquewidth ($\clw$): }
Let $k \in \N$.
A \emph{$k$-expression}, which evaluates to a graph with vertex labels in~$\{1,\ldots,k\}$, is defined inductively by: \begin{inparaenum}[(i)]
	\item if $i \in \{1,\ldots,k\}$, then $\ell(i)$ is a $k$-expression which evaluates to the graph with a single vertex that is labeled $i$,
	\item if $x_1$ and $x_2$ are $k$-expressions, then $x_1 \oplus x_2$ is a $k$-expression which evaluates to the disjoint union of the evaluations of $x_1$ and~$x_2$,
	\item if $x$ is a $k$-expression and $i,j \in \{1,\ldots,k\}$, $i\neq j$, then $\eta_{i,j}(x)$ is a $k$-expression which evaluates to the graph obtained by adding an edge between every pair of vertices $\{u,v\}$ such that $u$ is labeled $i$ and $v$ is labeled $j$, and
	\item if $x$ is a $k$-expression and $i,j \in \{1,\ldots,k\}$, then $\rho_{i\rightarrow j}$ is a $k$-expression which evaluates to the graph obtained from the evaluation of $x$ by changing all the labels of vertices labeled $i$ to $j$.
\end{inparaenum}
The \emph{cliquewidth} $\clw(G)$ of $G$ is the minimum integer $k$ such that there is a $k$-expression that evaluates to $G$.

\subparagraph{Degeneracy ($\dgn$): }
Let $\delta(H)$ denote the minimum degree of a graph $H$.
The \emph{degeneracy} of $G$ is~$\dgn(G) = \max_{V' \subseteq V} \delta(G[V'])$.

\subparagraph{Distance to bipartite ($\dbi$): }
The parameter $\dbi(G)$ is the size of a minimum $\calC$-modulator if $\calC$ is the set of all bipartite graphs.

\subparagraph{Distance to clique ($\dcl$): }
A graph $H=(V',E')$ is \emph{complete} if $E' = \binom{V'}{2}$.
The parameter $\dcl(G)$ is the size of a minimum $\calC$-modulator if $\calC$ is the set of all complete graphs.

\subparagraph{Distance to co-cluster ($\dcc$): }
A graph $H=(V',E')$ is a \emph{co-cluster} if $V' = V_1\cup\ldots V_k$ for some~$k\in \N$ and $E'=\{\{u,v\} \mid u\in V_i, v\in V_j, i\neq j\}$
The parameter $\dcc(G)$ is the size of a minimum $\calC$-modulator if $\calC$ is the set of all co-clusters.

\subparagraph{Distance to co-graph ($\dco$): }
A graph is a \emph{co-graph} if it does not contain an induced $P_4$.
The parameter $\dco(G)$ is the size of a minimum $\calC$-modulator if $\calC$ is the set of all co-graphs.

\subparagraph{Domination number ($\dom$):}
A vertex set $X\subseteq V$ dominates $G$ if every vertex in $V\setminus X$ has a neighbor in $X$.
The parameter
$\dom(G)$ is the size of a minimum dominating set in $G$.

\subparagraph{Feedback edge number ($\fes$): }
A set of edges $X \subseteq E$ is a feedback edge set if $G-X$ is acyclic.
The parameter $\fes(G)$ is the size of a minimum feedback edge set.

\subparagraph{Feedback vertex number ($\fvs$): }
The parameter $\fvs(G)$ is the size of a minimum $\calC$-modulator if $\calC$ is the set of all acyclic graphs.

\subparagraph{Independence number ($\is$):}
A vertex set $X\subseteq V$ is \emph{independent} if $G[X]$ is edgeless.
The parameter
$\is(G)$ is the size of a maximum independent set in $G$.

\subparagraph{Maximum degree ($\Delta$): }
The parameter $\Delta(G)$ is the maximum degree of $G$.

\subparagraph{Maximum diameter of a connected component ($\cdi$): }
The vertex set $X\subseteq V$ is a \emph{connected component} if~$G[X]$ is connected and there is no edge $\{u,v\} \in E$ with $u\in X$ and $v\in V\setminus X$.
The \emph{distance} between two vertices is the length of a shortest path between them.
The \emph{diameter} of a connected graph is the maximum distance between any two vertices.
The parameter $\cdi(G)$ is the maximum diameter of a connected component in $G$.

\subparagraph{Number of connected components ($\ncc$):}
The parameter $\ncc(G)$ is the number of connected components in $G$.

\subparagraph{Treewidth ($\tw$): }
A \emph{tree decomposition} of $G$ is a pair $(\calX,\calT)$ where $\calX \subseteq 2^V$ and $\calT$ is a tree with node set $\calX$ such that
\begin{inparaenum}[(i)]
	\item $\bigcup_{X \in \calX} X = V$,
	\item for $\{u,v\} \in E$ there is an $X \in \calX$ such that~$u,v \in X$, and
	\item for every $v \in V$ the node set $\{X \in \calX \mid v \in X \}$ induces a subtree of $\calT$.
\end{inparaenum}
The \emph{width} of~$(\calX,\calT)$ is $\max_{X \in \calX} \abs{X} -1$.
The \emph{treewidth} $\tw(G)$ of $G$ is the minimum width of a tree decomposition of $G$.

\subparagraph{Vertex cover number ($\vc$): }
The parameter $\vc(G)$ is the size of a minimum $\calC$-modulator if $\calC$ is the set of all edgeless graphs.

\end{document}